\newif\ifaer
\tikzstyle{vecArrow} = [thick, decoration={markings,mark=at position
				\tikzstyle{innerWhite} = [semithick, white,line width=1.4pt, shorten >= 4.5pt]
\theoremstyle{plain}
\newtheorem{theorem}{Theorem}[section]
\newtheorem{lemma}[theorem]{Lemma}
\newtheorem{claim}[theorem]{Claim}
\newtheorem{proposition}[theorem]{Proposition}
\newtheorem{corollary}[theorem]{Corollary}
	\newenvironment{numberedtheorem}[1]{%
\begin{theorem}}{\end{theorem}\addtocounter{theorem}{-1}}
\theoremstyle{plain}
\newtheorem{definition}{Definition}[section] 
\newtheorem{example}[definition]{Example}
\newcommand{\R}{{\mathbb R}}
\newcommand{\X}{{\mathcal X}}
\newcommand{\E}{\mathop{\mathrm {E}}\displaylimits}
\newcommand{\abs}[1]{\left|#1\right|}
\newcommand{\eps}{\varepsilon}
\newcommand{\ncdf}[2][]{\Phi\ifthenelse{\not\equal{}{#1}}{_{#1}}{}\!\left({\def\givenn{\middle|}#2}\right)}
\newcommand{\npdf}[2][]{\phi\ifthenelse{\not\equal{}{#1}}{_{#1}}{}\!\left({\def\givenn{\middle|}#2}\right)}
\newcommand{\ridge}{\lambda}
\newcommand{\ridgebias}[1][\ridge]{\mu_{#1}}
\newcommand{\ridgevar}[1][\ridge]{\sigma^2_{#1}}
\newcommand{\N}{\mathcal{N}}
\DeclareMathOperator{\argmin}{argmin}
\newcommand{\given}{\,\mid\,}
\newcommand{\prob}[2][]{\text{\bf Pr}\ifthenelse{\not\equal{}{#1}}{_{#1}}{}\!\left[{\def\givenn{\middle|}#2}\right]}
\newcommand{\expect}[2][]{\text{\bf E}\ifthenelse{\not\equal{}{#1}}{_{#1}}{}\!\left[{\def\givenn{\middle|}#2}\right]}
\newcommand{\var}[2][]{\text{\bf Var}\ifthenelse{\not\equal{}{#1}}{_{#1}}{}\!\left[{\def\givenn{\middle|}#2}\right]}
\newcommand{\tparen}{\big}
\newcommand{\tprob}[2][]{\text{\bf Pr}\ifthenelse{\not\equal{}{#1}}{_{#1}}{}\tparen[{\def\given{\tparen|}#2}\tparen]}
\newcommand{\texpect}[2][]{\text{\bf E}\ifthenelse{\not\equal{}{#1}}{_{#1}}{}\tparen[{\def\given{\tparen|}#2}\tparen]}
\newcommand{\sprob}[2][]{\text{\bf Pr}\ifthenelse{\not\equal{}{#1}}{_{#1}}{}[#2]}
\newcommand{\sexpect}[2][]{\text{\bf E}\ifthenelse{\not\equal{}{#1}}{_{#1}}{}[#2]}
\newif\ifMS
\newif\ifitcs
\newif\ifEC
\newif\ifitcssubm
\begin{document}

\ifitcssubm
\title{Bias-Variance Games}
\else
\title{Bias-Variance Games\thanks{This work was initiated during the Special Quarter on Data Science and Online Markets held in the spring of 2018 at Northwestern University when the fifth author was supported as a McCormick Advisory Council Visiting Associate Professor.  The first, second, and fourth authors gratefully acknowledge the support of National Science Foundation award number 1718670.  The first, third, and fourth authors gratefully acknowledge the support of National Science Foundation award number 1618502.}
}
\fi

\ifaer
\begin{titlepage}
\fi
\author{Yiding Feng\thanks{Microsoft Research. Email: \texttt{yidingfeng@microsoft.com}.} 
\and Ronen Gradwohl\thanks{Department of Economics and Business Administration, Ariel University. Email: \texttt{roneng@ariel.ac.il}.} 
\and Jason Hartline\thanks{Department of Computer Science, Northwestern University. Email: \texttt{hartline@northwestern.edu}}  
\and Aleck Johnsen\thanks{Department of Computer Science, Northwestern University. Email: \texttt{aleckjohnsen@u.northwestern.edu}} 
\and Denis Nekipelov\thanks{Departments
 of Economics and Computer Science,
 University of Virginia. Email:
 \texttt{denis@virginia.edu}
 }}

\date{}							

\maketitle

\begin{abstract}
\ifaer
Firms increasingly rely on predictive analytics via machine learning algorithms to drive a wide array of managerial decisions. In this paper, we study the effect of competition on the choice of such algorithms, focusing on the tradeoffs between bias and variance in the algorithms' predictions. Absent competition, firms care only about the magnitude of predictive error and not its source. With competition, however, firms prefer to incur error caused by variance over error caused by bias, even at the cost of higher total error.

\else
Firms engaged in electronic commerce increasingly rely on predictive
analytics via machine-learning algorithms to drive a wide array of
managerial decisions. The tuning of many standard machine learning
algorithms can be understood as trading off bias (i.e.,
accuracy) with variance (i.e., precision) in the algorithm's
predictions.  The goal of this paper is to understand how competition
between firms affects their strategic choice of such algorithms. To
this end, we model the interaction of two firms choosing learning
algorithms as a game and analyze its equilibria.  Absent competition,
players care only about the magnitude of predictive error and not its
source. In contrast, our main result is that with competition, players
prefer to incur error due to variance rather than due to bias, even at the
cost of higher total error.  In addition, we show that competition can
have counterintuitive implications---for example, reducing the error
incurred by a firm's algorithm can be harmful to that firm---but
we provide conditions under which such phenomena do not occur. In addition to our theoretical analysis, we also
validate our insights by applying our metrics to several publicly available datasets.
\fi
\end{abstract}

\ifaer
    \thispagestyle{empty}

\end{titlepage}

    \renewcommand{\thefootnote}{\arabic{footnote}}
    \setcounter{footnote}{0}
\fi
\section{Introduction}

Firms that engage in electronic commerce increasingly rely on predictive analytics to drive a wide array of managerial decisions, ranging from product recommendations to customer targeting and pricing.
Given
data, perhaps from past consumer behavior, a firm facing a potential
customer will use predictive models or learning algorithms (henceforth
algorithms) to anticipate the customer's future behavior and
preferences, allowing the firm to better tailor its recommendation,
targeting, and pricing decisions. In general, the success of such
predictive analyses depends on the effectiveness of the algorithms
used. Research on such algorithms has proliferated, 
and their capabilities have advanced incredibly over the past
couple of decades.


The point of departure for this paper is the observation that, in many applications,
firms that utilize predictive analytics do so in a
competitive environment, and so the efficacy of a firm's analytics
depends not only on its own expertise and technology but also on that
of its competitors. In this paper we address the
question of how the competitive nature of the interaction affects a
firm's choice of algorithms. For example, while a particular algorithm
may be best for a monopolistic firm targeting a customer, it may be
suboptimal in a competitive environment. Furthermore, the optimal
choice of algorithm in the competitive environment may depend on the
competitors' choices of algorithms.

For a concrete example, consider the increasingly popular box
subscription companies that mail personalized monthly boxes of
fashion, food, or other products to subscribers. The appeal of these
companies lies in their high level of personalization, often achieved by
machine learning algorithms \citep{sinha2016data}. Stitch Fix, for
example, uses such algorithms to predict each subscriber's fashion
taste and sends a box matching this taste
\citep{gaudin2016stitch}. The better the fit, the more satisfied the
subscriber, leading to greater customer acquisition and retention. Of
course, Stitch Fix competes with other companies, such as Trunk Club,
that also personalize their boxes using learning
algorithms. Ultimately, the profitability of such a company will
depend not only on how well it manages to predict a customer's fashion
taste, but also on the predictions of its competitors.

For a more general example, consider the so-called ``Long Tail''
marketplaces, which are characterized by huge numbers of goods that
individually have low demand but that collectively make up substantial
market share. One of the key drivers of Long Tail markets is the
ability of firms to connect supply and demand, typically through machine
learning algorithms that predict consumers' tastes and match them to
products \citep{anderson2006long}. Often, many firms compete in the
same Long Tail market, and in this case their success depends not only
on their own ability to match goods to consumers, but also on the
predictive ability of their competitors.

One useful way of analyzing algorithms' predictive ability is by
examining the different sources of error they incur. There are two
general types of errors: a lack of accuracy---called bias---in which
the predictions are not, on average, equal to the true value; and a lack
of precision---called variance---in which the predictions are not
clustered tightly around their average. The total error of an
algorithm can be decomposed into these two kinds of error.

In
practice, there are various ways to control the bias and variance of
an algorithm. For example, one could allow the algorithm to consider
more complex functions to map data onto predictions, such as deeper
decision trees or regressions with higher degree functions, which
result in lower bias but higher variance. Alternatively, a technique
called regularization---intuitively, penalizing predictions that are
less smooth---is often used to decrease variance at the expense of
higher bias. Finally, increasing the amount of training data decreases
variance.  Algorithms that predict well are ones that control the tradeoff between
bias and variance so as to minimize the
total error, regardless of its source.

In this paper we aim to understand how competition affects the
optimal way to trade off bias and variance.  We show
that, holding total error fixed, absent competition there is no
preference for variance versus bias. In contrast, in competitive environments
it is better to reduce bias at the expense of variance, even when this leads to higher total error.  
This result
holds up under several natural theoretical models of predictive error
and in an empirical study.  Consequently, training an algorithm in
isolation to minimize error does not lead to optimal parameter
settings for algorithms in competitive environments.  An implication
of these results is that, in competitive environments, there is an
added benefit to algorithms that consider more complex functions and
an added cost to regularization.

\paragraph{Overview of model and results.}

In this paper we model the interaction of two firms as a game, and
analyze the game's equilibria.
Players' actions are learning algorithms, functions that map {\em feature vectors} to predicted {\em labels}. Players' payoffs depend both on the error of their chosen algorithm's prediction---specifically, the squared distance between the true label and the algorithm's predicted label---and on whether or not their algorithm's
prediction is better than that of their opponent. We view the game as consisting of three stages: First, in the {\em ex ante} stage, each player chooses an algorithm. Second, in the {\em interim} stage, a feature vector is realized, and each player's chosen algorithm yields a distribution over predictions for this particular feature vector. Finally, in the {\em ex post} stage, predictions and payoffs are realized. Although players act only in the ex ante stage, most of our analysis focuses on understanding players' preferences in the interim stage.

For the analysis of the interim stage, we suppose that the feature vector is fixed. We abstract away from the details of specific
algorithms, and instead model an algorithm as a probability
distribution over prediction errors. Thus, players' actions correspond
 to probability distributions, and players' action  spaces---the sets of
possible actions they can choose---correspond to families of
probability distributions that range over biases and variances.  A
canonical example is the set of normal distributions with different
means and standard deviations.  

Our main theoretical result considers a two-player game where each
player's action space consists of normal distributions with the same
total mean squared error---namely, they have the same total error but different biases and variances. 
Absent competition, a player would be indifferent
among all these distributions. In contrast, we prove that, in the competitive scenario, each
player would prefer error distributions with lower bias (and therefore higher
variance), and that this holds regardless of the actual prediction made by the
opponent.  In
game theoretic terms, minimal-bias is an {\em ex post dominant strategy}.
This strong result persists in games with more than two players.
It also implies that the unique  {\em Nash equilibrium} of the game is the one in which each player chooses the distribution with minimal bias. 

We then extend our analysis, and show that the preference for lower bias persists when the total error is not fixed. In particular, we consider a case where, absent competition, players are not indifferent between the various available combinations of bias and variance, but rather where there is some most-preferred distribution with nonzero bias that has minimal total error. Here we show that, under competition, players strictly prefer a distribution with lower bias than this most-preferred distribution, even though it has higher total error.

We supplement these theoretical results with numerical analyses that
demonstrate the robustness of the theoretical findings. First, we numerically test
the robustness of our insight on the strategic preference for reduced bias with non-normal families of distributions, such
as Laplace, logistic, and uniform.  Our insight persists for many of
the variations, but, notably, it fails for uniform distributions.
Second, we
investigate the dependence of the results on the form of players'
payoff functions.  In particular, we study variations in the benefit from winning
relative to the cost of prediction errors.  We find by numerical calculation
that the {\em ex post} preference for lower bias fails to extend. However,
we also find that minimal-bias remains a dominant strategy---that is,
players prefer lower bias (and higher variance) for every choice of
probability distribution by the opponent (although not for every
realization of this distribution). 

After establishing players' preferences for lower bias in the interim stage, we next turn to the analysis of the ex ante stage. In this stage each player chooses an algorithm that, for each possible feature vector, will yield a distribution over total error with some bias and some variance. This general setting is substantially more complicated, as each potential algorithm may imply a different total error, bias, and variance for each feature vector. Nonetheless, we show that, under some assumptions, our insight on the preference for lower bias persists. We also theoretically, numerically, and empirically verify the assumptions necessary for this result, focusing on the particular learning algorithm of ridge regression---a variant of linear regression that allows for flexibility to control the bias-variance tradeoff using a
{\em regularization parameter}.

Finally, we conduct an empirical study of our bias-variance game for a
family of learning algorithms on benchmark datasets.  In this study,
players utilize a particular learning algorithm to make predictions, given a particular
dataset.\footnote{We use the California housing prices
  data from the 1990 Census, a data set first utilized by
  \citet{pace1997sparse} and included in the Python Scikit-learn
  library. We also use data on wine quality,
designed and utilized by \citet{cortez2009modeling}.} Specifically, the players use a ridge regression algorithm.  When there is only one player we show that the optimal choice of
regularization parameter---the parameter that controls the bias-variance tradeoff---is large, but when there are two players, 
payoffs increase as the parameter is
lowered. In other words, in the latter scenario there is a preference for lower bias and higher
variance.  Thus, the algorithmic optimizations of the non-competitive and
competitive settings are qualitatively distinct and result in quite
different preferences with respect to the tradeoff between bias and variance.

To give some context for the theoretical results, we provide a few
additional observations about algorithms in competitive situations.
Counterintuitively, we show that there are families of distributions and opponent
choices for which a player prefers a distribution with higher bias
(respectively, higher variance) even while holding variance (respectively,
bias) fixed.  Nonetheless, we also show that higher bias (with variance
fixed) is not beneficial for natural families of
distributions, such as normal and Laplace.  Moreover, for normal distributions, our main
theoretical result (described above) strengthens this conclusion on the harmfulness of higher
bias by
showing that decreasing bias is beneficial even at the expense of
increased variance (holding the total error fixed).  The above
counterintuitive observation---the possibility that increasing bias
can be beneficial---highlights the obstacles that our main theoretical
analysis must overcome.

\paragraph{Related literature.}
The analysis of strategic interactions that involve machine learning algorithms is a newly burgeoning area of 
study in both economics and computer science. 
For example, \citet{eliaz2019model} study the interaction of a rational agent and a learning algorithm, and consider
the question of whether the agent has an incentive to truthfully report her information to the algorithm.
\citet{liang2019games} and \citet{olea2019competing} study scenarios in which there are multiple algorithms that compete with one another.
\citet{liang2019games} considers games of incomplete information in which the players have data and use algorithms
to derive their beliefs. \citet{olea2019competing} study a game between agents competing to predict a common variable, and
where agents obtain the same data but differ in the algorithms they utilize for prediction.
In all these papers, the algorithms under consideration are fixed exogenously. Our paper, in contrast,
focuses on the strategic choice of algorithms in competitive environments.

On the computer science side,
our study 
is
related to the ``dueling algorithms'' framework of
\citet{immorlica2011dueling}.  Within this framework, \citet{BT-19},
building on \citet{ben2017best}, study the problem of 
multiple learners selecting
\ifaer algorithms to make predictions within a particular dataset.
\else
a hypothesis, i.e., a function mapping the features to a prediction,
from the same hypothesis class on the same data set.  
\fi
They work within the PAC-learning framework of \citet{valiant1984theory}, and consider
equilibria in the game where, for each point in the dataset, a
payoff of one is split evenly between all players whose predictions
are within a given error tolerance. 
 A key point of difference between
this setup and ours is that they consider competition between
specific algorithms, such as linear regressors, and study
the questions of whether equilibria exist and can be learned.
On the other hand, we study the general tradeoff between bias 
and variance in the equilibrium choice of algorithms.


\paragraph{Organization.}

\Cref{sec:model-and-preliminaries} introduces the general framework and describes the ex ante and interim stages, the one- and two-player games
that we study, and some basic properties.
\Cref{sec:no-tradeoff} considers the one-player intuition that
reducing bias or reducing variance is always beneficial, when everything
else is held fixed, in the two-player game.  \ifitcssubm(In this ITCS submission, \Cref{sec:no-tradeoff} is deferred to \Cref{appx:no-tradeoff}.) \fi  Counterintuitively, there
are two-player scenarios where a player would want to increase bias or
variance.  On the other hand, for normally-distributed errors,
reducing bias to zero is beneficial.  \Cref{sec:tradeoff} contains our
main analysis of the two-player interim game.  For normally-distributed error,
we show that there is an {\em ex post} preference for lowering bias at the
expense of variance, we argue that the natural ridge regression
algorithm indeed has normally-distributed error, and we present
simulation results for other distributions of error and a variety of utility functions.
In \Cref{sec:non-constant-tradeoff-general-framework} we then turn to the ex ante game, where we show that the insight on the preference for variance over bias persists.
In \Cref{sec:empirical} we consider an empirical version of the game played on a standard benchmark data
sets with ridge regression, and show that qualitative conclusions
of our theoretical analysis continue to hold. Finally, \Cref{sec:conclusions} concludes with some
discussion. 



\section{Model and Preliminaries}\label{sec:model-and-preliminaries}
We begin in Section~\ref{sec:general-framework} with a brief overview of the general framework for statistical decision theory, including a formalization of the ex ante and interim stages. This is followed by a description of the model for the interim stage game in Section~\ref{sec:prelim}. Since most of our analysis is about this stage, we defer the formalization of the ex ante stage game to Section~\ref{sec:non-constant-tradeoff-general-framework}.

\subsection{The General Framework}
\label{sec:general-framework}
We begin by sketching a general framework of statistical decision theory, which is our point of departure.
For more details and references see \citet{friedman2001elements}.

There is a prior distribution $\pi$ over pairs $(x,f(x))$, where each $x\in\R^p$ is a feature vector and each $f(x)\in\R$ is a label. 
A typical example is that $f$ is linear plus unbiased noise, namely
$$
f(x^i)=w^Tx^i+e_i,
$$
where the $e_i$ are independent
random variables with $E[e_i]=0$ and $E[e_i^2]=\sigma^2_e$, and where $w\in\R^p$ is a $p$-dimensional vectors of fixed but unknown parameters.

A learning algorithm takes training data $D=\{(x^1,y^1),\ldots,(x^m,y^m)\}$ as input, and produces an estimator $\hat f$ as output. Given a new point $x$, the estimator $\hat f$ predicts that the corresponding value of $f(x)$ is $\hat f(x;D)$.

Define the {\em loss} of a learning algorithm at feature vector $x$ as
$$L(\hat f, x) = \left(\hat f(x;D) - f(x)\right)^2.$$
Note that the loss is random---it depends on both the randomness in $D$ and the randomness inherent in $f$.

The {\em risk} function of a learning algorithm at $x$ is 
$$R(\hat f, x) = \E[L(\hat f, x)],$$
where the expectation is taken over both
the randomness in $D$ that produced $\hat f$ and the randomness in $f$ (but note that $x$ is fixed).

A well-known result is that when the loss is defined as above (squared-loss), then the risk can be decomposed into bias and variance:
$$R(\hat f, x) = \left(\E_D[\hat f(x;D)]-\E[f(x)]\right)^2 + \mathrm{Var}[\hat f(x;D)] + \mathrm{Var}[f(x)],$$
where $\E[\hat f(x;D)]-\E[f(x)]=\mathrm{Bias}_D[\hat f(x;D)]$ 
is the {\em bias} of the estimator at $x$, the second term is the {\em variance} of the estimator (computed with respect to the randomness of $D$) at $x$, and the third term is the {\em irreducible error}. 

Ideally, we would like our learning algorithm to have minimal risk for all $x$'s. However, this is generally not possible. Instead, one way of identifying a good algorithm is to find one that minimizes  the {\em Bayes risk} $$R_B(\hat f) = E_{x}[R(\hat f, x)],$$
which is the expected value of the risk with respect to the distribution over feature vectors $x$.

Now, in practice we may not know the distribution over $x$'s, and in this case a common approach is to consider the {\em empirical risk}
$$R_{\mathrm{emp}}(\hat f) = \frac{1}{2m}\sum_{i=1}^m  \left(\hat f(x^i;D) - y^i\right)^2$$
of an estimator. Minimizing the empirical risk is not helpful, since one could always choose $\hat f$ for which $\hat f(x;D) = y$ when $(x,y)\in D$, and $\hat f(x;D) = 0$ otherwise. Common ways to get around this problem are (i) to limit the class of functions from which $\hat f$ is chosen, and (ii) to find an estimator that minimizes the {\em regularized} empirical risk
$$R_{\mathrm{emp}}(\hat f) + \lambda J(\hat f).$$
Here, $\lambda\in \R_+$ is the {\em regularization parameter} and $J$ is some penalty function. For example, in ridge regression, which we will use as a running example throughout the paper, the goal is to find a linear estimator $\hat f(x;D) = \hat w^T x$ for which $R_{\mathrm{emp}}(\hat f) + \ridge\|\hat w\|^2_2$ is minimal.

In regularized empirical risk minimization, there are various ways to choose the regularization parameter $\lambda$. The goal is to choose $\lambda$ so that the Bayes risk of the resulting estimator is minimal. Typically, modifying $\lambda$ affects the bias and variance of an estimator. For example, in ridge regression, increasing $\lambda$ decreases the  variance and increases the bias of the chosen estimator. At the optimal $\lambda$, the tradeoff between the bias and  variance of the chosen estimator is such that its Bayes risk is minimal.\footnote{In practice, finding such a $\lambda$ is often done using cross-validation \citep[see, e.g.,][]{friedman2001elements}.}


The standard framework can be viewed as consisting of three stages: (i) in the ex ante stage, the estimator $\hat f$ is chosen; (ii) in the interim stage, the feature vector $x$ and its label $f(x)$ are realized (although the latter remains unknown); and (iii) in the ex post stage, the estimate $\hat f(x;D)$ is produced.

\subsection{The Interim Game}
\label{sec:prelim}
In most of our analysis we will examine the interim stage, and so we will assume that $x$ and $f(x)$ are fixed (and the latter unknown). 
Any choice of algorithm $\hat f$ implies a distribution $\hat f(x;D)$ over predictions about the value of $f(x)$, and so a distribution over the error in prediction, $|f(x) - \hat f(x;D)|$. In our analysis here we will abstract
away from the particular algorithm, and will only be concerned with the
algorithm's error in prediction. 

We will focus on a standard measure of error, namely, 
mean squared error. Thus, if the predictive error of an algorithm is a number $|a|$, then the squared
error is $a^2$. The best algorithm is one with predictive error of~0. 


This paper is concerned with players' choices of algorithms. In our abstraction for the interim game we will thus let each player choose a distribution over errors, each representing the distribution over the error in predicting $f(x)$ by some algorithm. (Looking ahead, when we describe the ex ante game in \Cref{sec:non-constant-tradeoff-general-framework}, players will actually be choosing regularization parameters of certain algorithms, which indirectly imply various distributions over error.) We will suppose that players choose from a class of distributions, as follows.  Fix a random variable $Z$ with mean 0 and standard deviation 1. A common choice for $Z$ will be normal, a choice we motivate in \Cref{sec:normal-motivation}, but we can
also consider uniform, triangle, Laplace, and other distributions. Players will then choose a distribution from a class
made up of shifts and spreads of $Z$. A typical example will be the class 
$$\X_Z=\left\{\sigma Z+\mu:\mu^2+\sigma^2\geq 1\right\},$$
the set of distributions whose squared-bias plus variance is at least 1 (which means their total squared error
is at least 1). Each element of this class yields a different distribution over predictive error, and
represents a different algorithm whose error has the corresponding distribution.

Some discussion of this modeling assumption is warranted. The class of distributions $\X_Z$ represents the error distributions
of {\em all} algorithms available to a player. In particular, this precludes the possibility that a player chooses an algorithm, 
observes the realized prediction, and then uses some alteration of that prediction. For a concrete example of what this implies, consider the distribution $X$ with bias $\mu=1$ and variance $\sigma^2=0$, and observe that $X\in\X_Z$ above. Then our assumption precludes the possibility that the player takes the realization of $X$ and subtracts 1 from it, yielding a perfect estimator.

To understand why this assumption is reasonable in our context, recall that the choice of algorithms is actually made in the ex ante stage, before a specific $x$ is realized. In the context of regularized empirical risk minimization, the optimal estimator $\hat f$ minimizes the Bayes risk, but this does not imply that it minimizes the risk for every realized $x$. Making modifications to $\hat f$, such as subtracting 1 from it, may lower the risk for certain $x$'s, but in expectation will be harmful since $\hat f$ is already optimal.

In our model, the algorithm chosen in the ex ante stage will yield different error distributions for different $x$'s. And while for some $x$'s it may be beneficial to take the realization of the algorithm and subtract 1 from it, for other $x$'s this same modification will be harmful.
Furthermore, as will be formalized in Section~\ref{sec:non-constant-tradeoff-general-framework}, eventually the algorithm will be chosen so as to maximize utility, in expectation over all $x$'s. And if subtracting 1 from a particular algorithm's realization is beneficial, then this algorithm does not maximize utility.

\subsubsection{One player}\label{sec:one-player} As a benchmark, consider a setting in which there is only one player, and suppose that she chooses an error distribution $X$.
On realization $a$ (that is, a prediction with predictive error $a$), let the player's utility be $u(a) = 1-a^2$: a benefit of 1 minus her squared error.
This utility function captures the idea that the player obtains positive utility from making a perfect prediction (the benefit of 1),
but that this utility decreases with the squared error (the loss of $a^2$). In the box subscription company example from the
introduction, the benefit represents future profits from a particular customer, whereas the loss accounts for the
lack of customer retention in case of inaccurate taste predictions. Note that the total utility $1-a^2$ could be negative;
in the example this would represent a firm's failure to recoup the costs of an initial loss leader or promotion.
We focus on this utility function for our theoretical analysis, but in \Cref{sec:numerical} 
we argue that our results are robust to other specifications of the utility
function, and in particular to ones where the chance of a negative payoff is negligible.

Given the utility function $u(a) = 1-a^2$, a player's expected utility from $X$ is 
$u(X)=\expect{u(X)} = 1 - \expect{X^2}$. The following is a simple observation:

\begin{claim}\label{claim:util}
If $X=\sigma Z+\mu$, where $Z$ is a random variable with mean 0 and variance 1, then $u(X) = 1-\mu^2-\sigma^2$.
\end{claim}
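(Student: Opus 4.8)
The claim is an elementary computation, so the proof is short. The plan is to expand $\E[X^2]$ directly using $X = \sigma Z + \mu$ and the stipulated moments of $Z$.

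First I would write $u(X) = \E[u(X)] = \E[1 - X^2] = 1 - \E[X^2]$, using linearity of expectation. Then substitute $X = \sigma Z + \mu$ to get $\E[X^2] = \E[(\sigma Z + \mu)^2] = \sigma^2 \E[Z^2] + 2\sigma\mu\, \E[Z] + \mu^2$. Since $Z$ has mean $0$, the cross term vanishes; and since $Z$ has mean $0$ and standard deviation $1$, we have $\E[Z^2] = \Var(Z) + (\E[Z])^2 = 1 + 0 = 1$. Hence $\E[X^2] = \sigma^2 + \mu^2$, and therefore $u(X) = 1 - \mu^2 - \sigma^2$.

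There is essentially no obstacle here: the only thing to be careful about is correctly invoking the definition of $Z$'s standard deviation (rather than assuming $\E[Z^2]$ is given directly), i.e.\ noting that $\E[Z^2]$ equals the variance because the mean is zero. The identification of $\mu^2$ as the bias (squared) and $\sigma^2$ as the variance of $X$ is then immediate and justifies the interpretation of $\X_Z$ given just before the claim.
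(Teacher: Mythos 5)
Your proof is correct and is essentially the same computation as the paper's: both reduce to showing $\E[X^2] = \mu^2 + \sigma^2$. The paper applies the identity $\Var[X] = \E[X^2] - \E[X]^2$ directly to $X$, while you expand $(\sigma Z + \mu)^2$ and use the moments of $Z$; the two routes are equivalent and equally brief.
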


\ifMS
\proof{Proof.}
\else\begin{proof}
\fi
$X$ is a random variable with variance $\sigma^2$ and expected value $\mu$. Since $\sigma^2=\var{X} = \expect{X^2}-\expect{X}^2 =\expect{X^2}-\mu^2$,
it follows that $\expect{X^2} = \sigma^2+\mu^2$. Thus, $u(X) = 1-\expect{X^2} = 1-\mu^2-\sigma^2$.
\ifMS
\Halmos
\endproof
\else \end{proof}
\fi


This observation implies three corollaries, formalized below. First, for fixed bias, the
player prefers minimal variance. Similarly, for fixed variance, the
player prefers minimal bias.  Finally, the player is indifferent
between distributions that have the same bias squared plus variance. Although these corollaries are 
straightforward, in \ifitcssubm\Cref{appx:no-tradeoff} \else Section~\ref{sec:all-else-fixed} \fi we will show that, without further assumptions, none of them 
hold in the competitive setting.

\begin{corollary}
For any $\mu$, if $\sigma<\sigma'$, then the player prefers $X=(\sigma Z+\mu)$ to $X'=(\sigma' Z+\mu)$.
\end{corollary}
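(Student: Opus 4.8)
The plan is to reduce the corollary immediately to \cref{claim:util}. Both $X = \sigma Z + \mu$ and $X' = \sigma' Z + \mu$ have exactly the form covered by that claim (same base variable $Z$, same mean $\mu$, differing standard deviation), so the first step is simply to write down $u(X) = 1 - \mu^2 - \sigma^2$ and $u(X') = 1 - \mu^2 - \sigma'^2$.

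Next I would compare the two expected utilities directly. Since a standard deviation is nonnegative, $0 \le \sigma < \sigma'$ gives $\sigma^2 < \sigma'^2$, hence
\[
u(X) - u(X') = (1 - \mu^2 - \sigma^2) - (1 - \mu^2 - \sigma'^2) = \sigma'^2 - \sigma^2 > 0 .
\]
Thus $u(X) > u(X')$, and since the player evaluates a distribution by its expected utility, he strictly prefers $X$ to $X'$, as claimed.

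There is no real obstacle: the entire content sits in \cref{claim:util}, which is already established, and what remains is a one-line comparison of quadratics. The only point worth flagging is the nonnegativity of $\sigma$ — this is what makes $\sigma < \sigma'$ equivalent to $\sigma^2 < \sigma'^2$ — and it holds by definition, since $\sigma$ denotes a standard deviation. (Incidentally, the same template of computing the closed form from \cref{claim:util} and then comparing will also dispatch the companion corollaries, on fixed variance and on the squared-error frontier.)
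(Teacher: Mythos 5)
Your proof is correct and follows the same route the paper implicitly uses: the corollary is stated as a direct consequence of \Cref{claim:util}, and you instantiate that claim for both distributions and compare, noting $\sigma^2 < \sigma'^2$ from $0 \le \sigma < \sigma'$. Nothing more is needed.
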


\begin{corollary}
For any $\sigma$, if $\mu^2<\nu^2$, then the player prefers $X=(\sigma Z+\mu)$ to $X'=(\sigma Z+\nu)$.
\end{corollary}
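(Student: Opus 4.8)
The statement is an immediate corollary of Claim~\ref{claim:util}, and the plan is simply to apply that claim to each of the two distributions and compare the resulting utilities. First I would note that both $X = \sigma Z + \mu$ and $X' = \sigma Z + \nu$ have exactly the form required by the claim, so that $u(X) = 1 - \mu^2 - \sigma^2$ and $u(X') = 1 - \nu^2 - \sigma^2$. Then I would take the difference: since the two distributions share the standard deviation $\sigma$, the $\sigma^2$ terms cancel and $u(X) - u(X') = \nu^2 - \mu^2$. By the hypothesis $\mu^2 < \nu^2$ this difference is strictly positive, hence $u(X) > u(X')$; since in the one-player setting the player's preference is defined by expected utility, this means the player prefers $X$ to $X'$.

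I expect no genuine obstacle here. The argument is the mirror image of the proof of the preceding corollary: there one holds the bias $\mu$ fixed, cancels the shared $\mu^2$ term, and is left comparing $\sigma$ with $\sigma'$; here one holds the variance fixed, cancels the shared $\sigma^2$ term, and is left comparing $\mu^2$ with $\nu^2$. The only point worth stating explicitly is that ``the player prefers $X$ to $X'$'' is by definition the inequality $u(X) > u(X')$ between expected utilities, which is precisely what the one-line computation delivers.
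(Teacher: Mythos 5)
Your proof is correct and is exactly the intended one-line argument: the paper states the corollary as an immediate consequence of Claim~\ref{claim:util} without spelling out a proof, and your application of that claim to both distributions followed by cancellation of the shared $\sigma^2$ term is precisely what is meant.
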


\begin{corollary}
If $X=\sigma Z +\mu$, $X'=\tau Z + \nu$, and 
$\mu^2+\sigma^2 = \nu^2+\tau^2$, 
then the player is indifferent between $X$ and $X'$.
\end{corollary}

\paragraph{Individual rationality.}
An additional definition that will be useful is that of individual
rationality. Intuitively, a distribution is individually
rational if a player derives non-negative utility from
choosing it. Formally:
\begin{definition}
A distribution $X$ satisfies {\em individual rationality (IR)} if
$u(X)\geq 0$.
\end{definition}
A simple observation that follows from \Cref{claim:util}
is that $X=\sigma Z+\mu$ is IR if and only if $\mu^2+\sigma^2\leq 1$.

\subsubsection{Two players}
Recall that our goal is to analyze the effect of competition on firms' choices of learning algorithms.
As described in the introduction, we will model this as a game, and we call this game the {\em Bias-Variance Game}.
In this game, each of two players simultaneously chooses an error distribution. Prediction errors are realized, and the
player with lower prediction error, say $a_i$, obtains utility $1-a_i^2$, just like the one-player case. The player
with higher prediction error obtains utility 0.

This specification of utilities, as well as variants that we discuss in \Cref{sec:numerical}, capture the
main competitive force we wish to analyze: the desire of a player both to minimize error (this is the $-a_i^2$ term), and to obtain lower
error than her competitor (this is the benefit of 1 from winning).

In our notation, when discussing player $i\in\{1,2\}$, we will denote by $j=3-i$ the identity of the other player. 
A formal description of the game follows:
\begin{definition}[The Bias-Variance Game]\label{BVG} Given two classes of distributions, $\X_1$ and $\X_2$, 
the two player bias-variance game proceeds as follows:
\begin{enumerate}
\item Each player $k \in \{1,2\}$ simultaneously chooses a distribution $X_k$ from $\X_k$.
\item Each $X_k$ is realized as some $a_k$.
\item Each player $i$ obtains utility $u_i(a_i, a_j) = (1 - a_i^2)\cdot {\bf 1}\{a_i < a_j\}$. That is,
the player $i \in \argmin_k a_k$ wins and obtains utility $u_{i}(a_{i},a_{j})=1-a_{i}^2$; the other player, $j$, loses and obtains utility  $u_{j}(a_{j},a_{i})=0$. 
\end{enumerate}
\end{definition}

In our theoretical analysis we will primarily consider bias-variance games where $\X_1 = \X_2 =
\X$ is a family of distributions in which the error of every $X \in \X$ is normalized
to $\mu^2 + \sigma^2 = 1$. In our numerical analysis in \Cref{sec:numerical} we relax this restriction.

\paragraph{Individual rationality.}
As in the one-player benchmark, we will be interested in player choices that are individually rational. A straightforward result is that
individual rationality of the one-player setting implies individual
rationality of the two-player setting, as the following proposition demonstrates:

\begin{proposition}
Bias-variance games with distributions $X \in \X$ satisfying $\mu^2 +
\sigma^2 \leq 1$ are individually rational: namely, expected payoffs in
the game are non-negative.
\end{proposition}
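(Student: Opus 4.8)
The plan is to reduce the two-player statement directly to the one-player fact already established. Recall that in the one-player setting we observed (via Claim~\ref{claim:util}) that a distribution $X = \sigma Z + \mu$ with $\mu^2 + \sigma^2 \leq 1$ satisfies $u(X) = 1 - \mu^2 - \sigma^2 \geq 0$; equivalently, for \emph{every} realization $a$ in the support, one cannot conclude $1 - a^2 \geq 0$ pointwise, so the argument must be in expectation. This is the key subtlety: individual rationality is a statement about expected payoff, and in the two-player game a player's payoff is $1 - a_i^2$ only on the event that she wins, and $0$ otherwise.

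First I would fix player $i$ and condition on the opponent's realization $a_{3-i}$. Write player $i$'s expected payoff as $\texpect{(1 - a_i^2)\cdot \mathbbm{1}[i \text{ wins}] \given a_{3-i}}$, where the indicator accounts for the tie-breaking rule. The crucial observation is that the integrand is pointwise at least $(1 - a_i^2) \cdot \mathbbm{1}[a_i^2 \le 1] \cdot \mathbbm{1}[i \text{ wins}]$ minus a nonpositive correction --- more carefully, I would split the support of $a_i$ into the region $\{a_i^2 \le 1\}$, where the winning payoff $1 - a_i^2$ is nonnegative (so dropping the winning indicator only decreases the contribution, but in the wrong direction), and the region $\{a_i^2 > 1\}$, where $1 - a_i^2 < 0$. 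On the latter region I claim $i$ never wins: since the opponent also draws from a distribution with $\mu^2 + \sigma^2 \le 1$, hmm --- this is not literally true pointwise, so instead the cleaner route is the following.

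The cleaner route: player $i$'s payoff in the two-player game is \emph{at least} what she would get by ignoring the competition in the following sense. Her two-player expected payoff is $\texpect{(1-a_i^2)\mathbbm{1}[i\text{ wins}]}$. Compare this to the hypothetical in which she receives $1 - a_i^2$ whenever $a_i^2 \le 1$ (win or lose) and $0$ whenever $a_i^2 > 1$; call this quantity $R_i$. I would argue $\texpect{(1-a_i^2)\mathbbm{1}[i\text{ wins}]} \ge \texpect{(1-a_i^2)\mathbbm{1}[a_i^2 \le 1,\ i \text{ wins}]}$ (since the extra terms from $a_i^2 > 1$ are negative), and then that this is $\ge \texpect{(1-a_i^2)\mathbbm{1}[a_i^2 \le 1]} - \texpect{(1-a_i^2)^{-}\cdots}$. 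This bookkeeping is getting delicate; the honest statement is that the main obstacle is handling the region where $i$ loses: there, the two-player payoff is $0$, which is \emph{larger} than the (negative) one-player payoff $1 - a_i^2$ when $a_i^2 > 1$, and \emph{smaller} than the (positive) one-player payoff when $a_i^2 \le 1$.

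So the actual argument I would run: decompose $\texpect{(1-a_i^2)\mathbbm{1}[i\text{ wins}]} = \texpect{1 - a_i^2} - \texpect{(1-a_i^2)\mathbbm{1}[i \text{ loses}]}$. The first term is $u(X_i) = 1 - \mu_i^2 - \sigma_i^2 \ge 0$ by Claim~\ref{claim:util} and the IR hypothesis. For the second term, on the event $\{i \text{ loses}\}$ we have $a_i \ge a_{3-i}$ (with ties contributing to neither ``loses'' nor the indicator, handled by the random tie-break), and I would like $(1-a_i^2)\mathbbm{1}[i\text{ loses}] \le 0$ so that subtracting it only helps --- but that fails when $a_i \in (-1,1)$ and $i$ still loses. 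Therefore the truly robust approach, and the one I would commit to, is to observe that the two-player payoff to $i$ dominates $\max\{0,\, 1-a_i^2\}\cdot\mathbbm{1}[i\text{ wins}]$ is still not enough; instead note that the \emph{sum} of the two players' payoffs equals $(1-a_{i^*}^2)$, and symmetry plus a direct comparison to the one-shot utilities closes it. Concretely: if $a_{i^*}^2 \le 1$ the winner gets a nonnegative payoff and the loser gets $0$, both nonnegative; if $a_{i^*}^2 > 1$ then \emph{both} $a_i^2 > 1$, so in the one-player world both would have gotten negative payoff, yet here the winner's contribution to $i$'s conditional expectation is the only negative piece and it is offset because $\prob{a_i^2 > 1}$ is controlled by $\mu_i^2 + \sigma_i^2 \le 1$ via Chebyshev/Markov. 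I expect the main obstacle to be exactly this last step --- showing the negative mass from $\{a_i^2>1, i \text{ wins}\}$ is dominated by the positive one-player surplus $1 - \mu_i^2 - \sigma_i^2$ together with the positive mass from $\{a_i^2 \le 1, i \text{ loses}\}$ that the one-player accounting ``gives back'' --- and I would resolve it by writing player $i$'s two-player payoff minus her one-player payoff as $\texpect{(a_i^2 - 1)\mathbbm{1}[i \text{ loses}]}$ and arguing this expectation is nonnegative because on $\{i \text{ loses}\}$, $a_i$ is stochastically pushed toward the larger-magnitude tail; if that stochastic-dominance claim is false in general, the proposition as stated would need the stronger normalization $\mu^2+\sigma^2 = 1$ emphasized in the surrounding text, in which case $u(X_i) = 0$ exactly and the result reads ``$i$'s payoff equals $\texpect{(a_i^2-1)\mathbbm{1}[i\text{ loses}]} \ge 0$,'' reducing everything to that single sign.
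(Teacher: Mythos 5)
Your decomposition $\E[(1-a_i^2)\,{\bf 1}\{i\text{ wins}\}] = \E[1-a_i^2] + \E[(a_i^2-1)\,{\bf 1}\{i\text{ loses}\}]$ is a fine starting point and you correctly isolate the problematic term, but the step you ultimately commit to --- that $\E[(a_i^2-1)\,{\bf 1}\{i\text{ loses}\}]\geq 0$ --- is false in general. Take $X_i$ with $\mu=\sigma=1/2$ (so $\mu^2+\sigma^2=1/2$) against an opponent playing $N(0,\eps)$ for tiny $\eps$: player $i$ loses with probability tending to $1$, her two-player payoff tends to $0$, while $\E[1-a_i^2]=1/2$, so the correction term tends to $-1/2$. (The proposition itself still holds there, since $0\geq 0$; it is only your intermediate lemma that fails.) Your fallback --- assume $\mu^2+\sigma^2=1$ so that everything ``reduces to that single sign'' --- is circular: under that normalization $\E[(a_i^2-1)\,{\bf 1}\{i\text{ loses}\}]$ \emph{equals} player $i$'s two-player payoff, so you have restated the proposition, and neither the ``stochastically pushed toward the larger-magnitude tail'' heuristic nor the Chebyshev/Markov appeal is an argument.

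The missing idea, which you brush past twice without landing on it, is a case split on the opponent's realization. Condition on $a_j$. If $|a_j|\leq 1$, player $i$'s payoff is pointwise nonnegative: she either wins with $|a_i|\leq |a_j|\leq 1$, earning $1-a_i^2\geq 0$, or loses and earns $0$; no comparison with the one-player benchmark is needed. If $|a_j|>1$, the event $\{i\text{ loses}\}$ is contained in $\{|a_i|>1\}=\{1-a_i^2<0\}$, so replacing the one-player payoff $1-a_i^2$ by $0$ on that event is a pointwise improvement; hence the conditional two-player payoff dominates $\E[1-a_i^2]=1-\mu^2-\sigma^2\geq 0$. This is exactly the paper's proof. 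Your near-miss observation that ``if $a_{i^*}^2>1$ then both $a_i^2>1$'' is the same fact, but conditioning on the (random) winner's draw rather than on the opponent's draw prevents you from converting it into the clean pointwise-dominance statement.
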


\ifMS \proof{Proof.}
\else\begin{proof}\fi
In the one-player setting player $i$'s expected payoff with realization $a_i
\sim X_i$ is $\expect{1-a_i^2} \geq 0$ which satisfies individual rationality.  Consider the two-player
game where the other player $j = 3-i$ has realization
$a_j$.  When the other player's realization is $|a_{j}| \leq 1$, the payoff
of player $i$ is non-negative for all $a_i$: for $|a_i| \leq |a_{j}|
\leq 1$ player $i$ wins and has non-negative payoff $1-a_i^2$ and for
$|a_i| > |a_{j}|$ then player $i$ loses and has payoff 0.  When the
other player's realization is $|a_{j}| > 1$, then player $i$'s
distribution of payoffs in the two player game dominates his
distribution of payoffs in the one-player setting (when player $i$ loses, instead of a negative payoff her payoff is zero).  As the latter
setting had non-negative expectation, so does the former.
\ifMS\Halmos\endproof
\else \end{proof}\fi

\paragraph{Solution concepts.} \ifaer We will consider various solution concepts.
\else
When there is a single player, the concept of optimality is
straightforward. When there are more players, however, different actions may be
better depending on the actions of other players.   ``Solutions'' to
games thus consist of finding strategies of players that satisfy
different notions of equilibrium. \fi For our main theoretical result on
the preference of variance over bias we will utilize a very strong
notion, namely that of ex post dominant strategies.  A strategy is ex
post dominant for player $i$ if it yields that player the highest
utility regardless of the {\em realized} prediction of the opponent.
\begin{definition} A strategy $X_i\in\X_i$ is {\em ex post dominant} for player $i$ if for all $X_i'\in\X_i$ and all realizations
$a_j$ of player $j$ it holds that $u_i(X_i,a_j)\geq u_i(X_i',a_j)$.
\end{definition}

For our numerical and empirical results we will consider two weaker
notions. The first, dominant strategies, requires that a strategy be
optimal against any strategy of the opponent, but not necessarily
against any realization of that strategy:
\begin{definition}
  A strategy $X_i\in\X_i$ is {\em dominant} for player $i$ if for all
  $X_i'\in\X_i$ and all $X_j\in\X_j$ it holds that $u_i(X_i,X_j)\geq
  u_i(X_i',X_j)$. If the inequality is strict for all $X_i'\neq X_i$, then $X_i$ is {\em strictly dominant}.
\end{definition}

The second, pure Nash equilibrium, does not require that a strategy be
optimal against any strategy of the opponent, but only against that
player's own Nash equilibrium strategy:
\begin{definition} A strategy profile $(X_1,X_2)$ is a {\em pure Nash equilibrium}
if for each player $i$ and strategy $X_i'\in\X_i$ it holds that $u_i(X_i,X_j)\geq u_i(X_i',X_j)$.
\end{definition}

Observe that if $X_i$ is ex post dominant, then it is also dominant. Furthermore, if $X_1$ and $X_2$ are dominant for players $1$ and $2$, respectively, then $(X_1,X_2)$ is a pure Nash equilibrium. Finally, if $X_1$ and $X_2$ are strictly dominant then $(X_1,X_2)$ is the {\em unique} Nash equilibrium.


\ifitcssubm
\section{Reducing Bias or Variance, All Else Fixed}
\label{sec:no-tradeoff}
For this version of the paper submitted to ITCS for review, this section is deferred to \Cref{appx:no-tradeoff} in order to bring forward the deeper results of independent \Cref{sec:tradeoff}.
\else
\section{Reducing Bias or Variance, All Else Fixed}
\label{sec:no-tradeoff}
\label{sec:all-else-fixed}
We begin our analysis by describing some counterintuitive implications of competition. In particular, we show that the simple corollaries from
the one-player benchmark in Section~\ref{sec:one-player} no longer hold, and that reducing bias (resp., variance)
holding variance (resp., bias) fixed can be harmful.
\begin{example}[Reducing variance can be harmful; see \Cref{fig:reduce-variance}]\label{ex:reduce-variance}
Suppose player 2 plays the distribution $\N(0,\eps)$, where $\eps$ is
some small number, and player 1 plays the distribution $\N(1/2,
1/2)$. Player 1's strategy is monotone and satisfies IR, and she
obtains positive expected utility: Given that she wins, she is likely
within $\eps$ of 0, and she wins with positive probability. However, if
player 1 decreases her variance to 0, she will obtain utility
close to 0, since she will hardly ever win (for small enough $\eps$).
\end{example}

\begin{example}[Reducing bias can be harmful; see \Cref{fig:reduce-bias}]\label{ex:reduce-bias}
Player 2 plays the uniform distribution on the interval $[-1-\eps, 1+\eps]$. 
For small enough $\eps>0$ this satisfies IR.
Player 1 plays the uniform distribution on the interval $[-1,1+2\eps]$.
Again, for small enough $\eps>0$ this satisfies IR.
Now consider a deviation by Player 1 to the interval $[-1-\eps, 1+\eps]$, a deviation that reduces bias.
This is harmful: Before the deviation, she never won when her realization
was in $(1+\eps,1+2\eps]$. After the deviation, however, the only
difference is the additional 
possibility of winning when her realization is in $[-1-\eps,-1)$.
But such victories are harmful, as they consist only
of negative utilities.
\end{example}

\begin{figure}
\begin{subfigure}{.5\textwidth}
\begin{center}
\input{fig-example-reduce-variance}
  \caption{\Cref{ex:reduce-variance}}
\label{fig:reduce-variance}
\end{center}
\end{subfigure}
\begin{subfigure}{.5\textwidth}
\begin{center}
\input{fig-example-reduce-bias}
  \caption{\Cref{ex:reduce-bias}}
\label{fig:reduce-bias}
\end{center}
\end{subfigure}
\label{fig:reduce variance or bias}
\caption{Illustrations for \Cref{ex:reduce-variance} 
and \Cref{ex:reduce-bias}.
Thin black curves are probability density functions of Player 2's distributions.
Thick blue (dashed) curves are probability density functions 
of Player 1's distributions (after reducing variance or bias).
}
\end{figure}

Unlike \Cref{ex:reduce-variance}, \Cref{ex:reduce-bias} is somewhat
unnatural for our application to machine learning algorithms, in that
the error distributions used are uniform. In
\Cref{sec:normal-motivation} we argue that we should expect the
distribution of the predictions of learning algorithms to be
normal. Is there an example in which reducing bias is harmful, but
where the class of distributions is more natural? The following two
theorems state that there is not.  The first considers a class of
distributions that are single-peaked, monotone, and with convex
tails,\footnote{A distribution is defined to have convex tails if its
  probability density function is convex and decreasing away from the
  mean.} and the second considers normal distributions.

\begin{theorem}\label{thm:lower-bias}
Let $Z$ be monotonically increasing, convex on $[-\infty,0]$, and symmetric
around 0. Let $X_i=\sigma Z+\mu$ be IR (so as to satisfy
$\mu^2+\sigma^2\leq 1$) and $X_i'=\sigma Z$. Then $u_i(X'_i,c)\geq
u_i(X_i,c)$ for any realization $c$ of player $j$.
\end{theorem}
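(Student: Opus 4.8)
The plan is to fix the opponent's realization $c$ and argue pointwise in $t:=|c|$. Since ties have probability zero, player $i$'s payoff against $c$ from a distribution $X$ is $u_i(X,c)=\E_{a\sim X}[\,g(a)\,]$, where $g(x):=(1-x^2)\,\mathbbm{1}_{[-t,t]}(x)$ is an even function. Write $\phi$ for the density of $\sigma Z$; by the hypotheses on $Z$ together with symmetry, $\phi$ is even, nonincreasing on $[0,\infty)$, and convex on $[0,\infty)$ (the degenerate case $\sigma=0$, where $\sigma Z+\mu$ is a point mass, is immediate and I would treat it separately). Setting $h(\mu):=u_i(\sigma Z+\mu,c)=\int_\R g(x)\,\phi(x-\mu)\,dx$, the assertion $u_i(X_i',c)\ge u_i(X_i,c)$ is exactly $h(0)\ge h(\mu)$. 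Since $g$ and $\phi$ are even, $h$ is even, so I may assume $\mu\ge 0$; and since IR forces $\mu\le 1$, it suffices to prove $h$ is nonincreasing on $[0,1]$, i.e.\ $h'(\mu)\le 0$ for $\mu\in(0,1]$.

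Differentiating --- the derivative acts on $\phi$, which is absolutely continuous with $\phi'\in L^1$ (indeed $\int_\R|\phi'|=2\phi(0)$), so this is legitimate despite the jumps of $g$ --- then making routine substitutions and using that $g$ is even while $\phi'$ is odd, one gets the clean identity
\[
h'(\mu)=\int_0^\infty\bigl(g(r-\mu)-g(r+\mu)\bigr)\,\phi'(r)\,dr .
\]
On $(0,\infty)$ one has $\phi'\le 0$ (monotonicity) and $\phi'$ nondecreasing (convexity), so $w:=-\phi'\ge 0$ is nonincreasing and vanishes at infinity; hence $h'(\mu)\le 0$ is equivalent to $\int_0^\infty\bigl(g(r-\mu)-g(r+\mu)\bigr)w(r)\,dr\ge 0$. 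Writing $w(r)=\int_{(r,\infty)}d(-w)$ --- that is, $w$ as a nonnegative superposition of indicators $\mathbbm{1}_{(0,s)}$ against the positive measure $-dw$ --- and applying Fubini reduces this to the one-dimensional inequality $\int_0^s\bigl(g(r-\mu)-g(r+\mu)\bigr)\,dr\ge 0$ for every $s>0$. A change of variables rewrites the left-hand side as $\int_{-\mu}^{\mu}g(v)\,dv-\int_{s-\mu}^{s+\mu}g(v)\,dv$, so the whole theorem comes down to a statement about $g$ alone: \emph{the integral of $g$ over a window of fixed length $2\mu$ is maximized by the window centered at $0$}. Note that convexity of $\phi$ has been spent exactly once, precisely to make $w$ monotone and thereby license this Abel-type reduction (which explains why it is hypothesized), and that IR will be used exactly once, in the window inequality.

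It remains to prove $R_\mu(s):=\int_{s-\mu}^{s+\mu}g(v)\,dv\le R_\mu(0)$ for all $s\ge 0$, which I would do by elementary calculus on $R_\mu$. Because $g=1-v^2$ on $[-t,t]$ and $g=0$ outside, $R_\mu$ is piecewise smooth in $s$: while the window still lies inside $[-t,t]$ (or, when $\mu>t$, still contains $[-t,t]$) one has $R_\mu'(s)=-4\mu s\le 0$ (resp.\ $R_\mu$ constant); once exactly one endpoint has left $[-t,t]$, the window meets $[-t,t]$ in $[s-\mu,\,t]$, so $R_\mu(s)=\int_{s-\mu}^{t}(1-v^2)\,dv$ and $R_\mu'(s)=(s-\mu)^2-1$. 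The key point is that $\mu\le 1$ forces $s-\mu>-1$ for every $s>0$, so $R_\mu$ can only begin to increase once $s-\mu>1$; but then the integrand $1-v^2$ is $\le 0$ on all of $[s-\mu,t]$ (or the window has already left $[-t,t]$ and $R_\mu(s)=0$), so there $R_\mu(s)\le 0\le R_\mu(0)$. Hence $R_\mu$ is nonincreasing on $(0,\mu+1]$ and nonpositive on $[\mu+1,\infty)$, while $R_\mu(0)=\int_{-\min(\mu,t)}^{\min(\mu,t)}(1-v^2)\,dv\ge 0$ since $\min(\mu,t)\le 1$; so $R_\mu(s)\le R_\mu(0)$ everywhere, finishing the proof. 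I expect this last step to be the main obstacle: it is elementary but fiddly, because when $t>1$ the function $g$ is \emph{not} unimodal --- it dips below zero on $1<|v|\le t$ --- and the window inequality (hence the theorem) genuinely fails without IR, e.g.\ once $\mu>t>1$; so the case analysis tracking the sign of $1-v^2$ and the exact role of $\mu\le 1$ must be carried out with care.
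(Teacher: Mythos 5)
Your proof is correct, but it follows a genuinely different route from the paper's. The paper argues by case analysis on the relative positions of $\mu$, $c$, and $1$: when $\mu\ge c$ it interpolates through $\sigma Z+c$ and compares pdfs pointwise; when $\mu<c\le 1$ and when $c>1$ it gives pictorial arguments matching areas under the two pdfs, invoking convexity only in the last case to control the losses on $[-c,-1)\cup(1,c]$. You instead differentiate the expected utility $h(\mu)=\int g(x)\phi(x-\mu)\,dx$ in $\mu$, fold by symmetry to get $h'(\mu)=\int_0^\infty\bigl(g(r-\mu)-g(r+\mu)\bigr)\phi'(r)\,dr$, spend convexity once to represent $-\phi'$ as a positive mixture of indicators $\mathbbm{1}_{(0,s)}$ (a layer-cake/Abel step), and thereby reduce the whole theorem to the elementary ``sliding-window'' inequality $\int_{s-\mu}^{s+\mu}g\le\int_{-\mu}^{\mu}g$, which is where IR enters. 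Both are correct, but your version is more uniform (no branching on $\mu$ vs.\ $c$) and more explicit about where each hypothesis is used: convexity is consumed exactly in the monotonicity of $-\phi'$, and IR exactly in keeping $s-\mu>-1$ so the window integral can only stop decreasing after $g$ has turned negative. It also replaces the paper's pictorial area comparisons with a fully analytic reduction, at the cost of needing the (routine but not free) justification that $h$ is absolutely continuous with the claimed a.e.\ derivative, and the handling of the $\sigma=0$ degenerate case separately. One could argue your route is the more robust one, since it survives replacing $g$ by any bounded even function whose sliding-window integral is maximized at the origin, whereas the paper's case split is tied to the specific form $1-x^2$.
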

The proof of this theorem and of \Cref{thm:lower-bias normal}, below, are
given in 
\ifaer
Appendix~\ref{app:no-tradeoff}.
\else
\Cref{app:no-tradeoff}.
\fi

Note that the assumption that $X_i$ is IR is necessary. To see this, consider an $X_i$ that is not IR, for example one in which $\mu=1000$ and $\sigma=10$. Suppose also that the opponent's realization is $c=500$. Observe that, in this case, $u_i(X_i,500)$ is close to 0, since the probability that $i$ wins is small. However, decreasing $\mu$ to 0 leads to 
an $X_i'$ for which $u_i(X_i',500)<0$: player $i$ nearly always wins, in which case her utility will be close to $1-\mu^2-\sigma^2=-99$.

The assumption that $Z$ is 
single-peaked (which follows from monotonicity
and symmetry) is also necessary. To see this, consider a $Z$
with two peaks, for example some infinitesimal perturbation of a Bernoulli
distribution. The peaks are at $1$ and $-1$ so that $\mu=0$.
Now, consider the case in which $X_i=Z+1$, and suppose $c=1$.
Then player $i$ wins whenever the realization is
close to the lower peak, which happens with probability $1/2$, and obtains utility
close to 1 conditional on winning. Hence, $u_i(X_i,c)=1/2$. However,
under $X_i'=Z$ the utility conditional on winning is close to 0,
since the realized values will be near the peaks $1$
and $-1$.

One drawback of Theorem~\ref{thm:lower-bias} is that it does not apply to normal distributions, since such distributions are not convex on $[-\infty,0]$.
However, as we are particularly interested in normal distributions, we have a version of the theorem that applies specifically to them:

\begin{theorem}\label{thm:lower-bias normal}
Let $Z$ be normal; let $X_i=\sigma Z+\mu$ and satisfy $\mu+\sigma\leq 1$; and let $X_i'=\sigma Z$. Then $u_i(X'_i,c)\geq u_i(X_i,c)$ for any realization $c$ of player $j$.
\end{theorem}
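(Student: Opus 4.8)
The statement is the Gaussian analogue of Theorem~\ref{thm:lower-bias}, so the plan is to re-run the case analysis of that proof, identify the single step that genuinely used convexity of $Z$'s density on all of $(-\infty,0]$ — a property the normal density fails, since $\phi_\sigma''(z)=\sigma^{-4}(z^2-\sigma^2)\phi_\sigma(z)$ is negative on $(-\sigma,\sigma)$ — and repair that step using $\mu+\sigma\le 1$. I would keep the notation of the proof of Theorem~\ref{thm:lower-bias}: $c$ is player $j$'s realization, $r=|c|$, $u(X)=\int_{-r}^{r}(1-y^2)f_X(y)\,dy$, $\phi_\sigma$ is the $N(0,\sigma^2)$ density, $X$ has density $\phi_\sigma(\cdot-\mu)$ and $X'$ has density $\phi_\sigma$; without loss of generality $\mu\ge 0$.

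First I would dispatch the cases of Theorem~\ref{thm:lower-bias} that need nothing beyond unimodality and symmetry of $\phi_\sigma$. In Case~1 ($\mu\ge c$), the hypothesis $\mu+\sigma\le 1$ forces $c\le\mu\le 1$, so every winning realization has nonnegative payoff, and the comparison of $X'$ with the intermediate shift $X''=\sigma Z+c$ goes through verbatim (more mass of $X''$ than of $X$ on $[-c,c]$; more mass of $X'$ than of $X''$ nearer the origin). Case~2a ($\mu<c\le 1$) is likewise unchanged, since its area-decomposition argument uses only unimodality together with $[-c,c]\subseteq[-1,1]$.

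The load-bearing case is Case~2b ($c>1$). Exactly as in Theorem~\ref{thm:lower-bias}, the mass of the two densities lying below both of them on $[-c,c]$ contributes equally, the excess mass of $X'$ near the origin beats the excess mass of $X$, and it remains to compare the negative payoffs generated on $[-c,-1)\cup(1,c]$; symmetrizing the densities, this reduces to showing $\int_{(1,c]}(y^2-1)\big(\phi_\sigma(y-\mu)+\phi_\sigma(y+\mu)-2\phi_\sigma(y)\big)\,dy\ge 0$, for which it suffices that $\phi_\sigma(y-\mu)+\phi_\sigma(y+\mu)\ge 2\phi_\sigma(y)$ for every $y>1$. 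This is where the hypothesis is used: for $y>1$ we have $y-\mu>1-\mu\ge\sigma$ and $y+\mu>1\ge\sigma$, so $[y-\mu,y+\mu]\subseteq[\sigma,\infty)$, where $\phi_\sigma''\ge 0$; convexity of $\phi_\sigma$ on that interval gives the midpoint inequality. Assembling the three cases yields $u(X')\ge u(X)$.

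The real obstacle is conceptual rather than computational: recognizing that the normal density violates the hypothesis of Theorem~\ref{thm:lower-bias} only on the bounded interval $[-\sigma,\sigma]$, that the proof invokes that convexity only on intervals $[y-\mu,y+\mu]$ with $y>1$, and that $\mu+\sigma\le 1$ is exactly the condition keeping those intervals inside the region where $\phi_\sigma$ is convex; everything else is bookkeeping. The one point worth checking carefully is that Cases~1 and~2a really require nothing beyond unimodality and symmetry of $\phi_\sigma$ — which they do.
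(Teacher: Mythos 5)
Your proof is correct and follows essentially the same route as the paper's: cases 1 and 2a of Theorem~\ref{thm:lower-bias} carry over unchanged, and case 2b is repaired by observing that the density of $X'=\sigma Z$ is convex on $[\sigma,\infty)$ while the midpoint inequality is only ever invoked on intervals $[x-\mu,x+\mu]$ with $x>1$, which $\mu+\sigma\le 1$ places inside that convex region. Your writeup is in fact somewhat more explicit than the paper's two-sentence argument, but the key idea is identical.
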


\fi
\section{Tradeoff Between Bias
  and Variance}
\label{sec:tradeoff}

Thus far, we have argued that simple observations that hold in the
one-player case fail to extend to the competitive setting, and that
reducing bias or variance may actually be harmful (holding all else
fixed). We now turn to our main analysis, which considers the {\em
  tradeoff} between bias and variance: if there is only one player, she
is indifferent between the two sources of error as long as their sum
(or rather, the sum of bias squared and variance) is fixed. In this
section, we show that in a competitive setting players are no longer
indifferent, and, furthermore, that there is a clear preference for
error incurred by variance over error incurred by bias.

We proceed as follows. In \Cref{sec:main-analysis} we state our main
result about the tradeoff between bias and variance in a
two-player bias-variance game.  We assume that $\mu^2+\sigma^2 = 1$,
which ensures the individual rationality constraint is satisfied.\footnote{In fact, this assumption implies that
the IR constraint binds in
the one-player setting---that is, $u(X) = 1 - \mu^2 -
\sigma^2 = 0$ for all $X$. This assumption is necessary for our analysis, but in \Cref{sec:numerical} we show numerically
that it is not necessary for our main result to hold.} We fix the random variable $Z$ to be
normal, and show that for arbitrary realizations $a_j$ of the opponent
player $j$, reducing bias $\mu_i$ while increasing variance to
$\sigma_i^2 = 1 - \mu_i^2$ is always strictly beneficial for player $i$. That
is, the minimal-bias strategy is ex post dominant. This result implies that the profile in which both players choose their minimal-bias strategy is the unique Nash equilibrium of the game. We also argue that our result extends to more than two players and to asymmetric strategy classes. 

Next, in \Cref{sec:non-constant-tradeoff}, we show that our insight on the preference for lower bias persists also when the total error is not constant across available
error distributions. In particular, we suppose that $\mu^2+\sigma^2=g(\mu^2)$ for some convex function $g$, and so there is some optimal distribution with minimal error. We show that,
in the competitive setting, on the margin it is beneficial for players to choose a distribution with lower bias than this optimal distribution.

We then turn to the various assumptions in the analysis---namely,
the normality assumption and the form of the utility functions. In \Cref{sec:normal-motivation}, we motivate
the focus on normal distributions by arguing formally that they are a
natural distribution for error in machine learning contexts. Then, in
\Cref{sec:numerical}, we consider distributions other than normal, as
well as variations on the players' utility functions. We perform
numerical analyses on these variations, and demonstrate that our
insight on the preference for lower bias is robust.

\subsection{Preference for Lower 
Bias in Normal Distribution}\label{sec:main-analysis}

In \Cref{thm:lower-bias normal} we showed that reducing bias in normal
distributions is beneficial if the variance is fixed.  Here we give a
``stronger'' result: that as long as the {\em total} error is fixed,
reducing bias (which increases variance) is beneficial.  This is
the main result of the paper.

\begin{theorem}
\label{thm:lower-bias normal under tradeoff}
Let $Z$ be normal with mean 0 and variance 1, and let $X_i=\sigma Z+\mu$ 
and $X_i'=\tau Z + \nu$, where $\mu^2+\sigma^2=\nu^2+\tau^2=1$.
If $\nu^2<\mu^2$ then $u_i(X'_i,a)> u_i(X_i,a)$ for any realization $a>0$ of player $j$.
\end{theorem}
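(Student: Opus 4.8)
The plan is to reduce the statement to a one-parameter monotonicity fact, then to an exact formula for $u_i$, and finally to a scalar inequality in $\tanh$. For the reductions, write $c=|a|\ge 0$; since ties have probability zero, $u_i(\sigma Z+\mu,a)=\int_{-c}^{c}(1-t^2)f(t)\,\dd t$, where $f$ is the density of $N(\mu,\sigma^2)$. Because $Z$ is normal, hence symmetric, the density of $\sigma Z-\mu$ at $t$ equals that of $\sigma Z+\mu$ at $-t$, so $u_i(\sigma Z+\mu,a)=u_i(\sigma Z-\mu,a)$; thus only $\mu^2$ matters and we may assume $\mu,\nu\ge 0$, so $\nu^2\le\mu^2$ means $\nu\le\mu$ (and $\tau\ge\sigma$). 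Parametrize $(\mu,\sigma)=(\cos\theta,\sin\theta)$ with $\theta\in[0,\pi/2]$ and set $g(\theta):=u_i(\sin\theta\,Z+\cos\theta,a)$. Then $X_i$ sits at a smaller angle than $X_i'$, so it suffices to show $g$ is nondecreasing on $[0,\pi/2]$; by continuity of $g$, it suffices to show $g'(\theta)\ge 0$ for $\theta\in(0,\pi/2)$. (The geometric convexity argument behind \Cref{thm:lower-bias normal} does not transfer here, since $\mu+\sigma$ may exceed $1$ along $\mu^2+\sigma^2=1$.)

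For the exact formula: using $f'(t)=-\tfrac{t-\mu}{\sigma^2}f(t)$ together with $\mu^2+\sigma^2=1$ one verifies the identity $\tfrac{\dd}{\dd t}\big[\sigma^2(t+\mu)f(t)\big]=(1-t^2)f(t)$, which is the only place the normalization enters. Integrating over $[-c,c]$ and simplifying the Gaussian exponents (again via $\mu^2+\sigma^2=1$) yields
\[
g(\theta)=\sigma^2\big[(c+\mu)f(c)+(c-\mu)f(-c)\big]=\frac{2e^{1/2}\sigma}{\sqrt{2\pi}}\,e^{-(c^2+1)/(2\sigma^2)}\big[c\cosh x+\mu\sinh x\big],\qquad x:=\tfrac{c\mu}{\sigma^2}\ge 0 .
\]
Differentiating in $\theta$ (with $\mu'=-\sigma$, $\sigma'=\mu$), the positive prefactor drops out and $g'(\theta)$ has the same sign as $S\cosh x-T\sinh x$, where $S:=c\mu(c^2+\sigma^2-\mu^2)$ and $T:=c^2-1+2\sigma^4$. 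Hence the theorem reduces to the scalar claim $S\cosh x\ge T\sinh x$ for all $c\ge 0$ and all $(\mu,\sigma)$ on the unit circle with $\mu,\sigma\ge 0$.

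I would then case on the signs of $S$ and $T$. The threshold $c^2\ge 4\mu^2-2\mu^4-1$ for $T\ge 0$ dominates the threshold $c^2\ge 2\mu^2-1$ for $S\ge 0$, so $T>0$ forces $S\ge 0$, and $S<0$ forces both $T<0$ and $c<1$. Then: \emph{(i)} if $S\ge 0\ge T$, then $S\cosh x\ge 0\ge T\sinh x$, done. \emph{(ii)} If $S,T<0$ (so $c<1$), the claim is $\tanh x\ge|S|/|T|$; one checks $|S|\le c\mu\,|T|$ (which reduces to $\sigma^2\le 1$) and $\tanh(c\mu/\sigma^2)\ge c\mu$ (which reduces to $(1-y^2)\operatorname{artanh}y\le y$ for $y=c\mu\in[0,1]$, using $\sigma^2\le 1-(c\mu)^2$), and these combine. \emph{(iii)} If $S,T\ge 0$, the claim is $\tanh x\le S/T$; when $c\le 1$ a direct computation gives $Tx\le S$ (reducing to $c^2\le 1$), so $\tanh x\le x\le S/T$; when $c>1$, substituting $x=c\mu/\sigma^2$ turns the claim into
\[
x\mu^2(c^2-1)\ \le\ (c^2-1+2\sigma^4)(x-\tanh x),
\]
equivalently $(c^2-1)(\tanh x-x\sigma^2)\le 2\sigma^4(x-\tanh x)$, which is immediate when $\tanh x\le x\sigma^2$ and otherwise requires a quantitative lower bound on $x-\tanh x$ (such as the one implied by $\tanh x\le x-\tfrac{x^3}{3}+\tfrac{2x^5}{15}$, valid for $x\ge 0$), possibly with a further split according to the size of $x$.

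The main obstacle is the $c>1$ case of (iii): there $S\cosh x\ge T\sinh x$ is tight to third order --- for instance as $\mu\to 1$ (vanishing variance), and along $c=\mu$ as $\mu\to 0$ --- so no crude estimate of $\tanh$ closes the gap, and one must pick the sharp rational/Taylor bounds for $\tanh$ (and perhaps subdivide the $(c,\sigma)$-range) so that the resulting polynomial inequality in $c$ and $\sigma$ is visibly nonnegative. Everything preceding the scalar inequality, along with cases (i) and (ii), is routine.
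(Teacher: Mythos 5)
Your reduction arrives, by a cleaner route, at precisely the paper's critical inequality~\eqref{eq:critical inequality}. Collecting that inequality into hyperbolic form with $x = a\mu/(1-\mu^2)$ and substituting $\sigma^2=1-\mu^2$, the coefficient of $\cosh x$ is exactly your $S = c\mu(c^2+\sigma^2-\mu^2)$ and the coefficient of $-\sinh x$ is exactly your $T = c^2-1+2\sigma^4$, so $S\cosh x\ge T\sinh x$ is \eqref{eq:critical inequality} verbatim. The antiderivative identity $\tfrac{\dd}{\dd t}\bigl[\sigma^2(t+\mu)f(t)\bigr]=(1-t^2)f(t)$ and the angle parametrization are genuine simplifications over the paper's direct Gaussian integral evaluation and implicit differentiation under the constraint. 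Your cases (i), (ii), and the $c\le 1$ part of (iii) are correct and together cover exactly $c\le 1$ (since $T>0$ whenever $c>1$, and $T\ge 0\Rightarrow S\ge 0$); this is the content of the paper's Lemma~\ref{lem:critical inequality small a}, and your sign-of-$(S,T)$ organization with elementary $\tanh$ estimates is more transparent than the paper's case split on the signs of $2\mu^2-a^2-1$ and $1-a\mu$ followed by term-by-term Taylor comparison.

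The gap is the $c>1$ regime, and it is not a minor omission: it is the paper's Lemma~\ref{lem:critical inequality large a}, which occupies most of the appendix. The paper proves it by a three-fold recursion of the device ``verify at $a=1$, then show the left-hand side is increasing in $a$ by differentiating,'' with a further Taylor-pairing argument for the residual boundary inequality. You correctly diagnose why crude estimates must fail: for fixed $c>1$ and $\mu\to 0$, both $S/T$ and $\tanh x$ equal $c\mu + O(\mu^3)$, so the inequality has to be verified to third order in $\mu$. Your rewriting as $(c^2-1)(\tanh x - x\sigma^2)\le 2\sigma^4(x-\tanh x)$ is a sensible starting point, but the proposal stops at ``pick sharp rational/Taylor bounds and perhaps subdivide'' without producing the estimate that closes it. As written, the argument establishes the theorem only for realizations with $|a|\le 1$.
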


The proof of Theorem~\ref{thm:lower-bias normal under tradeoff}, given
in 
Appendix~\ref{apx:critical inequality},
is essentially a straightforward
(albeit long and somewhat involved) calculation. The main idea is to
consider the expected utility of a player as a function of her
chosen bias and some realization of the opponent's strategy. We
calculate the derivative of this expected utility with respect to the
bias, and show that it is negative. Thus, increasing bias (and so
decreasing variance) leads to lower expected utility.

An immediate corollary of Theorem~\ref{thm:lower-bias normal under tradeoff} is that the minimal-bias
strategy is ex post strictly dominant.\footnote{For strictness we ignore the 0-measure event $a=0$, for which all strategies are payoff equivalent.}

\begin{corollary}
  \label{cor:lower-bias normal under tradeoff}
Let $\bar{\mu}\in[0,1)$ be an exogenous lower bound on the choice of $|\mu|$.  Let $Z$ be normal with mean 0 and variance 1.
The minimal-bias strategy $Z^*$ with mean $\bar{\mu}$ and variance $\sqrt{1-\bar{\mu}^2}$ 
is ex post strictly dominant within the strategy class 
$\X = \{\sigma Z + \mu~|~ 
\mu^2 + \sigma^2 = 1
~\cap~ |\mu| \geq \bar{\mu}\}$.
\end{corollary}

\paragraph{Many players and asymmetric strategies} Because no-bias is 
ex post dominant%
---and so player $i$ prefers minimal-bias for any
realization of the opponent's error---the result of \Cref{cor:lower-bias  normal under tradeoff} 
immediately extends to more than two players.  Consider the bias-variance game with any number of players,
and in which a player's payoff is $1-a^2$ if her realization $a$ is lower than all others' realizations, and
0 otherwise. Then, within the same strategy class as in \Cref{cor:lower-bias  normal under tradeoff}, 
minimal-bias is an ex post dominant strategy.
To see this, observe that player $i$'s utility can
be maximized with the opponents' errors summarized by ex post error $c
= \min_{j\neq i} a_j$.  If agent $i$ has lower error then $i$ wins,
otherwise $i$ loses. Because \Cref{cor:lower-bias  normal under tradeoff} implies that no-bias is dominant regardless
of the other's realization, it is, in particular, optimal
given realization $c$.

Another immediate implication of the result that minimal bias is optimal for player $i$ regardless of the {\em realization} of player $j$'s strategy is that $j$'s strategy class could be different from $i$'s---for example, it could include normal distributions with a different bias-variance tradeoff, or even distributions other than normal.


\subsection{Non-Constant Bias-Variance Tradeoff}\label{sec:non-constant-tradeoff}
Our main result, Theorem~\ref{thm:lower-bias normal under tradeoff}, states that, when the tradeoff between bias and variance is fixed (specifically, when $\mu^2+\sigma^2=1$) players strictly prefer a distribution with lower bias. In this section we show that the insight on the preference for lower bias persists also when the tradeoff is not fixed, and that in this case players prefer to lower bias on the margin.

To this end, consider the class of normal distributions with $\mu^2 + \sigma^2 = g(\mu^2)$, where $g:\R_+\mapsto \R_+$ is strictly convex and differentiable. Also, suppose that $\min_\mu g(\mu^2) = 1$, let $\mu_*^2 = \arg\min_{\mu^2} g(\mu^2)$, and suppose $\mu_*>0$.
Strict convexity of $g$ implies that, in the single-player game, the unique optimal choice of distribution from this class is the one in which the bias squared is $\mu_*^2$.
As with the case of constant bias-variance tradeoff, however, this is no longer the optimal choice under competition:

\begin{theorem}\label{thm:non-constant-tradeoff}
Let $Z$ be normal with mean 0 and variance 1, and for every $\mu\geq 0$ let $X_i(\mu^2)=\sigma Z + \mu$ with $\sigma^2 = g(\mu^2)-\mu^2$. Then
$\frac{du_i\left(X_i(\mu_*^2) ,a\right)}{d\mu^2}<0$ for every realization $a>0$ of player $j$.
\end{theorem}

\begin{proof}
Let $X_i(\mu^2,T) =\sigma Z + \mu$, where $\sigma^2 = T-\mu^2$, and define $v(\mu) = \begin{bmatrix}1\\ \frac{dg(\mu^2)}{d\mu^2} \end{bmatrix}$. Then for any bias $\mu$ and tradeoff $T$,
\begin{align}
    \frac{du_i\left(X_i(\mu^2) ,a\right)}{d\mu^2} &= \frac{du_i\left(X_i(\mu^2,g(\mu^2)) ,a\right)}{d\mu^2} \nonumber\\ 
    &= \nabla_{v(\mu)} u_i\left(X_i(\mu^2,T) ,a\right)\nonumber\\
    &= \frac{\partial u_i\left(X_i(\mu^2,T) ,a\right)}{\partial\mu^2} + \frac{dg(\mu^2)}{d\mu^2} \cdot \frac{\partial u_i\left(X_i(\mu^2,T) ,a\right)}{\partial T}\label{eqn:derivative}.
\end{align}
Evaluated at the point $\mu=\mu_*$, we thus have
\begin{align*}
    \frac{du_i\left(X_i(\mu_*^2) ,a\right)}{d\mu^2}
    &= \frac{\partial u_i\left(X_i(\mu_*^2,1) ,a\right)}{\partial\mu^2} + \frac{dg(\mu_*^2)}{d\mu^2} \cdot \frac{\partial u_i\left(X_i(\mu_*^2,T) ,a\right)}{\partial T}\\
    &=\frac{\partial u_i\left(X_i(\mu_*^2,1) ,a\right)}{\partial\mu^2}\\
    &<0.
\end{align*}
The first equality follows since $g(\mu_*^2)=T=1$. The second equality holds since $g$ has minimum at $\mu_*^2$ and so $\frac{dg(\mu_*^2)}{d\mu^2}=0$. Finally, the inequality follows from Theorem~\ref{thm:lower-bias normal under tradeoff}.
\end{proof}

\subsection{Motivation for Normal Distribution}
\label{sec:normal-motivation}

One of the main assumptions we make in our analysis above is that the
error of players' algorithms is normally distributed.  This assumption
is natural in the context of machine learning algorithms, and many
commonly-used econometric and machine learning procedures with tuning
parameters determining the bias-variance tradeoff have been
demonstrated to produce predictions with
asymptotically normal error distributions; some examples can be found in
\citet{chen:99}, \citet{ormoneit2002kernel}, \citet{hable2012asymptotic}, and
\citet{wager2018estimation}.
To formally describe these results we need a definition:

\begin{definition}
  \label{def:asymptotic-normal}
  Let $\{\X_1,\X_2,\ldots\}$ be an infinite sequence of
  random variables.  The sequence is {\em asymptotically normal with asymptotic bias
    $\mu$ and asymptotic variance $\sigma^2$} if $\sqrt{m} \X_m$ converges with $m$ in distribution to $\N(\mu,\sigma^2)$, the normal distribution
  with bias $\mu$ and variance $\sigma^2$.\footnote{Rearranging, we see that as the
    sequence converges so $\X_m$ resembles $\N(\mu/\sqrt{m},\sigma^2/m)$.
    Asymptotic normality is a consequence of the Central Limit Theorem
    where the variance converges at a rate of $1/m$ (the standard
    deviation converges at a rate of $1/\sqrt{m}$); thus, $\sigma^2$
    is the variance normalized by $m$ which converges to a constant.}
\end{definition}

Parameter estimates obtained from the ridge regression---which we employ for our empirical results in
\Cref{sec:empirical}---are known to be
asymptotically normal, with $m$ corresponding to the size of the 
training sample \citep[e.g.,][]{hoerl:70,brown:80}. In this section we provide some formal
definitions and statements to describe these results.

Ridge regression is linear
regression with a quadratic regularizer.  Features $x$ lie in a $p$
dimensional space $\R^p$ and hypotheses are given by weights $w \in
\R^p$ as the linear weighted sums of the features, i.e., $w^Tx$.  The
hypothesis selected by the ridge regression on the set of $m$ training
points $\{x_i,y_i\}^m_{i=1}$ is $\hat f(x) = \widehat{w}^Tx$, where the
weights $\widehat{w}$  minimize
the following quantity, the regularized empirical risk
function with the quadratic regularizer and regularization parameter
$\ridge$:
\begin{align}
\label{eq:ridge}
\widehat{R}(w) &= \frac{1}{2m}\sum\nolimits^m_{i=1}(y_i-w^Tx_i)^2+\ridge\|w\|^2_2.
\end{align}

The regularization parameter $\ridge$ is responsible for the
bias-variance tradeoff in the prediction.  The presence of the
regularizer reduces the dependence of the vector of predicted weights
$\widehat{w}$ on individual observations, making the corresponding
predictions more ``stable'' as the regularization parameter $\ridge$
increases. At the same time, the increase in $\ridge$ makes the
prediction less data-dependent and, therefore, more biased.  In other
words, while increasing $\ridge$ increases the bias of prediction, it
decreases the variance. This last statement is formalized in Proposition~\ref{prop:multidimensional-ridge}
in Appendix~\ref{apx:multidimensional-ridge}.

In the ``classic" ridge regression model of \citet{hoerl:70} for a fixed
regularization parameter $\lambda,$ asymptotic bias of the estimator remains finite while its 
variance decreases at rate $1/m$ as the sample size $m \rightarrow \infty.$ A possible approach 
to maintain a meaningful bias-variance tradeoff is to re-scale the regularization parameter to 
set it $\lambda=c/\sqrt{m}$ in which case the ``new" regularization parameter is $c.$ This is the approach 
that was pursued in \citet{brown:80} and the follow-up literature. 
In the recent literature on ridge regression with high-dimensional regressors, including \citet{karoui:13}, \citet{dobriban:18}, \citet{hastie:19}
\citet{tsigler:20}, and \citet{wu2020optimal}, nonzero asymptotic bias and variance arise in the regime with the dimension
$p$ of the feature vector growing with the size of the sample $m.$



\subsection{Numerical 
Results for Other Distributions
and Payoffs}\label{sec:numerical}

\begin{figure*}
\begin{subfigure}{.5\textwidth}
\begin{center}
\includegraphics[height=6cm]{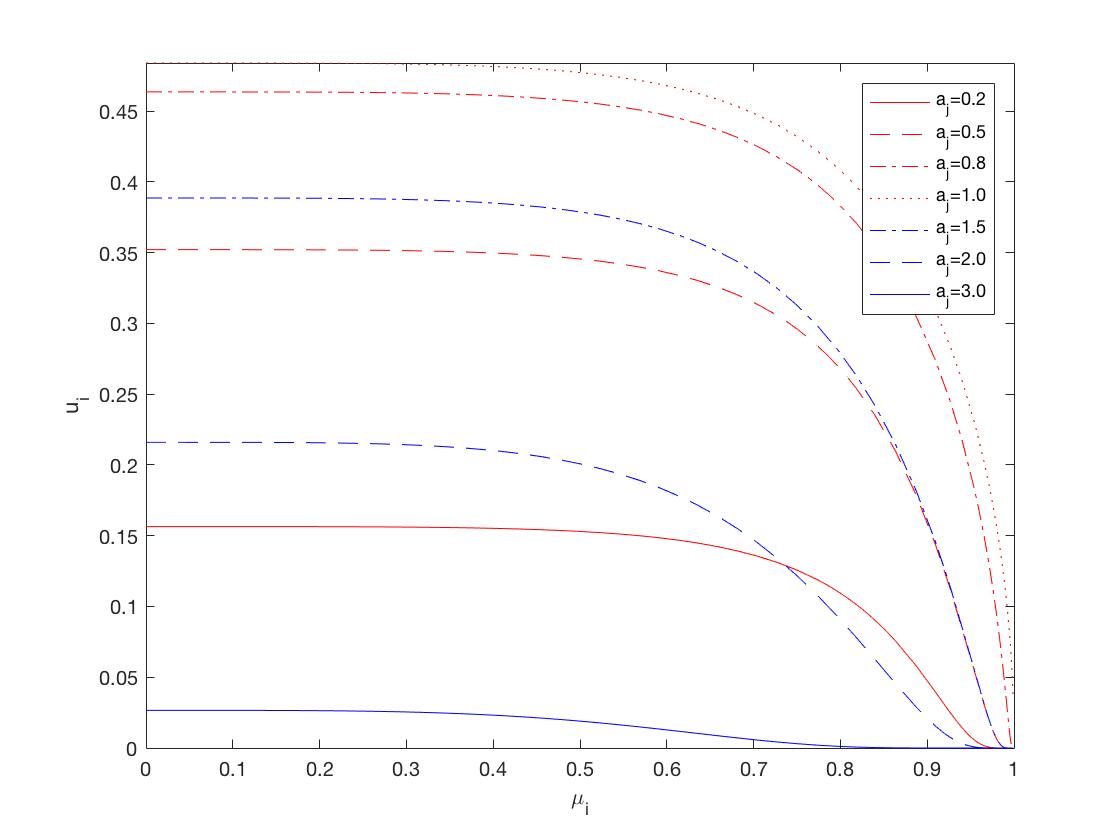}
  \caption{Normal Distribution}
\label{fig:util normal}
\end{center}
\end{subfigure}
\begin{subfigure}{.5\textwidth}
\begin{center}
\includegraphics[height=6cm]{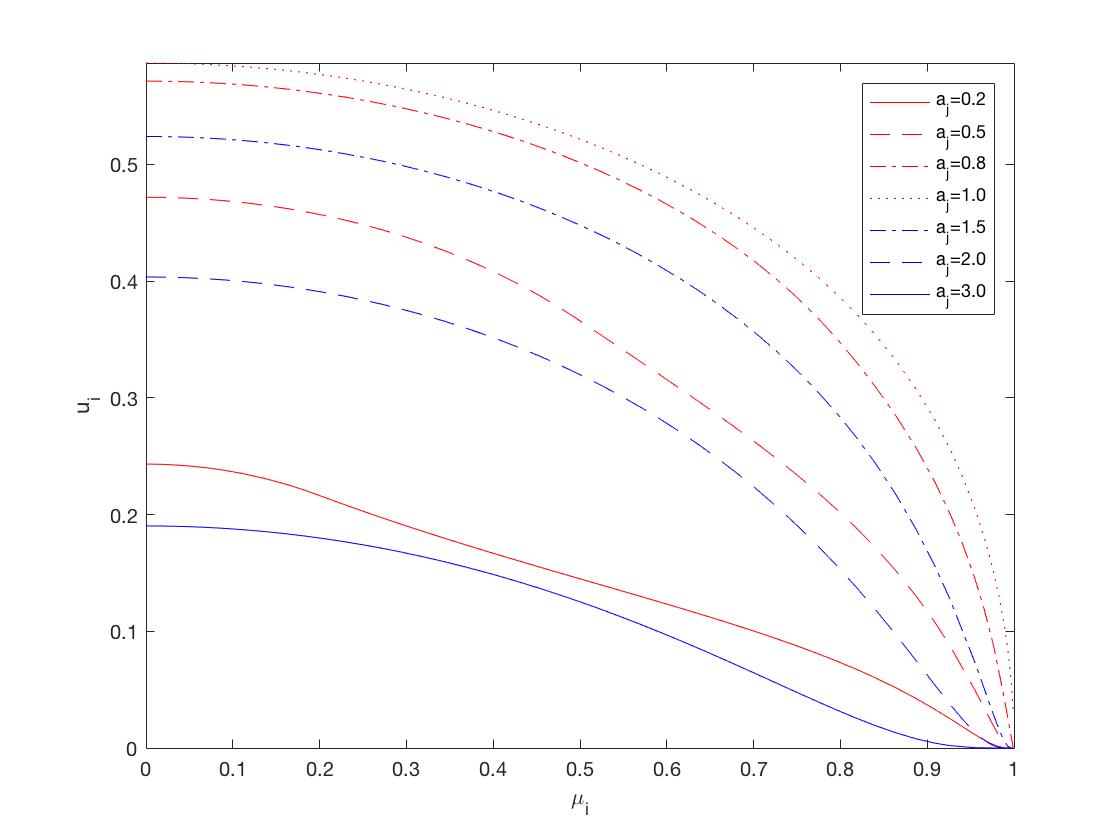}
  \caption{Laplace Distribution}
\label{fig:util laplace}
\end{center}
\end{subfigure}
\caption{
Ex post utility curves of player 
$i$ against
different realization $a_j$
from opponent player $j$.
}
\label{fig:ex post utility}
\end{figure*}

In this section we illustrate the robustness of
\Cref{thm:lower-bias normal under tradeoff}.  Specifically, in the
proof of \Cref{thm:lower-bias normal under tradeoff}, there are three
assumptions that enable a clean closed form for ex post utility, and
thus simplify our argument: (a) the
the error distributions are normal; (b) the utility function is $u_i(a_i,
a_j) = (1 - a_i^2)\cdot {\bf 1}\{a_i < a_j\}$; and (c)
 $\mu^2 + \sigma^2 = {\bf 1}$. The fact that both the benefit from winning and the total error are 1 simplifies the proof of \Cref{thm:lower-bias normal under tradeoff}.

In this section
we vary the 
distributions, utility function,
and tradeoff, and numerically evaluate 
the players' utilities.
The subsequent calculations  
suggest that our insight on the preference of variance over bias holds generally in many 
cases
beyond our theoretical 
assumptions (a), (b) and (c).

\paragraph{Other distributions.}
We first numerically calculate the ex post utility curves against arbitrary
realizations from the opponent player, as well as the expected utility curves
against arbitrary strategies of the opponent, on various
distributions.  As shown in \Cref{thm:lower-bias normal under
  tradeoff}, with the normal distribution, the ex post utility curve is
always decreasing, for all opponents' realizations.  Numerical evaluations with the \emph{Laplace} distribution indicate that its ex post
utility curve is also decreasing for all opponents' realizations;
see \Cref{fig:ex post utility}, where we plot the ex post utility
curves $u_i(\mu_i, a_j)$ holding realization $a_j$ fixed for both normal
and Laplace distributions.  The $x$-axis of each figure
is player $i$'s bias $\mu_i$, and the $y$-axis is the ex post utility
$u_i(\sqrt{1-\mu_i^2}Z+\mu_i, a_j)$.  Another interesting observation
from this numerical result is that with normal distributions, the ex
post utility is almost flat for $\mu_i \leq 0.5$---i.e., while
$\mu_i=0$ is a dominant strategy, picking any $\mu_i \in [0,0.5]$ is a
``pretty good'' strategy.  

For the \emph{logistic} distribution, the ex
post utility curve is no longer decreasing; however, the expected
utility curve is decreasing against arbitrary strategies of the
opponent. Thus, the no-bias distribution
$X_i = Z$ is a dominant (although not ex post dominant) strategy. 

Finally, for \emph{uniform}
distributions, monotonicity does not hold on either the ex post utility
curve or the expected utility curve. In fact, there exists a pure Nash equilibrium with non-zero bias. See \Cref{fig:expected utility}
for utility plots under logistic and uniform distributions.

These plots and calculations indicate that the preference for variance over bias is robust---and holds for Laplace and logistic distributions---but not universal---as
demonstrated by the uniform distribution

\begin{figure*}
\begin{subfigure}{.5\textwidth}
\begin{center}
\includegraphics[height=6cm]{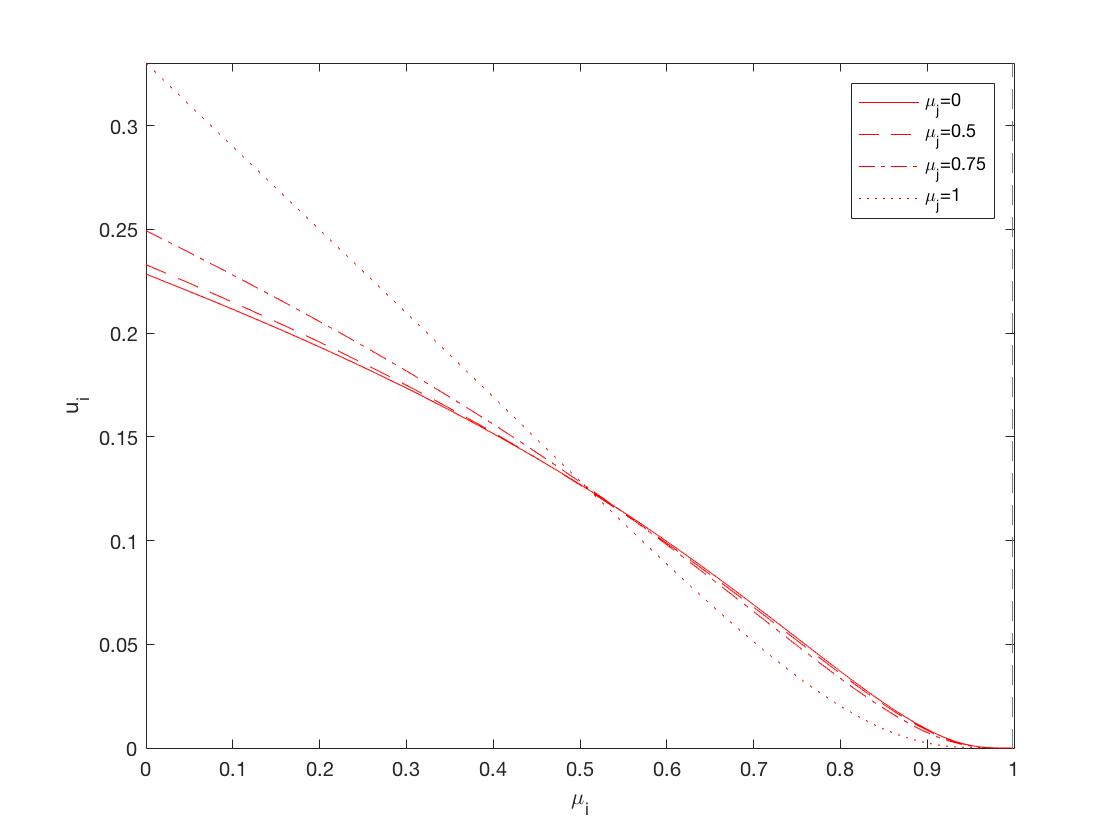}
  \caption{Logistic Distribution}
\label{fig:util logistic}
\end{center}
\end{subfigure}
\begin{subfigure}{.5\textwidth}
\begin{center}
\includegraphics[height=6cm]{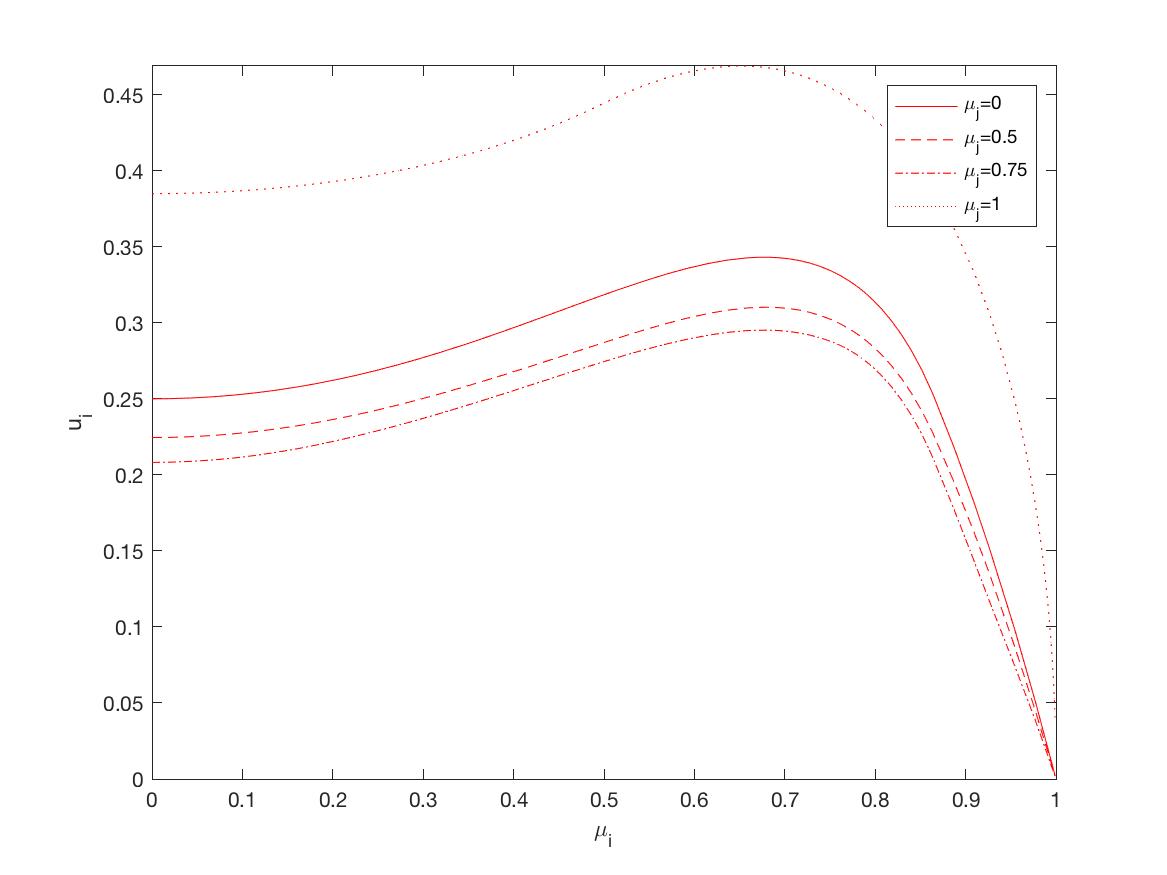}
  \caption{Uniform Distribution}
\label{fig:util uniform}
\end{center}
\end{subfigure}
\caption{
Expected utility curves
of player $i$ against
different strategy $\mu_j$
from opponent player 
$j$.
}
\label{fig:expected utility}
\end{figure*}

\paragraph{Other utility functions.}
Another assumption that is important for our theoretical analysis is
the form of the utility function. More generally, suppose players'
utility functions $u_i(a_i, a_j) = (R - a_i^2)\cdot {\bf 1}\{ a_i <
a_j\}$ for arbitrary reward $R > 0$, with normal distribution $Z$. 
Observe that for very large $R$, the resulting game is close to a
constant-sum game, as the error terms are nearly irrelevant, and
that the probability of negative payoffs is negligible.

Our
main theoretical result considered the case $R=1$, but for general
reward $R \not= 1$, in which the individual rationality constraint
either does not bind or is violated, the monotonicity of ex post
utility may not hold.  However, numerical calculations demonstrate
that the expected utility curve is still decreasing against arbitrary
strategies of the opponent, and so no-bias remains a dominant
strategy---see \Cref{fig:expected utility different payoff}.

\begin{figure*}[t]
\begin{subfigure}{.5\linewidth}
\begin{center}
\includegraphics[height=6cm]{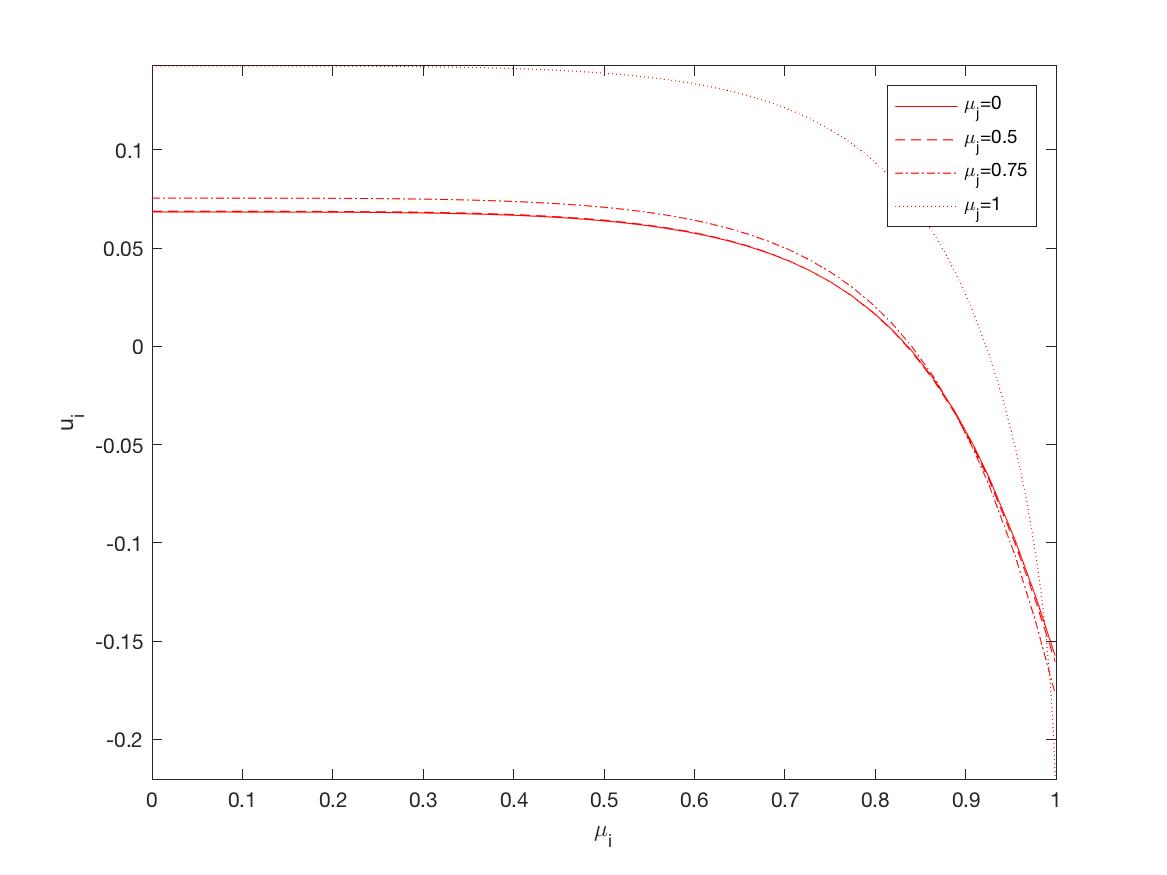}
  \caption{$u_i(a_i, a_j) = (0.5 - 
  a_i^2)
  \cdot {\bf 1}\{a_i < a_j\}$}
\label{fig:util normal half}
\end{center}
\end{subfigure}
\begin{subfigure}{.5\linewidth}
\begin{center}
\includegraphics[height=6cm]{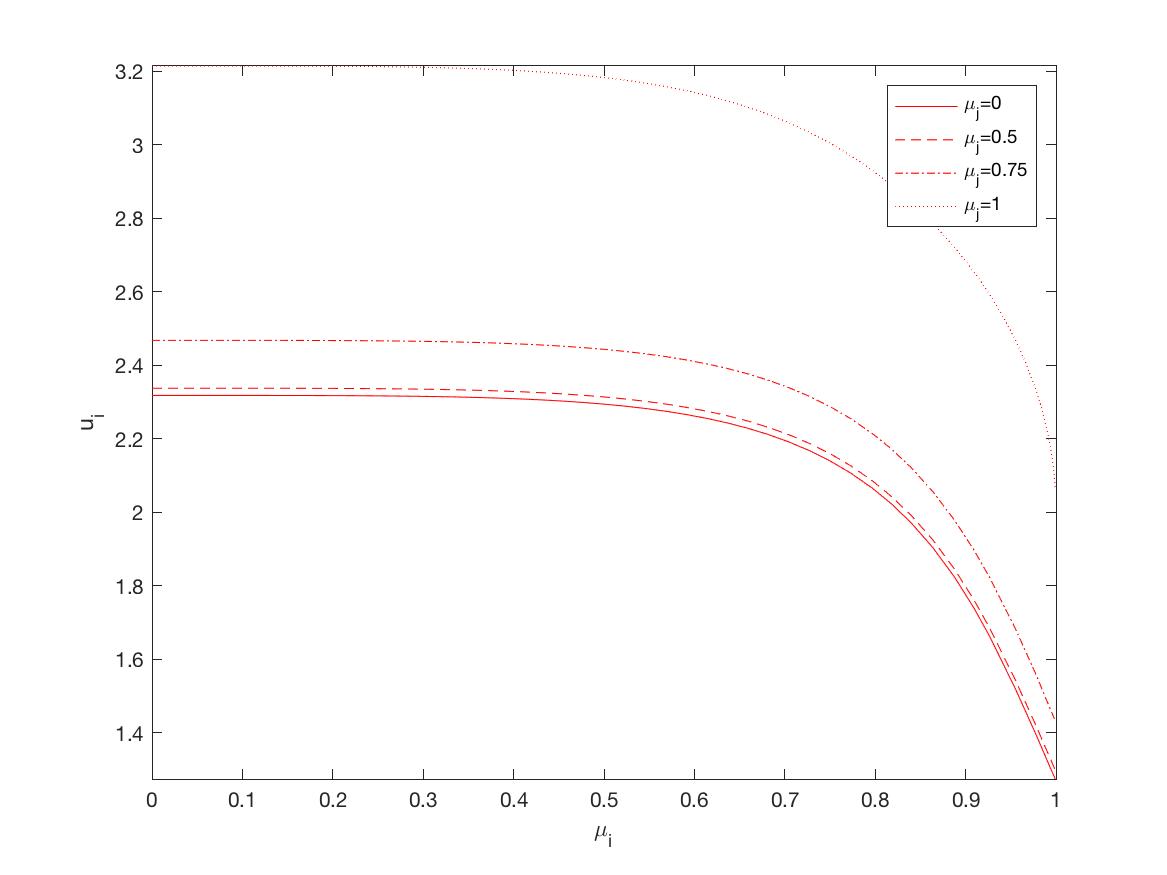}
  \caption{$u_i(a_i, a_j) = (5 - 
  a_i^2)
  \cdot {\bf 1}\{a_i < a_j\}$}
\label{fig:util normal five}
\end{center}
\end{subfigure}
\caption{
Expected utility curves
of player $i$ against
different strategy $\mu_j$
from opponent player 
$j$ with normal distribution. Rewards are $R=0.5$ and $R=5$, respectively.
}
\label{fig:expected utility different payoff}
\end{figure*}

\paragraph{Other tradeoffs.}
Finally, our theoretical analysis focused on
the assumption that  $\mu^2 + \sigma^2=1$, but more generally one might suppose that the tradeoff (for both players) is different. 
Similarly to what we observe 
when we vary the reward $R$ in the utility function, 
the ex post monotonicity of
utility may not hold
when $\mu^2 +\sigma^2 \not=1$.  
However, numerical calculations demonstrate
that the expected utility curve is still decreasing against arbitrary
strategies of the opponent, and so no-bias remains a dominant
strategy---see \Cref{fig:expected utility different tradeoff}.

\begin{figure*}[t]
\begin{subfigure}{.5\linewidth}
\begin{center}
\includegraphics[height=6cm]{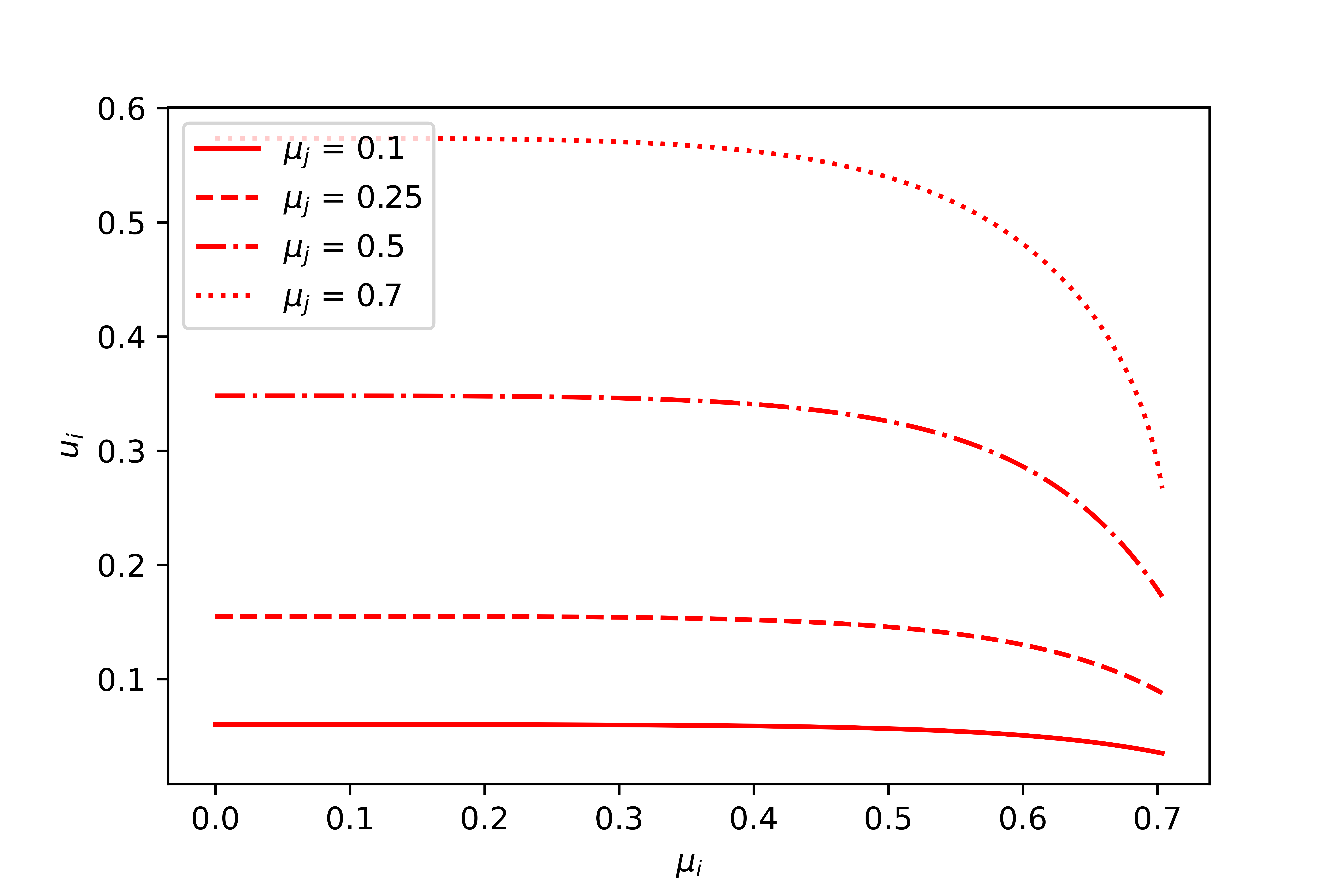}
  \caption{$\mu^2+\sigma^2=0.5$}
\label{fig:util normal tradeoff half}
\end{center}
\end{subfigure}
\begin{subfigure}{.5\linewidth}
\begin{center}
\includegraphics[height=6cm]{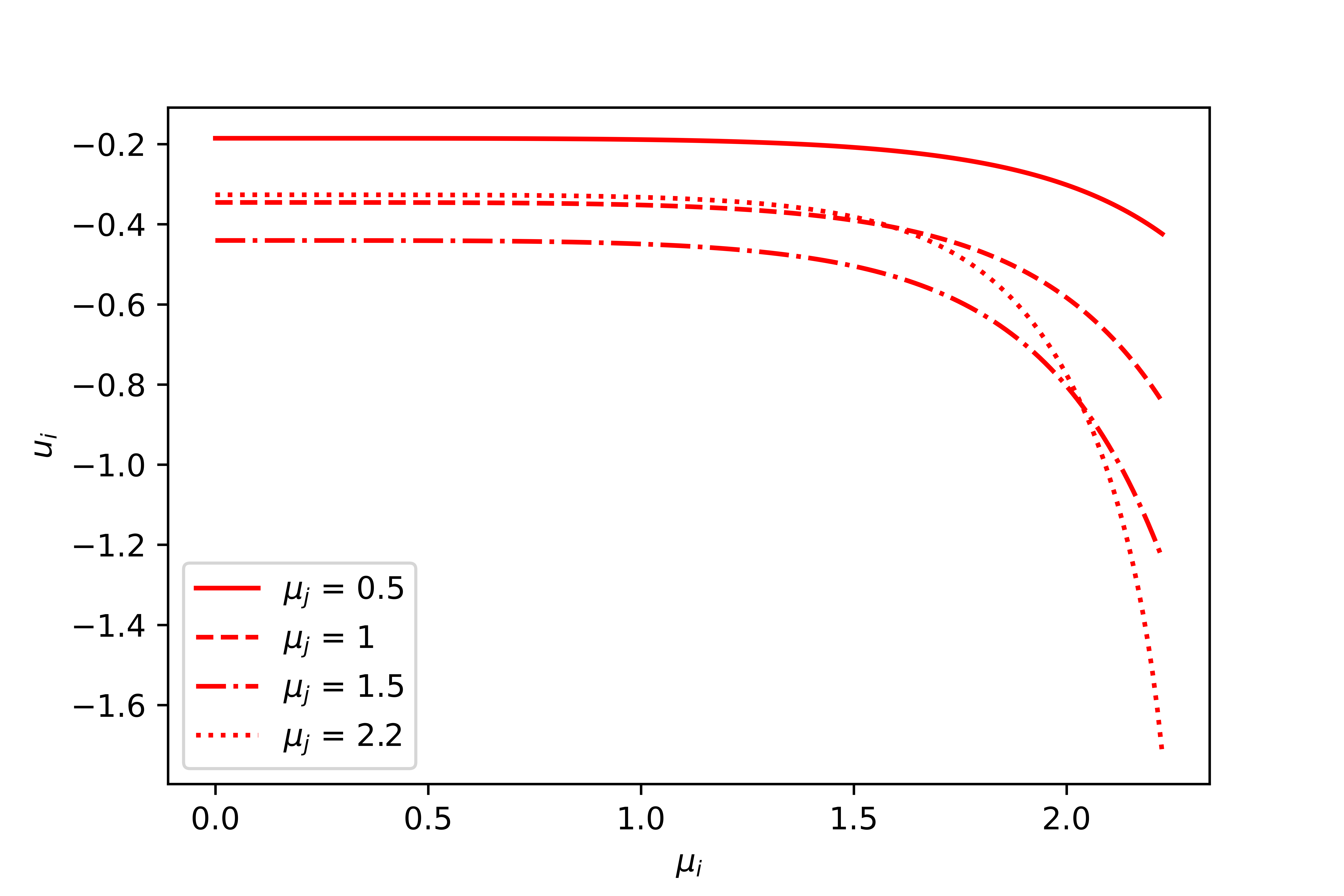}
  \caption{$\mu^2+\sigma^2=5$}
\label{fig:util normal tradeoff five}
\end{center}
\end{subfigure}
\caption{
Expected utility curves
of player $i$ against
different strategy $\mu_j$
from opponent player 
$j$ with normal distribution. Tradeoffs 
are $\mu^2+\sigma^2=0.5$ and 
$\mu^2+\sigma^2=5$, respectively.
}
\label{fig:expected utility different tradeoff}
\end{figure*}

\section{The Ex Ante Game} \label{sec:non-constant-tradeoff-general-framework}
Having established players' preferences for lower bias in the interim stage, we now turn to analyze the ex ante stage. In our analysis thus far we assumed that $x$ is fixed, and were concerned with the effects of different algorithms' estimates of $f(x)$. We now take a step back to the ex ante stage of the general framework, where the choice of algorithm is made prior to observing the realized feature vector $x$.

Fix a learning problem as described in Section~\ref{sec:general-framework}: a function $f$ and a distribution $\pi$ over $(x, f(x))$ pairs. For simplicity, we will suppose the there is a fixed class of available estimators indexed by a regularization parameter, namely $\{\hat f_\lambda\}_{\lambda\in\R_+}$. That is, when players choose algorithms, their decision consists of choosing a parameter $\lambda$, thus yielding the estimator $\hat f_\lambda$.
The ex ante game is then the following:
\begin{enumerate}
    \item Each player $i$ obtains an independent dataset $D_i$ and chooses a regularization parameter $\lambda_i$.
    \item A pair $(x,f(x))$ is chosen according to $\pi$.
    \item Utilities are realized:
$$u_i\left(x;\hat f_{\lambda_i} (x; D_i),\hat f_{\lambda_j} (x;D_j)\right) =  1- \left(\hat f_{\lambda_i}(x;D_i) - y\right)^2$$
if $\abs{\hat f_{\lambda_i}(x;D_i) - f(x)}<\abs{\hat f_{\lambda_j}(x;D_j) - f(x)}$, and 
$$u_i\left(x;\hat f_{\lambda_i} (x; D_i),\hat f_{\lambda_j} (x;D_j)\right) =0$$
otherwise.\footnote{In the 0-measure event that predictions are identical, suppose ties are broken uniformly at random.}
\end{enumerate}

Let $\lambda^*$ be the regularization parameter for which the Bayes risk  is minimal, namely $\lambda^* = \arg\min_\lambda R_B(\hat f_\lambda)$. In particular, for this choice of $\lambda$, the sum of squared-bias and variance under $\hat f_\lambda$, in expectation over $x$, is minimal.

Denote $\hat f^* = \hat f_{\lambda^*}$. For each $x$, this optimal estimator $\hat f^*$ yields 
a particular (and possibly different) bias $\mu_x=\mathrm{Bias}_D[\hat f^*(x;D)]$ and variance $\sigma_x^2=\mathrm{Var}[\hat f^*(x; D)]$. Let $T_x = \mu_x^2+\sigma_x^2$ be the bias-variance tradeoff at $x$ yielded by $\hat f^*$. It is straightforward to see that, in the one-player game where the utility is one minus the squared-error, this choice of $\lambda^*$, and so of $\hat f^*$, maximizes expected utility. Once again, we will show that under competition players can profitably deviate by lowering $\lambda$ and choosing an estimator with lower bias (and higher total error).

Our result, Theorem~\ref{thm:general-framework-non-constant-tradeoff} below, relies on five assumptions that are sufficient to guarantee the preference for lower bias. Before stating the theorem we list the assumptions and their theoretical, numerical, and empirical justifications.

\begin{itemize}
    \item[A1] For each $x$ and every $\lambda$, the distribution of $\hat f_\lambda(x;D)$ is normal. Note that this is a generalization of the normality assumption we made in analysis of the interim stage, with formal justification in Section~\ref{sec:normal-motivation} and empirical demonstration in Appendix~\ref{apx:empirical-validation}.
    \item[A2] The result of Theorem~\ref{thm:lower-bias normal under tradeoff} holds also when the tradeoff between squared-bias and variance is not 1: For any $T$, if both players choose a distribution from the class $\X=\{\sigma Z + \mu: \mu^2+\sigma^2=T\}$ then lower bias is a dominant strategy. Note that our simulations on different tradeoffs in Section~\ref{sec:numerical} suggest that this is true.
    \item[A3] Lowering $\lambda$ simultaneously decreases $\mathrm{Bias}_D[\hat f_\lambda(x;D)]$ for every $x$. Formally,
    $$\frac{d}{d\lambda}\mathrm{Bias}_D[\hat f_\lambda(x;D)]>0$$
    for every $x$.
    We note that this assumption is provably true in the specific case where each $\hat f_\lambda$ is a ridge regression with regularization parameter $\lambda$---see Proposition~\ref{prop:multidimensional-ridge}.
       \item[A4] For each $x$, let $g_x(\lambda)$ be the sum of squared-bias and variance at $x$ and $\lambda$:
    $$g_x(\lambda) = \mathrm{Bias}_D[\hat f_\lambda(x;D)]^2 + \mathrm{Var}[\hat f_\lambda(x;D)].$$
    Then $\E_x[g_x(\lambda)]$ is strictly convex and differentiable.
    \item[A5]  Let $X(\mu^2, T)$ denote an independent normal random variable with mean $\mu\geq 0$ and variance $\sigma^2$ such that $\mu^2+\sigma^2=T$. Then at the optimal $\lambda^*$, the random variables
     $$\frac{dg_x(\lambda^*)}{d\lambda}~~\mbox{and}~~\frac{\partial u_i\left(x;X(\mu_x^2,T) ,X(\mu_x^2,T_x)\right)}{\partial T},$$ where the randomness is over $x$ and the $X(\cdot,\ \cdot)$ normal distributions, are (weakly) negatively correlated.
    We note that a sufficient condition for this assumption to hold is that for every $x$, the minimal risk is achieved at $\mu_x$. In particular, we prove that this is true for ridge regression with one-dimensional feature vectors---see  Proposition~\ref{prop:one-dimensional-ridge-A5}. In addition, in Appendix~\ref{apx:empirical-validation} we validate assumption A5 empirically in the same datasets and games from Section~\ref{sec:empirical}.
\end{itemize}

We now state and prove our main theorem for this section.
Let $\lambda^*$ be the optimal regularization parameter and $\hat f^*$ the resulting estimator. Then when player $j$ chooses $\hat f^*$, player $i$ can profitably deviate by choosing $\hat f_\lambda$ with $\lambda <\lambda^*$. Formally:
\begin{theorem}\label{thm:general-framework-non-constant-tradeoff}
Suppose assumptions A1-A5 hold.
Then 
$$\frac{d}{d\lambda} \E\left[u_i\left(x; \hat f_{\lambda^*}(x; D_i),\hat f^*(x;D_j)\right)\right]<0.$$
\end{theorem}
\ifEC
\noindent The proof follows similar lines to that of Theorem~\ref{thm:non-constant-tradeoff}, and appears in Appendix~\ref{apx:general-framework-non-constant-tradeoff}.
\else
\begin{proof}
By assumption A1, $$\hat f^*(x;D_j)=X(\mu_x^2,T_x),$$
a normal distribution with squared-bias $\mu_x^2$ and total error $T_x=g_x(\lambda^*)$.
Denote by $g(x;\mu^2)$ the total error of $\hat f^*$ when the feature vector is $x$ and the squared-bias is $\mu^2$.
Then, as in Equation (\ref{eqn:derivative}), for each fixed $x$ we can write
\begin{align*}&\frac{du_i\left(x;X(\mu^2,g(x;\mu^2)) ,X(\mu_x^2,T_x)\right)}{d\mu^2}\\
&~~~=\frac{\partial u_i\left(x;X(\mu^2,T) ,X(\mu_x^2,T_x)\right)}{\partial\mu^2} + \frac{dg(x;\mu^2)}{d\mu^2} \cdot \frac{\partial u_i\left(X(\mu^2,T) ,X(\mu_x^2,T_x)\right)}{\partial T},
\end{align*}
where $T=g(x;\mu^2)$.
Thus,
\begin{align*}&\frac{du_i\left(x; \hat f_{\lambda^*}(x; D_i),\hat f^*(x;D_j)\right)}{d\lambda} \\
&~~~=\frac{d\mu^2}{d\lambda}\cdot\frac{\partial u_i\left(x;X(\mu^2,T_x) ,X(\mu_x^2,T_x)\right)}{\partial\mu^2} + \frac{d\mu^2}{d\lambda}\cdot\frac{dg(x;\mu_x^2)}{d\mu^2} \cdot \frac{\partial u_i\left(X(\mu_x^2,T) ,X(\mu_x^2,T_x)\right)}{\partial T}\\
&~~~=\frac{d\mu^2}{d\lambda}\cdot\frac{\partial u_i\left(x;X(\mu^2,T_x) ,X(\mu_x^2,T_x)\right)}{\partial\mu^2} + \frac{dg_x(\lambda^*)}{d\lambda} \cdot \frac{\partial u_i\left(X(\mu_x^2,T) ,X(\mu_x^2,T_x)\right)}{\partial T}.
\end{align*}

By assumption A3,
$$\frac{d\mu^2}{d\lambda}>0,$$
and by assumption A2, 
$$\frac{\partial u_i\left(x;X(\mu^2,T_x) ,X(\mu_x^2,T_x)\right)}{\partial\mu^2} < 0.$$

By assumption A5, 
\begin{align*}
&\E\left[\frac{dg_x(\lambda^*)}{d\lambda} \cdot \frac{\partial u_i\left(X(\mu^2,T) ,X(\mu_x^2,T_x)\right)}{\partial T}\right]\\
&~~~\leq \E\left[\frac{dg_x(\lambda^*)}{d\lambda}\right] \cdot \E\left[\frac{\partial u_i\left(X(\mu^2,T) ,X(\mu_x^2,T_x)\right)}{\partial T}\right].
\end{align*}

Finally, since $\lambda^*$ minimizes the Bayes risk, assumption A4 implies that
$$\E\left[\frac{dg_x(\lambda^*)}{d\lambda}\right] = \frac{d}{d\lambda}\E[g_x(\lambda^*)]=0.$$

Putting these together implies the claimed inequality.
\end{proof}
\fi

Theorem~\ref{thm:general-framework-non-constant-tradeoff} states that, when both players choose  $\lambda^*$ to minimize the Bayes risk, then each player can profitably deviate by choosing a lower $\lambda$. Thus, the action profile in which both players choose $\lambda^*$ is not a Nash equilibrium. Under a stronger version of assumption A2, we can show that lowering $\lambda$ below $\lambda^*$ is strictly beneficial for player $i$ for {\em every} choice $\lambda_j$ of player $j$. In particular, this holds under assumption A2':
\begin{itemize}
\item[A2'] The result of Theorem~\ref{thm:lower-bias normal under tradeoff} holds also when the tradeoff between squared-bias and variance is different for the two players: For any $T_1$ and $T_2$, if each player $i$ chooses a distribution from the class $\X_i=\{\sigma Z + \mu: \mu^2+\sigma^2=T_i\}$ then lower bias is a dominant strategy.
\end{itemize}

\section{Empirical Results}
\label{sec:empirical}

In this
section we test our insight on the preference of variance
over bias in real datasets.

\paragraph{Data description}
We consider two
widely-used benchmark datasets, chosen mostly for convenience and
because they are 
standard datasets often used to test new ideas in
machine learning.  
We implement a similar regression problem 
(discussed below) 
separately for the two datasets.

The first dataset is
the \emph{California housing
prices data}
from the 1990 Census, a dataset first utilized by
\citet{pace1997sparse} and included in the Python Scikit-learn
library. 
It includes 20640 observations on 9
features, such as number of rooms, median income, etc.  
From
this data we set up a regression problem of predicting the order of
magnitude of the median house prices (i.e., its logarithm) from the
other features.  For features where orders of magnitude may be more
relevant than absolute magnitude, we include both the feature and its
logarithm.

The second dataset is the 
\emph{wine quality data}
designed by \citet{cortez2009modeling}.
It records 12 features 
for 6497 observations,
where 11 features are physicochemical
properties and the last feature is 
the quality of wine.
The dataset is partitioned into
two sub-datasets -- 
one for red wine 
with 1599 observations 
and one for white wine
with 4898 observations.
We set up the regression problems
of predicting the quality of wine 
from the other physicochemical properties 
for the two sub-datasets separately.

\paragraph{Experiment setup}
We analyze a game between two players, 
each of whom gets roughly half
the data and employs a ridge regression 
in order to predict 
median housing prices 
in
the California housing prices data
(resp.\ 
qualities of wine in 
the wine quality data).
In both datasets,
various features 
tend to
be highly correlated, 
which makes the design matrix
(covariance matrix of the feature vector) nearly 
non-invertible. 
This, in turn, makes the estimated
feature weights in the linear regression unstable,
largely varying from sample to sample. 
A common tool used to stabilize the estimated feature weights is linear regression
with regularization, namely, ridge regression.
Running a ridge regression involves setting a regularization parameter
$\ridge$, where $\ridge=0$ is the standard ordinary least squares
(OLS) regression, and increasing $\ridge$ leads to a model with
greater bias but lower variance of the resulting predictions. Thus, we will let the players play
the game for various values of $\ridge$.  Further discussion of ridge
regression is given in \Cref{sec:normal-motivation}.

The design of the game is as follows:
\begin{enumerate}
\item The dataset is uniformly partitioned with 
10\% as the test set
and 90\% as the training set.
Each player 
uniformly draws 50\% 
of the training set 
as her own training set.
\item Players choose regularization parameters $\ridge_1$ and
  $\ridge_2$, respectively, and perform ridge regression on their own
  training sets.
\item The expected payoffs of players are given by the bias-variance
  game (\Cref{BVG} and the description in \Cref{sec:non-constant-tradeoff-general-framework}) evaluated on the test set.  On each point in the
  test set, the player whose prediction is closest to the true label
  wins and obtains payoff of one minus the squared error of the
  prediction; the other player's payoff is zero.  Each player's
  payoff in the game is the average payoff over the points in the test
  set.
\end{enumerate}

\paragraph{Experiment results}
We used Monte Carlo simulations to approximate expected utilities of
the players. We repeat steps 1-3 above by independently drawing
training and validation samples 100 times for the California housing prices data
(resp.\ 10000 times for the wine quality data)
and computing payoffs that
result from a given pair of choices of regularization parameters
$(\ridge_1,\ridge_2)$.
We contrast the optimal choice of regularization parameter in the
single-player setting with the two-player game.

%

%

For the California housing prices data,
Player~1's utility in the
single-player setting is depicted 
in \Cref{fig:empirical_single}
as a function of the player's regularization
parameter, ranging from $\ridge_1=0$ to $\ridge_1=1000$.
The optimal
parameter choice is $\ridge_1\approx 100$.  In \Cref{fig:empirical_two},
Player~1's utility in the two-player game is plotted as a function of
$\ridge_1$ for fixed values of Player~2's regularization parameter
$\ridge_2$.  
For all choices of $\ridge_2$, Player~1's utility is
optimized by selecting $\ridge_1 = 0$.  

\begin{figure*}
\begin{subfigure}{.5\linewidth}
\begin{center}
\begin{tikzpicture}[scale=0.45]\begin{axis}[height=11cm,width=18cm,ymin=.6,ymax=.7,xmin=0,xmax=1000,tick pos=left]
\addplot [draw = blue, line width=0.3mm] coordinates {
(0, 0.6339)
(10, 0.6561)
(20, 0.6643)
(30, 0.6688)
(40, 0.6715)
(50, 0.6732)
(60, 0.6743)
(70, 0.6751)
(80, 0.6755)
(90, 0.6758)
(100, 0.6758)
(110, 0.6758)
(120, 0.6757)
(130, 0.6755)
(140, 0.6753)
(150, 0.6750)
(160, 0.6746)
(170, 0.6743)
(180, 0.6739)
(190, 0.6735)
(200, 0.6730)
(210, 0.6726)
(220, 0.6721)
(230, 0.6717)
(240, 0.6712)
(250, 0.6707)
(260, 0.6702)
(270, 0.6697)
(280, 0.6692)
(290, 0.6687)
(300, 0.6682)
(310, 0.6677)
(320, 0.6672)
(330, 0.6667)
(340, 0.6661)
(350, 0.6656)
(360, 0.6651)
(370, 0.6646)
(380, 0.6641)
(390, 0.6636)
(400, 0.6631)
(410, 0.6626)
(420, 0.6621)
(430, 0.6616)
(440, 0.6611)
(450, 0.6606)
(460, 0.6601)
(470, 0.6596)
(480, 0.6591)
(490, 0.6586)
(500, 0.6581)
(510, 0.6577)
(520, 0.6572)
(530, 0.6567)
(540, 0.6562)
(550, 0.6558)
(560, 0.6553)
(570, 0.6548)
(580, 0.6544)
(590, 0.6539)
(600, 0.6535)
(610, 0.6530)
(620, 0.6526)
(630, 0.6521)
(640, 0.6517)
(650, 0.6512)
(660, 0.6508)
(670, 0.6503)
(680, 0.6499)
(690, 0.6495)
(700, 0.6490)
(710, 0.6486)
(720, 0.6482)
(730, 0.6478)
(740, 0.6474)
(750, 0.6469)
(760, 0.6465)
(770, 0.6461)
(780, 0.6457)
(790, 0.6453)
(800, 0.6449)
(810, 0.6445)
(820, 0.6441)
(830, 0.6437)
(840, 0.6433)
(850, 0.6429)
(860, 0.6425)
(870, 0.6421)
(880, 0.6417)
(890, 0.6413)
(900, 0.6409)
(910, 0.6406)
(920, 0.6402)
(930, 0.6398)
(940, 0.6394)
(950, 0.6391)
(960, 0.6387)
(970, 0.6383)
(980, 0.6379)
(990, 0.6376)
};
\end{axis}\end{tikzpicture}
  \caption{Single Player Utility}
\label{fig:empirical_single}
\end{center}
\end{subfigure}
\begin{subfigure}{.5\linewidth}
\begin{center}
\begin{tikzpicture}[scale=0.45]\begin{axis}[height=11cm,width=18cm,ymin=.25,ymax=.5,xmin=0,xmax=1000]
\addplot [draw=blue, dotted, line width=0.3mm,tick pos=left] coordinates {
(0, 0.3461)
(10, 0.3365)
(20, 0.3282)
(30, 0.3232)
(40, 0.3197)
(50, 0.3173)
(60, 0.3157)
(70, 0.3149)
(80, 0.3146)
(90, 0.3141)
(100, 0.3134)
(110, 0.3129)
(120, 0.3126)
(130, 0.3122)
(140, 0.3120)
(150, 0.3120)
(160, 0.3116)
(170, 0.3113)
(180, 0.3112)
(190, 0.3109)
(200, 0.3110)
(210, 0.3109)
(220, 0.3108)
(230, 0.3109)
(240, 0.3109)
(250, 0.3106)
(260, 0.3106)
(270, 0.3105)
(280, 0.3105)
(290, 0.3106)
(300, 0.3105)
(310, 0.3105)
(320, 0.3105)
(330, 0.3104)
(340, 0.3105)
(350, 0.3104)
(360, 0.3103)
(370, 0.3095)
(380, 0.3094)
(390, 0.3087)
(400, 0.3087)
(410, 0.3082)
(420, 0.3081)
(430, 0.3080)
(440, 0.3079)
(450, 0.3079)
(460, 0.3078)
(470, 0.3078)
(480, 0.3076)
(490, 0.3076)
(500, 0.3076)
(510, 0.3077)
(520, 0.3076)
(530, 0.3075)
(540, 0.3074)
(550, 0.3073)
(560, 0.3073)
(570, 0.3070)
(580, 0.3070)
(590, 0.3069)
(600, 0.3068)
(610, 0.3068)
(620, 0.3068)
(630, 0.3067)
(640, 0.3065)
(650, 0.3065)
(660, 0.3065)
(670, 0.3065)
(680, 0.3064)
(690, 0.3063)
(700, 0.3062)
(710, 0.3061)
(720, 0.3059)
(730, 0.3058)
(740, 0.3058)
(750, 0.3056)
(760, 0.3056)
(770, 0.3056)
(780, 0.3055)
(790, 0.3054)
(800, 0.3053)
(810, 0.3053)
(820, 0.3051)
(830, 0.3051)
(840, 0.3049)
(850, 0.3047)
(860, 0.3046)
(870, 0.3046)
(880, 0.3045)
(890, 0.3044)
(900, 0.3043)
(910, 0.3043)
(920, 0.3042)
(930, 0.3041)
(940, 0.3040)
(950, 0.3038)
(960, 0.3038)
(970, 0.3037)
(980, 0.3035)
(990, 0.3034)
};
\addlegendentry{$\lambda_2 = 0$}
\addplot [draw=blue, densely dashdotted, line width=0.3mm] coordinates {
(0, 0.3924)
(10, 0.3904)
(20, 0.3888)
(30, 0.3858)
(40, 0.3823)
(50, 0.3766)
(60, 0.3697)
(70, 0.3620)
(80, 0.3531)
(90, 0.3444)
(100, 0.3361)
(110, 0.3282)
(120, 0.3221)
(130, 0.3176)
(140, 0.3138)
(150, 0.3110)
(160, 0.3091)
(170, 0.3077)
(180, 0.3065)
(190, 0.3061)
(200, 0.3055)
(210, 0.3049)
(220, 0.3043)
(230, 0.3038)
(240, 0.3037)
(250, 0.3030)
(260, 0.3026)
(270, 0.3023)
(280, 0.3020)
(290, 0.3015)
(300, 0.3014)
(310, 0.3010)
(320, 0.3010)
(330, 0.3010)
(340, 0.3009)
(350, 0.3008)
(360, 0.3004)
(370, 0.2999)
(380, 0.2993)
(390, 0.2993)
(400, 0.2992)
(410, 0.2991)
(420, 0.2987)
(430, 0.2987)
(440, 0.2986)
(450, 0.2987)
(460, 0.2987)
(470, 0.2985)
(480, 0.2983)
(490, 0.2983)
(500, 0.2982)
(510, 0.2983)
(520, 0.2979)
(530, 0.2980)
(540, 0.2979)
(550, 0.2978)
(560, 0.2976)
(570, 0.2976)
(580, 0.2975)
(590, 0.2975)
(600, 0.2974)
(610, 0.2971)
(620, 0.2969)
(630, 0.2968)
(640, 0.2967)
(650, 0.2966)
(660, 0.2965)
(670, 0.2965)
(680, 0.2964)
(690, 0.2964)
(700, 0.2963)
(710, 0.2961)
(720, 0.2960)
(730, 0.2961)
(740, 0.2959)
(750, 0.2957)
(760, 0.2957)
(770, 0.2956)
(780, 0.2956)
(790, 0.2955)
(800, 0.2955)
(810, 0.2953)
(820, 0.2951)
(830, 0.2952)
(840, 0.2950)
(850, 0.2949)
(860, 0.2948)
(870, 0.2946)
(880, 0.2945)
(890, 0.2944)
(900, 0.2943)
(910, 0.2942)
(920, 0.2941)
(930, 0.2941)
(940, 0.2941)
(950, 0.2940)
(960, 0.2940)
(970, 0.2939)
(980, 0.2937)
(990, 0.2936)
};
\addlegendentry{$\lambda_2 = 100$}
\addplot [draw=blue, line width=0.3mm] coordinates {
(0, 0.4079)
(10, 0.4064)
(20, 0.4050)
(30, 0.4039)
(40, 0.4023)
(50, 0.4011)
(60, 0.3999)
(70, 0.3989)
(80, 0.3978)
(90, 0.3967)
(100, 0.3953)
(110, 0.3940)
(120, 0.3924)
(130, 0.3911)
(140, 0.3896)
(150, 0.3875)
(160, 0.3848)
(170, 0.3809)
(180, 0.3772)
(190, 0.3724)
(200, 0.3669)
(210, 0.3603)
(220, 0.3539)
(230, 0.3465)
(240, 0.3391)
(250, 0.3320)
(260, 0.3257)
(270, 0.3197)
(280, 0.3143)
(290, 0.3101)
(300, 0.3067)
(310, 0.3036)
(320, 0.3012)
(330, 0.2987)
(340, 0.2973)
(350, 0.2960)
(360, 0.2950)
(370, 0.2937)
(380, 0.2923)
(390, 0.2915)
(400, 0.2909)
(410, 0.2898)
(420, 0.2894)
(430, 0.2886)
(440, 0.2881)
(450, 0.2876)
(460, 0.2873)
(470, 0.2870)
(480, 0.2867)
(490, 0.2864)
(500, 0.2863)
(510, 0.2862)
(520, 0.2859)
(530, 0.2857)
(540, 0.2856)
(550, 0.2852)
(560, 0.2849)
(570, 0.2847)
(580, 0.2842)
(590, 0.2841)
(600, 0.2839)
(610, 0.2837)
(620, 0.2836)
(630, 0.2836)
(640, 0.2835)
(650, 0.2833)
(660, 0.2831)
(670, 0.2831)
(680, 0.2829)
(690, 0.2827)
(700, 0.2827)
(710, 0.2827)
(720, 0.2826)
(730, 0.2825)
(740, 0.2824)
(750, 0.2823)
(760, 0.2818)
(770, 0.2820)
(780, 0.2819)
(790, 0.2818)
(800, 0.2816)
(810, 0.2815)
(820, 0.2813)
(830, 0.2812)
(840, 0.2810)
(850, 0.2810)
(860, 0.2809)
(870, 0.2806)
(880, 0.2806)
(890, 0.2804)
(900, 0.2803)
(910, 0.2803)
(920, 0.2803)
(930, 0.2801)
(940, 0.2800)
(950, 0.2799)
(960, 0.2799)
(970, 0.2798)
(980, 0.2795)
(990, 0.2794)
};
\addlegendentry{$\lambda_2 = 250$}
\addplot [draw=red, dotted, line width=0.3mm] coordinates {
(0, 0.4216)
(10, 0.4205)
(20, 0.4192)
(30, 0.4182)
(40, 0.4170)
(50, 0.4159)
(60, 0.4152)
(70, 0.4145)
(80, 0.4135)
(90, 0.4126)
(100, 0.4118)
(110, 0.4109)
(120, 0.4101)
(130, 0.4095)
(140, 0.4086)
(150, 0.4080)
(160, 0.4074)
(170, 0.4072)
(180, 0.4065)
(190, 0.4056)
(200, 0.4049)
(210, 0.4043)
(220, 0.4039)
(230, 0.4029)
(240, 0.4022)
(250, 0.4014)
(260, 0.4007)
(270, 0.3998)
(280, 0.3990)
(290, 0.3979)
(300, 0.3968)
(310, 0.3957)
(320, 0.3946)
(330, 0.3933)
(340, 0.3922)
(350, 0.3907)
(360, 0.3885)
(370, 0.3861)
(380, 0.3838)
(390, 0.3813)
(400, 0.3777)
(410, 0.3739)
(420, 0.3697)
(430, 0.3647)
(440, 0.3597)
(450, 0.3544)
(460, 0.3489)
(470, 0.3428)
(480, 0.3368)
(490, 0.3308)
(500, 0.3249)
(510, 0.3199)
(520, 0.3147)
(530, 0.3098)
(540, 0.3056)
(550, 0.3019)
(560, 0.2984)
(570, 0.2952)
(580, 0.2927)
(590, 0.2898)
(600, 0.2875)
(610, 0.2853)
(620, 0.2836)
(630, 0.2818)
(640, 0.2806)
(650, 0.2793)
(660, 0.2781)
(670, 0.2769)
(680, 0.2761)
(690, 0.2750)
(700, 0.2743)
(710, 0.2736)
(720, 0.2728)
(730, 0.2720)
(740, 0.2714)
(750, 0.2710)
(760, 0.2706)
(770, 0.2700)
(780, 0.2696)
(790, 0.2693)
(800, 0.2690)
(810, 0.2688)
(820, 0.2684)
(830, 0.2680)
(840, 0.2678)
(850, 0.2676)
(860, 0.2671)
(870, 0.2668)
(880, 0.2667)
(890, 0.2663)
(900, 0.2662)
(910, 0.2658)
(920, 0.2657)
(930, 0.2656)
(940, 0.2654)
(950, 0.2653)
(960, 0.2651)
(970, 0.2650)
(980, 0.2647)
(990, 0.2647)
};
\addlegendentry{$\lambda_2 = 500$}
\addplot [draw=red, line width=0.3mm] coordinates {
(0, 0.4375)
(10, 0.4367)
(20, 0.4356)
(30, 0.4347)
(40, 0.4340)
(50, 0.4332)
(60, 0.4323)
(70, 0.4313)
(80, 0.4304)
(90, 0.4297)
(100, 0.4290)
(110, 0.4284)
(120, 0.4279)
(130, 0.4272)
(140, 0.4267)
(150, 0.4260)
(160, 0.4255)
(170, 0.4249)
(180, 0.4242)
(190, 0.4237)
(200, 0.4231)
(210, 0.4226)
(220, 0.4223)
(230, 0.4219)
(240, 0.4217)
(250, 0.4212)
(260, 0.4207)
(270, 0.4202)
(280, 0.4196)
(290, 0.4192)
(300, 0.4187)
(310, 0.4182)
(320, 0.4177)
(330, 0.4175)
(340, 0.4171)
(350, 0.4166)
(360, 0.4161)
(370, 0.4156)
(380, 0.4150)
(390, 0.4144)
(400, 0.4140)
(410, 0.4134)
(420, 0.4130)
(430, 0.4125)
(440, 0.4120)
(450, 0.4115)
(460, 0.4111)
(470, 0.4108)
(480, 0.4103)
(490, 0.4101)
(500, 0.4096)
(510, 0.4090)
(520, 0.4085)
(530, 0.4081)
(540, 0.4076)
(550, 0.4070)
(560, 0.4064)
(570, 0.4060)
(580, 0.4054)
(590, 0.4046)
(600, 0.4037)
(610, 0.4031)
(620, 0.4025)
(630, 0.4018)
(640, 0.4012)
(650, 0.4003)
(660, 0.3996)
(670, 0.3988)
(680, 0.3979)
(690, 0.3966)
(700, 0.3954)
(710, 0.3945)
(720, 0.3931)
(730, 0.3922)
(740, 0.3908)
(750, 0.3898)
(760, 0.3883)
(770, 0.3867)
(780, 0.3853)
(790, 0.3837)
(800, 0.3819)
(810, 0.3802)
(820, 0.3782)
(830, 0.3764)
(840, 0.3738)
(850, 0.3715)
(860, 0.3688)
(870, 0.3664)
(880, 0.3632)
(890, 0.3601)
(900, 0.3569)
(910, 0.3533)
(920, 0.3494)
(930, 0.3457)
(940, 0.3417)
(950, 0.3385)
(960, 0.3351)
(970, 0.3309)
(980, 0.3268)
(990, 0.3225)
};
\addlegendentry{$\lambda_2 = 1000$}
\end{axis}\end{tikzpicture}
  \caption{Two Player Utility}
\label{fig:empirical_two}
\end{center}
\end{subfigure}
\label{fig:ridge-utilities}
\caption{
Results for the California housing prices data.
The player's utilities under ridge regression in the
  single-player and two-player environment.  In both environments
  Player 1's utility is plotted as a function of regularizer parameter
  $\ridge_1$.  In the two player figure, Player 1's utility is shown
  for a range of values of Player 2's regularizer parameter
  $\ridge_2$.  The single player utility is optimized at $\ridge_1 \approx 100 >
  0$; the two-player utility is optimized at $\ridge_1 = 0$ for all
  $\ridge_2$.}
\end{figure*}

For the red wine quality data,
Player~1's utility in the
single-player setting is depicted 
in \Cref{fig:empirical_single wine}
as a function of the player's regularization
parameter, ranging from $\ridge_1=0$ to $\ridge_1=2$.
The optimal
parameter choice is $\ridge_1\approx 0.26$.  In \Cref{fig:empirical_two wine},
Player~1's utility in the two-player game is plotted as a function of
$\ridge_1$ for fixed values of Player~2's regularization parameter
$\ridge_2$.  
For all choices of $\ridge_2$, Player~1's utility is
optimized by selecting $\ridge_1 = 0$.\footnote{
Similar results hold for 
the white wine quality data. In particular, 
Player~1's utility 
is optimized by selecting $\ridge_1 = 0$
in the two-player game.}

\begin{figure*}
\begin{subfigure}{.49\linewidth}
\begin{center}
\begin{tikzpicture}[scale=0.45]\begin{axis}[height=11cm,width=18cm,ymin=.567,ymax=.57,xmin=0,xmax=2,y tick label style={
        /pgf/number format/.cd,
        fixed,
        fixed zerofill,
        precision=4,
        /tikz/.cd
    },tick pos=left]
\addplot [draw = blue, line width=0.3mm] coordinates {
(0.01, 0.5691522960447066)
(0.02, 0.5691806066185896)
(0.03, 0.5692023314914836)
(0.04, 0.5692214244292393)
(0.05, 0.5692387185990803)
(0.060000000000000005, 0.5692545319091911)
(0.06999999999999999, 0.5692690336238644)
(0.08, 0.5692823354667642)
(0.09, 0.5692945219783618)
(0.09999999999999999, 0.5693056627328125)
(0.11, 0.5693158179734296)
(0.12, 0.56932504150605)
(0.13, 0.5693333823232857)
(0.14, 0.5693408855900725)
(0.15000000000000002, 0.5693475932846161)
(0.16, 0.569353544641906)
(0.17, 0.5693587764779715)
(0.18000000000000002, 0.5693633234385064)
(0.19, 0.5693672181973384)
(0.2, 0.5693704916202187)
(0.21000000000000002, 0.5693731729036562)
(0.22, 0.5693752896951785)
(0.23, 0.5693768681992397)
(0.24000000000000002, 0.5693779332717646)
(0.25, 0.5693785085054132)
(0.26, 0.5693786163071157)
(0.27, 0.5693782779690386)
(0.28, 0.5693775137338765)
(0.29000000000000004, 0.569376342855167)
(0.3, 0.5693747836532065)
(0.31, 0.5693728535670269)
(0.32, 0.5693705692028236)
(0.33, 0.5693679463791745)
(0.34, 0.5693650001693153)
(0.35000000000000003, 0.5693617449407318)
(0.36000000000000004, 0.5693581943922912)
(0.37, 0.5693543615890879)
(0.38, 0.5693502589951912)
(0.39, 0.5693458985044411)
(0.4, 0.5693412914694332)
(0.41000000000000003, 0.5693364487288208)
(0.42000000000000004, 0.5693313806330438)
(0.43, 0.5693260970686091)
(0.44, 0.5693206074809859)
(0.45, 0.5693149208962387)
(0.46, 0.5693090459414529)
(0.47000000000000003, 0.5693029908640541)
(0.48000000000000004, 0.5692967635500685)
(0.49, 0.5692903715413974)
(0.5, 0.5692838220521793)
(0.51, 0.5692771219842713)
(0.52, 0.5692702779419219)
(0.53, 0.5692632962456654)
(0.54, 0.5692561829455135)
(0.55, 0.5692489438334483)
(0.56, 0.56924158445528)
(0.5700000000000001, 0.5692341101219024)
(0.5800000000000001, 0.5692265259199789)
(0.59, 0.5692188367220913)
(0.6, 0.5692110471963734)
(0.61, 0.5692031618156783)
(0.62, 0.569195184866284)
(0.63, 0.5691871204561737)
(0.64, 0.5691789725229144)
(0.65, 0.5691707448411447)
(0.66, 0.5691624410297192)
(0.67, 0.5691540645584804)
(0.68, 0.5691456187547427)
(0.6900000000000001, 0.5691371068094379)
(0.7000000000000001, 0.569128531782983)
(0.7100000000000001, 0.5691198966108806)
(0.72, 0.569111204109044)
(0.73, 0.5691024569788746)
(0.74, 0.5690936578121156)
(0.75, 0.5690848090954731)
(0.76, 0.5690759132150276)
(0.77, 0.5690669724604444)
(0.78, 0.5690579890289947)
(0.79, 0.5690489650293904)
(0.8, 0.5690399024854547)
(0.81, 0.5690308033396161)
(0.8200000000000001, 0.5690216694562589)
(0.8300000000000001, 0.5690125026249178)
(0.8400000000000001, 0.569003304563336)
(0.85, 0.5689940769203818)
(0.86, 0.5689848212788507)
(0.87, 0.5689755391581259)
(0.88, 0.5689662320167423)
(0.89, 0.5689569012548332)
(0.9, 0.5689475482164585)
(0.91, 0.5689381741918627)
(0.92, 0.5689287804196046)
(0.93, 0.5689193680886131)
(0.9400000000000001, 0.5689099383401612)
(0.9500000000000001, 0.5689004922697396)
(0.9600000000000001, 0.568891030928869)
(0.97, 0.5688815553268199)
(0.98, 0.568872066432282)
(0.99, 0.5688625651749357)
(1.0, 0.5688530524469909)
(1.01, 0.5688435291046298)
(1.02, 0.5688339959694237)
(1.03, 0.5688244538296611)
(1.04, 0.568814903441641)
(1.05, 0.5688053455309117)
(1.06, 0.568795780793441)
(1.07, 0.5687862098967762)
(1.08, 0.5687766334811142)
(1.09, 0.568767052160366)
(1.1, 0.5687574665231478)
(1.11, 0.5687478771337554)
(1.12, 0.5687382845330942)
(1.1300000000000001, 0.5687286892395608)
(1.1400000000000001, 0.5687190917499055)
(1.1500000000000001, 0.56870949254005)
(1.1600000000000001, 0.5686998920658801)
(1.17, 0.5686902907640051)
(1.18, 0.5686806890524826)
(1.19, 0.568671087331523)
(1.2, 0.5686614859841604)
(1.21, 0.5686518853769045)
(1.22, 0.5686422858603588)
(1.23, 0.5686326877698207)
(1.24, 0.5686230914258558)
(1.25, 0.5686134971348576)
(1.26, 0.5686039051895742)
(1.27, 0.5685943158696185)
(1.28, 0.568584729441967)
(1.29, 0.5685751461614357)
(1.3, 0.5685655662711244)
(1.31, 0.5685559900028678)
(1.32, 0.5685464175776563)
(1.33, 0.5685368492060342)
(1.34, 0.5685272850885057)
(1.35, 0.5685177254158948)
(1.36, 0.5685081703697217)
(1.37, 0.5684986201225473)
(1.3800000000000001, 0.5684890748383081)
(1.3900000000000001, 0.5684795346726413)
(1.4000000000000001, 0.5684699997731985)
(1.4100000000000001, 0.5684604702799454)
(1.42, 0.5684509463254471)
(1.43, 0.5684414280351524)
(1.44, 0.5684319155276607)
(1.45, 0.5684224089149804)
(1.46, 0.5684129083027749)
(1.47, 0.5684034137906122)
(1.48, 0.5683939254721813)
(1.49, 0.5683844434355302)
(1.5, 0.5683749677632748)
(1.51, 0.5683654985327993)
(1.52, 0.5683560358164716)
(1.53, 0.5683465796818167)
(1.54, 0.5683371301917168)
(1.55, 0.5683276874045854)
(1.56, 0.5683182513745374)
(1.57, 0.5683088221515569)
(1.58, 0.5682993997816578)
(1.59, 0.5682899843070383)
(1.6, 0.5682805757662281)
(1.61, 0.5682711741942359)
(1.62, 0.568261779622681)
(1.6300000000000001, 0.5682523920799318)
(1.6400000000000001, 0.568243011591239)
(1.6500000000000001, 0.5682336381788449)
(1.6600000000000001, 0.5682242718621163)
(1.6700000000000002, 0.5682149126576612)
(1.68, 0.5682055605794277)
(1.69, 0.5681962156388188)
(1.7, 0.5681868778447995)
(1.71, 0.568177547203991)
(1.72, 0.5681682237207655)
(1.73, 0.5681589073973465)
(1.74, 0.568149598233893)
(1.75, 0.5681402962285842)
(1.76, 0.568131001377709)
(1.77, 0.5681217136757388)
(1.78, 0.568112433115415)
(1.79, 0.56810315968781)
(1.8, 0.5680938933824129)
(1.81, 0.5680846341871949)
(1.82, 0.5680753820886653)
(1.83, 0.5680661370719531)
(1.84, 0.5680568991208597)
(1.85, 0.5680476682179241)
(1.86, 0.5680384443444695)
(1.87, 0.5680292274806774)
(1.8800000000000001, 0.5680200176056263)
(1.8900000000000001, 0.5680108146973482)
(1.9000000000000001, 0.5680016187328836)
(1.9100000000000001, 0.5679924296883248)
(1.9200000000000002, 0.5679832475388611)
(1.93, 0.5679740722588302)
(1.94, 0.5679649038217559)
(1.95, 0.5679557422003898)
(1.96, 0.5679465873667572)
(1.97, 0.5679374392921883)
(1.98, 0.5679282979473655)
(1.99, 0.5679191633023459)
};
\end{axis}\end{tikzpicture}
  \caption{Single Player Utility}
\label{fig:empirical_single wine}
\end{center}
\end{subfigure}
\begin{subfigure}{.49\linewidth}
\begin{center}
\begin{tikzpicture}[scale=0.45]\begin{axis}[height=11cm,width=18cm,ymin=0.3,ymax=0.31,xmin=0,xmax=2,tick pos=left,y tick label style={
        /pgf/number format/.cd,
        fixed,
        fixed zerofill,
        precision=3,
        /tikz/.cd
    }]
\addplot [draw=blue, dotted, line width=0.3mm] coordinates {
(0.01, 0.30336975201722786)
(0.060000000000000005, 0.30312823265552413)
(0.11, 0.30278480437292914)
(0.16000000000000003, 0.30253700600864852)
(0.21000000000000002, 0.30229590735952522)
(0.26, 0.30208697541915554)
(0.31000000000000005, 0.30195044047547553)
(0.36000000000000004, 0.30183502034783337)
(0.41000000000000003, 0.3017404407326374)
(0.46, 0.3016142159159017)
(0.51, 0.3015402559837167)
(0.56, 0.30156092296419447)
(0.6100000000000001, 0.30154664093371596)
(0.66, 0.3016132680722063)
(0.7100000000000001, 0.30154193649277756)
(0.76, 0.30151611971900964)
(0.81, 0.30151863361149257)
(0.8600000000000001, 0.30145751303596036)
(0.91, 0.30147616127964304)
(0.9600000000000001, 0.3014872206357936)
(1.01, 0.3014537756541133)
(1.06, 0.30142194520607735)
(1.11, 0.30149780979465655)
(1.1600000000000001, 0.30143992775543905)
(1.2100000000000002, 0.3014952413062964)
(1.26, 0.3014377476291395)
(1.31, 0.3015763266439391)
(1.36, 0.30165024775498768)
(1.4100000000000001, 0.3016784506101746)
(1.4600000000000002, 0.30168309809473708)
(1.51, 0.30170207037679503)
(1.56, 0.3017326205398705)
(1.61, 0.3017553105900348)
(1.6600000000000001, 0.30173511613310444)
(1.7100000000000002, 0.3017342483741887)
(1.76, 0.30175762283368206)
(1.81, 0.30175574683127128)
(1.86, 0.3017498711768652)
(1.9100000000000001, 0.30176600827880075)
(1.9600000000000002, 0.30177139061282443)};
\addlegendentry{$\lambda_2 = 0$}
\addplot [draw=blue, densely dashdotted, line width=0.3mm] coordinates {
(0.01, 0.30472940788392272)
(0.060000000000000005, 0.30433636066385404)
(0.11, 0.30394068691187732)
(0.16000000000000003, 0.30345760974795816)
(0.21000000000000002, 0.30310136586456128)
(0.26, 0.3028405087188405)
(0.31000000000000005, 0.30265297373659072)
(0.36000000000000004, 0.3025981946132425)
(0.41000000000000003, 0.30240927741473838)
(0.46, 0.30229039097249537)
(0.51, 0.30221636788041895)
(0.56, 0.30199866553891813)
(0.6100000000000001, 0.3018687213263871)
(0.66, 0.30185687486752943)
(0.7100000000000001, 0.30177259484089475)
(0.76, 0.3017076764223397)
(0.81, 0.30161641813148815)
(0.8600000000000001, 0.30157747137350015)
(0.91, 0.30158439745187605)
(0.9600000000000001, 0.3015716360421778)
(1.01, 0.30155454012047157)
(1.06, 0.3015155084428688)
(1.11, 0.3015722453503168)
(1.1600000000000001, 0.3015016317843256)
(1.2100000000000002, 0.3014823236706733)
(1.26, 0.30145243283343705)
(1.31, 0.30140165323280232)
(1.36, 0.30141261598731633)
(1.4100000000000001, 0.301462150481329)
(1.4600000000000002, 0.3014479849686838)
(1.51, 0.30143700027092896)
(1.56, 0.30142506857124256)
(1.61, 0.301463844422074)
(1.6600000000000001, 0.30145549284888405)
(1.7100000000000002, 0.3014925841648138)
(1.76, 0.30150396853473798)
(1.81, 0.30143759162602805)
(1.86, 0.30143782208538854)
(1.9100000000000001, 0.3014993920059377)
(1.9600000000000002, 0.3015596395240854)
};
\addlegendentry{$\lambda_2 = 0.25$}
\addplot [draw=blue, line width=0.3mm] coordinates {
(0.01, 0.30545461060299612)
(0.060000000000000005, 0.30507635082008395)
(0.11, 0.3046818935252263)
(0.16000000000000003, 0.3043619532732242)
(0.21000000000000002, 0.30403074267769996)
(0.26, 0.30379656195020022)
(0.31000000000000005, 0.3034877959593456)
(0.36000000000000004, 0.303222078600195)
(0.41000000000000003, 0.30298888396800883)
(0.46, 0.3028873535525007)
(0.51, 0.30274157779726446)
(0.56, 0.30258177872938375)
(0.6100000000000001, 0.30248517252583935)
(0.66, 0.30234741419311963)
(0.7100000000000001, 0.30227691671983036)
(0.76, 0.30208438169353435)
(0.81, 0.30199862140200977)
(0.8600000000000001, 0.301930831016333)
(0.91, 0.30180668755687024)
(0.9600000000000001, 0.30181596932249144)
(1.01, 0.3016426017445927)
(1.06, 0.30152229540581017)
(1.11, 0.30151096260105436)
(1.1600000000000001, 0.30154106356417205)
(1.2100000000000002, 0.3015138539906609)
(1.26, 0.30148547023753997)
(1.31, 0.3014318602225414)
(1.36, 0.3014027219052794)
(1.4100000000000001, 0.30138589895381603)
(1.4600000000000002, 0.30134731065443243)
(1.51, 0.3013643272703002)
(1.56, 0.3014292295786421)
(1.61, 0.30139154362308147)
(1.6600000000000001, 0.30144122683712072)
(1.7100000000000002, 0.3014695771208683)
(1.76, 0.3014009858420934)
(1.81, 0.30145077034841716)
(1.86, 0.30150888492468504)
(1.9100000000000001, 0.3015166004147768)
(1.9600000000000002, 0.3015151106107153)
};
\addlegendentry{$\lambda_2 = 0.5$}
\addplot [draw=red, dotted, line width=0.3mm] coordinates {
(0.01, 0.30676436547004103)
(0.060000000000000005, 0.30640593762932062)
(0.11, 0.305936383131995)
(0.16000000000000003, 0.305473616894489)
(0.21000000000000002, 0.30503733425315066)
(0.26, 0.30475604755230767)
(0.31000000000000005, 0.30447889737899903)
(0.36000000000000004, 0.30421837689454705)
(0.41000000000000003, 0.3039769389220251)
(0.46, 0.30374487924025757)
(0.51, 0.30346217039528493)
(0.56, 0.30329847863590602)
(0.6100000000000001, 0.30322515128075767)
(0.66, 0.30304512453385148)
(0.7100000000000001, 0.30290057935345817)
(0.76, 0.30273362333531468)
(0.81, 0.30255628966242915)
(0.8600000000000001, 0.30234647278603078)
(0.91, 0.30225985425466357)
(0.9600000000000001, 0.30218213617100594)
(1.01, 0.3020933841564396)
(1.06, 0.30203836887913477)
(1.11, 0.3019100572378496)
(1.1600000000000001, 0.3018292902357118)
(1.2100000000000002, 0.30179101114388575)
(1.26, 0.30167762831734113)
(1.31, 0.30158020035676948)
(1.36, 0.3014577690687071)
(1.4100000000000001, 0.30143097145806718)
(1.4600000000000002, 0.30140478470407857)
(1.51, 0.30141898863591237)
(1.56, 0.3014245045792993)
(1.61, 0.3013865348415336)
(1.6600000000000001, 0.30137678285540776)
(1.7100000000000002, 0.30127972041158736)
(1.76, 0.3011954090879342)
(1.81, 0.30120945905776714)
(1.86, 0.30113291408360287)
(1.9100000000000001, 0.30117705146126533)
(1.9600000000000002, 0.30113561777704627)
};
\addlegendentry{$\lambda_2 = 1$}
\addplot [draw=red, line width=0.3mm] coordinates 
{(0.01, 0.30790371204849842)
(0.060000000000000005, 0.30744872952589136)
(0.11, 0.30689692089747526)
(0.16000000000000003, 0.3065663139575379)
(0.21000000000000002, 0.30616721899341844)
(0.26, 0.3058789821725556)
(0.31000000000000005, 0.30561371104549255)
(0.36000000000000004, 0.3052601926235356)
(0.41000000000000003, 0.30498337210871286)
(0.46, 0.30460765436257274)
(0.51, 0.30440880658441177)
(0.56, 0.3042441587644954)
(0.6100000000000001, 0.3039823239690819)
(0.66, 0.30379281261201728)
(0.7100000000000001, 0.30362903353405614)
(0.76, 0.30348713889816026)
(0.81, 0.3033248086994922)
(0.8600000000000001, 0.30318719974109)
(0.91, 0.30301875400297253)
(0.9600000000000001, 0.30286799940392933)
(1.01, 0.3027657256516606)
(1.06, 0.30270751821003916)
(1.11, 0.3026441524296536)
(1.1600000000000001, 0.3025616343924763)
(1.2100000000000002, 0.30243688140105234)
(1.26, 0.30233172007116466)
(1.31, 0.30226755040209864)
(1.36, 0.302138808366203)
(1.4100000000000001, 0.3019773676543005)
(1.4600000000000002, 0.3018989133286164)
(1.51, 0.30177649016071683)
(1.56, 0.30177946579702845)
(1.61, 0.3017121426846358)
(1.6600000000000001, 0.3016340598560887)
(1.7100000000000002, 0.30147312089667987)
(1.76, 0.30141381442388933)
(1.81, 0.3013903095421532)
(1.86, 0.30131876246488203)
(1.9100000000000001, 0.30117713729336033)
(1.9600000000000002, 0.30110116930338514)
};
\addlegendentry{$\lambda_2 = 2$}
\end{axis}\end{tikzpicture}
  \caption{Two Player Utility}
\label{fig:empirical_two wine}
\end{center}
\end{subfigure}
\label{fig:ridge-utilities wine}
\caption{
Results for the red wine quality data.
The player's utilities under ridge regression in the
  single-player and two-player environment.  In both environments
  Player 1's utility is plotted as a function of regularizer parameter
  $\ridge_1$.  In the two player figure, Player 1's utility is shown
  for a range of values of Player 2's regularizer parameter
  $\ridge_2$.  The single player utility is optimized at $\ridge_1 \approx 0.26 >
  0$; the two-player utility is optimized at $\ridge_1 = 0$ for all
  $\ridge_2$.}
\end{figure*}

The conclusions of our
theoretical study are corroborated by this empirical ridge regression game.
Specifically, the optimal
single-player regularization parameter in a ridge regression is generally non-zero, while, as
long as the benefit of winning is sufficiently large, the two-player
best response is to lower the regularization parameter to zero.
In \Cref{apx:empirical-validation},
we also demonstrate that two of the assumptions used in Theorem~\ref{thm:general-framework-non-constant-tradeoff} for the ex ante game
hold in our empirical ridge regression 
game.
We view these results as affirming the validity of our insight on the preference of variance over bias
in competitive settings beyond our theoretical and numerical analyses.


\section{Conclusions}
\label{sec:conclusions}

In this paper we studied competing machine learning algorithms by abstracting
the problem to a distribution-selection game in which bias and
variance can be traded off.  While outcomes of real learning
algorithms can be complex, our bias-variance game is amenable to
theoretical analysis, and we formally prove that for normal
distributions reducing bias at the expense of variance is a
best response.  Thus, there is a clear preference for lower bias, even at the expense of higher total error.

In addition to our theoretical analyses, we verified the robustness of our insight on the
preference for lower bias numerically under several variations of the game. We also considered the empirical game on a
benchmark data set using ridge regression, where the same qualitative
conclusions from our theoretical analysis were demonstrated.


Many aspects of machine learning problems change
significantly in competitive environments. For example, in a different context but a similar vein, 
\citet{mansour2018competing} consider the classical bandit model of
online learning and study the effect competition between the
algorithms on the exploration vs.\ exploitation tradeoff.  These authors show
that the presence of competition may lead to the strategic choice of
algorithms that do not explore as much as they would absent
competition, and may thus be worse learners. More generally,
we believe that there are many more
unresolved issues in the intersection of machine learning
and competition, and suggest this general area as a
fruitful and important one for future study.

\bibliographystyle{apalike}
\bibliography{biasvariance-bib}

\appendix
\newpage
\begin{Large}
\begin{center}
    \textbf{Appendix}
\end{center}
\end{Large}
\ifitcssubm
\input{itcs-appx-no-tradeoff}
\else
\section{Proofs from 
\ifMS
\Cref{sec:no-tradeoff}
\else
Section 3
\fi
}
\label{app:no-tradeoff}
\begin{numberedtheorem}{
\ref{thm:lower-bias}}
Let $Z$ be monotone increasing, convex on $[-\infty,0]$, and symmetric
around 0. Let $X_i=\sigma Z+\mu$ be IR (so as to satisfy
$\mu^2+\sigma^2\leq 1$) and $X_i'=\sigma Z$. Then $u_i(X'_i,c)\geq
u_i(X_i,c)$ for any realization $c$ of player $j$.
\end{numberedtheorem}

\ifMS \proof{Proof.}
\else \begin{proof}\fi
The proof consists of several cases.
\begin{itemize}
\item[1)] $\mu \geq c$: Since $\mu^2+\sigma^2\leq 1$, it must be the case that $c\leq 1$. Thus, for any realization in which player $i$ gets non-zero utility, his utility
is nonnegative under both $X_i$ and $X'_i$. 

Consider first the distribution $X_i''=\sigma Z+c$. Observe that, by monotonicity, for each point $x\in[-c,c]$, the pdf at $x$ under $X_i''$ is higher than under $X_i$.
Since all such realizations lead to positive utility, $u(X_i'')\geq u(X_i)$.

Next, consider the comparison between $X_i'$ and $X_i''$. On the interval $[0,c]$ the distribution $X_i'$ is an inversion of $X_i''$ with higher probability closer to the
origin, and so on this sub-interval $X'$ leads to higher utility. On the interval $[-c,0]$ the pdf of $X_i'$ dominates that of $X_i''$, and so also here $X_i'$ leads to higher
utility. Thus, $u(X_i')\geq u(X_i'')$, and so $u(X_i')\geq u(X_i)$.

\item[2a)] $\mu<c\leq 1$: Consider \Cref{fig:lower-bias1}, in which $X'$ is the green pdf and $X$ is the blue pdf.
Area E (from $-1$ to $1$, and below both curves) leads to the same utility for both distributions. Area A (under $X'$) leads to higher utility than area B (under $X$).
And finally, area D (under $X'$) leads to strictly positive utility. Thus, overall, $u(X')\geq u(X)$.

\begin{figure}
\begin{subfigure}{.4\textwidth}
\begin{center}
\includegraphics[height=4cm]{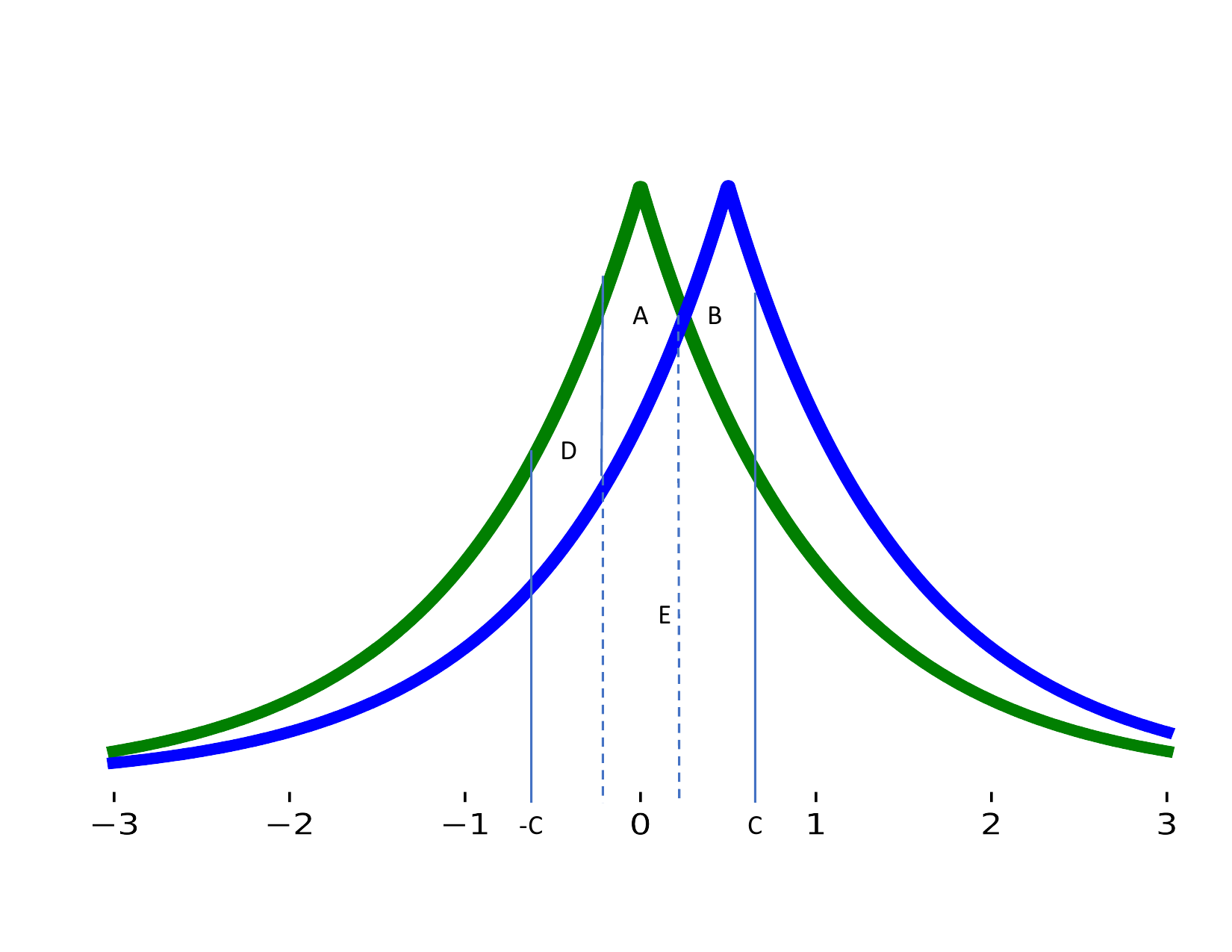}
  \caption{Case 2a}
\label{fig:lower-bias1}
\end{center}
\end{subfigure}
\begin{subfigure}{.4\textwidth}
\begin{center}
\includegraphics[height=4cm]{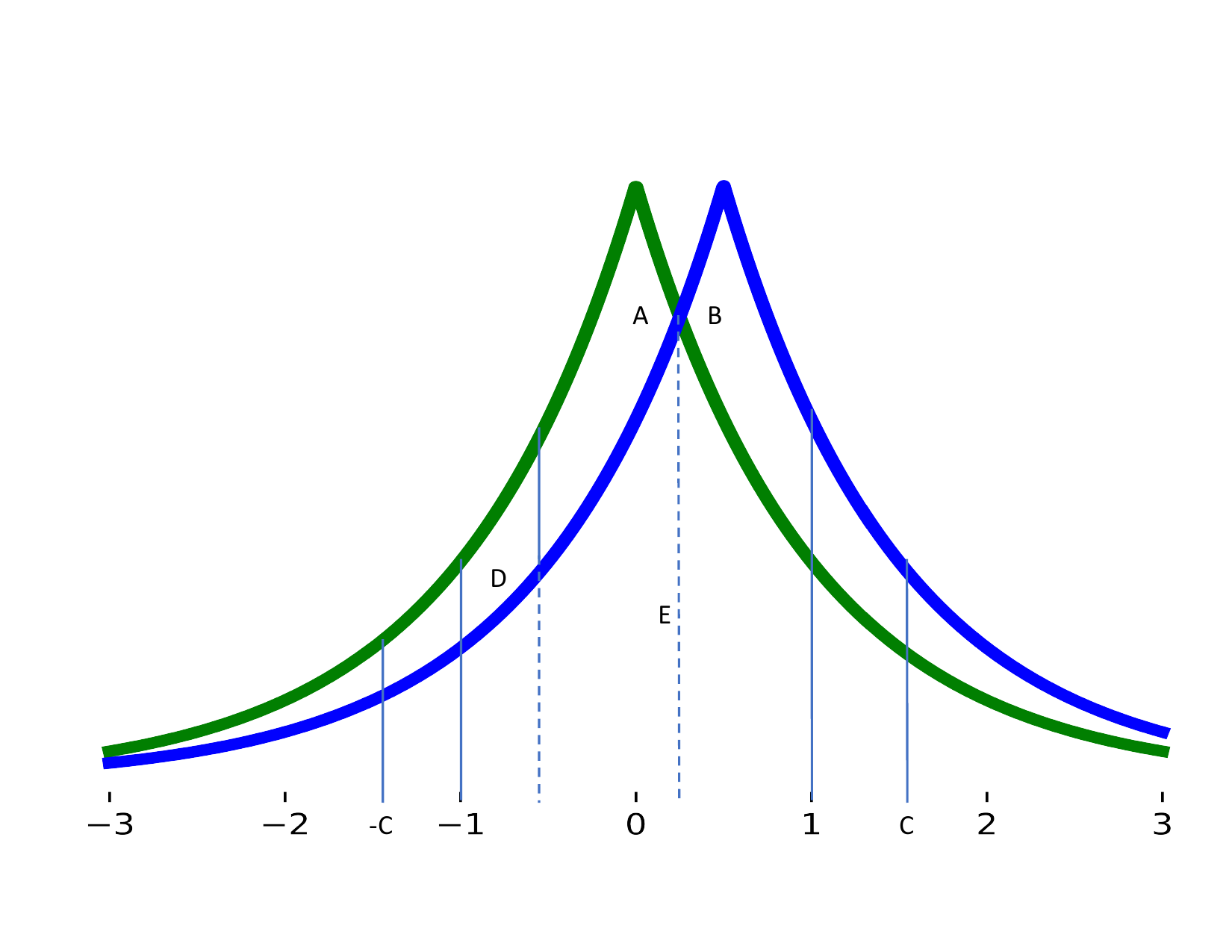}
  \caption{Case 2b}
\label{fig:lower-bias2}
\end{center}
\end{subfigure}
\label{fig:lower-bias}
\caption{Case 2 of 
Theorem 2.1.
}
\end{figure}

\item[2b)] $c>1$: Consider \Cref{fig:lower-bias2}, in which $X'$ is the green pdf and $X$ is the blue pdf.
Area E (from $-c$ to $c$, and below both curves) leads to the same utility for both distributions. Area A (under $X'$) leads to higher utility than area B (under $X$).
Area D (under $X'$) leads to strictly positive utility. 

It remains to show that the losses under $X'$ due to realizations in $[-c,-1)\cup (1,c]$ are smaller than the 
losses on the same intervals due to $X$. To this end, we will consider points $x\in(1,c]$, and show that the sum of the pdfs of $X$ at $x$ and $-x$ is larger
than the sum of the pdfs of $X'$ at those same points. Let $g$ be the pdf of $X'$. Then the sum of the pdfs of $X'$ at points $x$ and $-x$ is $g(x)+g(-x) = 2g(x)$.
The sum of the pdfs of $X$ at the points $x$ and and $-x$ is $g(x+\mu)+g(x-\mu)$.
By convexity, $g(x+\mu)+g(x-\mu) \geq 2g(x)$, completing the claim.\ifMS\Halmos\fi 
\end{itemize}
\ifMS \endproof
\else \end{proof} \fi

\begin{numberedtheorem}{
\ref{thm:lower-bias normal}}
Let $Z$ be normal; let $X_i=\sigma Z+\mu$ and satisfy $\mu+\sigma\leq 1$; and let $X_i'=\sigma Z$. Then $u_i(X'_i,c)\geq u_i(X_i,c)$ for any realization $c$ of player $j$.
\end{numberedtheorem}

\ifMS \proof{Proof.}
\else \begin{proof}\fi
The proof is nearly identical to that of 
\Cref{thm:lower-bias}, 
except for case 2b: the upper tail of the normal distribution is convex only from $\mu+\sigma$ onward. So this case must be handled differently.

However, to complete the proof, we actually only need that the pdf $g$ of $X'$ be convex from the point $1-\mu$ and higher. Since $\mu+\sigma \leq 1$ it holds that $1-\mu\geq \sigma$, and so $g$ is convex on $[c-\mu,c+\mu]$ whenever $c\geq 1$, completing the proof.
\ifMS \Halmos \endproof
\else \end{proof} \fi
\fi

\section{Proof of 
\ifMS
  \Cref{thm:lower-bias normal under tradeoff}
  \else
  Theorem 4.1
  \fi
  }
\label{apx:critical inequality}
\begin{numberedtheorem}{\ref{thm:lower-bias normal under tradeoff}}
Let $Z$ be normal with mean 0 and variance 1, and let $X_i=\sigma Z+\mu$ 
and $X_i'=\tau Z + \nu$, where $\mu^2+\sigma^2=\nu^2+\tau^2=1$.
If $\nu^2<\mu^2$ then $u_i(X'_i,a)> u_i(X_i,a)$ for any realization $a>0$ of player $j$.
\end{numberedtheorem}

\ifMS \proof{Proof.}
\else \begin{proof} \fi
Since players are symmetric,
we drop all subscripts in the discussion below.
We 
compute the expected 
utility $u(X, a)$ of player $i$
when he plays action
$X=\sigma Z+\mu$
against 
the realization $a$
of player $j$.
Without loss of generality,
we assume 
$a \geq 0$.
First we characterize the closed form of 
$\expect{u(X, a)}$ for all $\mu \in [0, 1)$;
then we argue that
$\lim_{\mu\rightarrow 1^-}\expect{u(X, a)} = 0
=
\expect{u(X',a)}$, where $X' = 0\cdot Z + 1$;
and finally we show that
$\frac{\partial \expect{u(X, a)}}{\partial \mu} \leq 0$
for all $\mu \in [0, 1)$, which completes the proof.

We first focus on $\mu\in [0, 1)$. Observe that 
\begin{align}
\expect{u(X, a)} &=  \expect{(1 - X^2)\cdot {\bf 1}\{|X|<a\}}
=
\prob{|X|< a}
-
\nonumber
\expect{X^2\cdot {\bf 1}\{|X|<a\}}.\\
\intertext{We can now evaluate}
\nonumber
\prob{|X|< a}
&=\int^{a}_{-a}\frac{1}{\sqrt{2\pi \sigma^2}}
\exp\left(\frac{-(x-\mu)^2}{2\sigma^2}\right)dx\\
\nonumber
&=\Phi\left(
\frac{a-\mu}{\sigma}
\right)-
\Phi\left(
\frac{-a-\mu}{\sigma}
\right),\\
\intertext{where $\Phi(\cdot)$
is the CDF of $Z$, i.e., the standard normal distribution. Now,}
\nonumber
\expect{X^2\cdot{\bf 1}\{|X|<a\}}
&=
\int^{a}_{-a}x^2\,\frac{1}{\sqrt{2\pi \sigma^2}}
\exp\left(\frac{-(x-\mu)^2}{2\sigma^2}\right)dx\\
\nonumber
&=
-\frac{\mu\sigma }{\sqrt{2\pi}}
\left(
\exp\left(\frac{-(a-\mu)^2}{2\sigma^2}\right)-
\exp\left(\frac{-(a+\mu)^2}{2\sigma^2}\right)
\right)\\
\nonumber
&\quad-\frac{a\sigma}{\sqrt{2\pi}}
\left(
\exp\left(\frac{-(a-\mu)^2}{2\sigma^2}\right)+
\exp\left(\frac{-(a+\mu)^2}{2\sigma^2}\right)
\right)\\
\nonumber
&\quad+(\sigma^2+\mu^2)\left(
\Phi\left(
\frac{a-\mu}{\sigma}
\right)-\Phi\left(
\frac{-a-\mu}{\sigma}
\right)
\right).
\\
\intertext{Because of the constraint that $\mu^2+\sigma^2=1$, we can eliminate some terms:}
\nonumber
\expect{u(X, a)}
&=
\frac{\mu\sqrt{1-\mu^2}}{\sqrt{2\pi}}
\left(
\exp\left(\frac{-(a-\mu)^2}{2-2\mu^2}\right)-
\exp\left(\frac{-(a+\mu)^2}{2-2\mu^2}\right)
\right)\\
\nonumber
&\quad+\frac{a\sqrt{1-\mu^2}}{\sqrt{2\pi}}
\left(
\exp\left(\frac{-(a-\mu)^2}{2-2\mu^2}\right)+
\exp\left(\frac{-(a+\mu)^2}{2-2\mu^2}\right)
\right).
\intertext{Next we consider $\expect{u(X, a)}$ for $\mu = 1$.
Note that in this case, agent $i$'s realization 
is 1 deterministically,
which implies that
her utility conditioning on winning is zero.
Her utility conditioning on losing is also zero by definition.
Thus, $\expect{u(X, a)} = 0$ for $\mu = 1$.
To see that $\lim_{\mu\rightarrow 1^-}\expect{u(X, a)} = 0$,
note that 
$0< \exp\left(\frac{-(a-\mu)^2}{2-2\mu^2}\right)\leq 1$
and 
$0< \exp\left(\frac{-(a+\mu)^2}{2-2\mu^2}\right)\leq 1$.
Thus,
}
\nonumber
    \lim\limits_{\mu\rightarrow 1^-}\expect{u(X, a)}
    &\leq
\lim\limits_{\mu\rightarrow 1^-}
\left(\frac{\mu\sqrt{1-\mu^2}}{\sqrt{2\pi}}
+2\frac{a\sqrt{1-\mu^2}}{\sqrt{2\pi}}
\right) = 0
\intertext{and} 
\nonumber
    \lim\limits_{\mu\rightarrow 1^-}\expect{u(X, a)}
&\geq
\lim\limits_{\mu\rightarrow 1^-}
-\frac{\mu\sqrt{1-\mu^2}}{\sqrt{2\pi}}
= 0
\end{align}
\noindent Invoking the Squeeze Theorem
yields $\lim_{\mu\rightarrow 1^-}\expect{u(X, a)} = 0$.  Finally, taking the derivative of 
$\expect{u(X, a)}$ with respect to $\mu$ 
for $\mu \in [0, 1)$ yields
\begin{align}
\nonumber
&\frac{\partial \expect{u(X, a)}}{\partial \mu}\\
\nonumber
=&\left[
\sqrt{1 - \mu^2}
-
(a + \mu)
\frac{\mu}{\sqrt{1 - \mu^2}} +
\frac{(a^2 - \mu^2)(1 - a\mu)\sqrt{1-\mu^2}}
{(1 - \mu^2)^2}
\right]
\exp\left(-\frac{1}{2}\frac{(a - \mu)^2}{1-\mu^2}\right) \\
\nonumber
&-
\left[
\sqrt{1 - \mu^2}
+
(a - \mu)
\frac{\mu}{\sqrt{1 - \mu^2}}+
\frac{(a^2 - \mu^2)(1 + a\mu)\sqrt{1-\mu^2}}
{(1 - \mu^2)^2}
\right]
\exp\left(-\frac{1}{2}\frac{(a + \mu)^2}{1-\mu^2}\right). 
\end{align}
To prove 
\Cref{thm:lower-bias normal under tradeoff},
it is sufficient to show that this 
derivative is strictly negative
for all $\mu \in (0, 1)$
and $a > 0$. This condition is expressed as
\begin{align}
\begin{split}\label{eq:critical inequality}
 [(2\mu^2 - a^2 - 1)(1 - a\mu) + 
        2\mu^2(1-\mu^2)]
        \exp\left(
        \frac{a\mu}{1-\mu^2}
        \right)&\\
        -~
        [(2\mu^2 - a^2 - 1)(1 + a\mu) + 
        2\mu^2(1-\mu^2)]
        \exp\left(
        \frac{-a\mu}{1-\mu^2}
        \right)
        &> 0.
\end{split}
\end{align}
The remaining part of the proof, showing that inequality \eqref{eq:critical
  inequality} holds, follows from a long and algebraic calculation
that we formalize as 
\ifMS
Lemma 1 
in Online Appendix 1.
\else
\Cref{lem:critical inequality}.
  \fi
\ifMS \Halmos \endproof
\else \end{proof} \fi

\ifEC
We now  show that the derivative 
of the ex post utility 
with respect to $\mu$
against realization $a$ of 
the opponent  is strictly negative 
for all $\mu \in (0, 1)$ and 
$a >0$.
\begin{lemma}\label{lem:critical inequality}
Inequality \eqref{eq:critical inequality}
holds for all $\mu \in (0, 1)$ and 
$a > 0$.
\end{lemma}


Notice that there may be two regimes:
(a) the player always gains positive payoff,
i.e., $a < 1$;
(b) the player sometimes suffers non-positive payoff,
i.e., $a \geq 1$.
We break Lemma
\ref{lem:critical inequality}
into these two regimes and show them 
separately.

\newcommand{\geqs}{>}

\begin{lemma}
\label{lem:critical inequality small a}
Inequality \eqref{eq:critical inequality}
holds for all $\mu \in (0, 1)$ and 
$a \in (0, 1)$.
\end{lemma}
\ifMS \proof{Proof.}
\else \begin{proof}\fi
We start with  
inequality~\eqref{eq:critical inequality}, copied here:
\begin{align*}
 &[(2\mu^2 - a^2 - 1)(1 - a\mu) + 
        2\mu^2(1-\mu^2)]
        \exp\left(
        \frac{a\mu}{1-\mu^2}
        \right)\\
        \geqs~
        &[(2\mu^2 - a^2 - 1)(1 + a\mu) + 
        2\mu^2(1-\mu^2)]
        \exp\left(
        \frac{-a\mu}{1-\mu^2}
        \right).
\end{align*}
The proof is a two-case analysis, working backwards from inequality~\eqref{eq:critical inequality} to show that it holds from lemma domain and case assumptions in both cases.  Some steps are exact algebraic manipulations (``if and only if" $\Leftrightarrow$ or $\Updownarrow$) and some steps are weakly restrictions to stronger requirements (``is implied by" $\Leftarrow$ or $\Uparrow$).  We use the arrows for visual simplicity to indicate the type of each step.  
Each step includes an explanation.  Consider the following two cases based on the sign of $(2\mu^2-a^2-1)$.
\begin{align}
\intertext{\textbf{Case 1:} $(2\mu^2-a^2-1)> 0$.
To start, multiplying both sides of equation~\eqref{eq:critical inequality} by $\frac{1}{2\mu^2-a^2-1}\cdot\exp\left(\frac{1}{1 - \mu^2}\right)$, we get inequality~$\eqref{eq:critical inequality} \Leftrightarrow$}
\nonumber\begin{split}
&\left[
(1-a\mu)+\frac{(2\mu^2(1-\mu^2))}{(2\mu^2-a^2-1)}
\right]
\exp\left(\frac{1+a\mu}{1-\mu^2}\right) 
\\
\geqs
&\left[
(1+a\mu)+\frac{(2\mu^2(1-\mu^2))}{(2\mu^2-a^2-1)}
\right]
\exp\left(\frac{1-a\mu}{1-\mu^2}\right) 
	\end{split}\\
\intertext{$(\Uparrow)$~Now it is clear that we can drop the term $(2\mu^2-a^2-1)$ because we have $0<(2\mu^2-a^2-1)< 1$ from $\mu^2< 1$.  So we get
}
	\nonumber
\nonumber\begin{split}
&\left[
(1-a\mu)+(2\mu^2(1-\mu^2))
\right]
\exp\left(\frac{1+a\mu}{1-\mu^2}\right) 
\\
\geqs
&\left[
(1+a\mu)+(2\mu^2(1-\mu^2))
\right]
\exp\left(\frac{1-a\mu}{1-\mu^2}\right) 
	\end{split}\\
\intertext{$(\Updownarrow)$~Replacing the exponential functions with their respective Taylor series, we get}
\nonumber\begin{split}
&\left[
(1-a\mu)+(2\mu^2(1-\mu^2))
\right]
\left[\sum_{k=0}^{\infty}\frac{((1+a\mu)/(1-\mu^2))^k}{k!}\right]
\\
\geqs
&\left[
(1+a\mu)+(2\mu^2(1-\mu^2))
\right]
\left[\sum_{k=0}^{\infty}\frac{((1-a\mu)/(1-\mu^2))^k}{k!}\right]
	\end{split}\\
\intertext{$(\Updownarrow)$~Pulling out the first two terms of the series, we get}
\nonumber\begin{split}
&\left[
(1-a\mu)+(2\mu^2(1-\mu^2))
\right]
\left[1+\frac{1+a\mu}{1-\mu^2}+\sum_{k=2}^{\infty}\frac{((1+a\mu)/(1-\mu^2))^k}{k!}
\right]
\\
\geqs
&\left[
(1+a\mu)+(2\mu^2(1-\mu^2))
\right]
\left[1+\frac{1-a\mu}{1-\mu^2}+\sum_{k=2}^{\infty}\frac{((1-a\mu)/(1-\mu^2))^k}{k!}
\right]
	\end{split}\\
\intertext{$(\Leftarrow)$~Separate the previous line into two inequalities described by \textbf{Condition 1a} and $\textbf{Condition 1b}$; if both are true then the combined inequality is true.}
\intertext{\textbf{Condition 1a:}}
\nonumber
\begin{split}
& \left[
(1-a\mu)+(2\mu^2(1-\mu^2))
\right]\cdot\left[1+\frac{1+a\mu}{1-\mu^2}\right]\\
\geq& 
\left[
(1+a\mu)+(2\mu^2(1-\mu^2))
\right]\cdot\left[1+\frac{1-a\mu}{1-\mu^2}\right]
\end{split}
\\
\intertext{$(\Updownarrow)$~Multiplying through by $(1-\mu^2)$, we get}
\nonumber\begin{split}
& \left[
(1-a\mu)+(2\mu^2(1-\mu^2))
\right]\cdot\left[1-\mu^2+1+a\mu\right]\\
\geq &
\left[
(1+a\mu)+(2\mu^2(1-\mu^2))
\right]\cdot\left[1-\mu^2+1-a\mu\right]
\end{split}\\
\intertext{$(\Updownarrow)$~Splitting out the terms in the first bracket and canceling $(1-a\mu)(1+a\mu)$, we get}
\nonumber\begin{split}
& \left(1-a\mu \right)\left(1-\mu^2\right)+(2\mu^2(1-\mu^2))\left[1-\mu^2+1+a\mu\right]\\
\geq &
\left(1+a\mu \right)\left(1-\mu^2\right)+(2\mu^2(1-\mu^2))\left[1-\mu^2+1-a\mu\right]
\end{split}\\
\intertext{$(\Updownarrow)$~Further canceling additive constants from both sides, we get}
\nonumber\begin{split}
& \left(-a\mu \right)\left(1-\mu^2\right)+(2\mu^2(1-\mu^2))\left[+a\mu\right]\\
\geq &
\left(+a\mu \right)\left(1-\mu^2\right)+(2\mu^2(1-\mu^2))\left[-a\mu\right]
\end{split}\\
\intertext{$(\Updownarrow)$~Dividing out $a\mu(1-\mu^2)$ and grouping all terms on one side, we get}
\nonumber
& 2(2\mu^2-1)\geq 0~~\checkmark~~\text{which is finally true, directly from the assumption of Case 1.}
\intertext{\textbf{Condition 1b:}}
\nonumber\begin{split}
&\left[
(1-a\mu)+(2\mu^2(1-\mu^2))
\right]
\left[\sum_{k=2}^{\infty}\frac{((1+a\mu)/(1-\mu^2))^k}{k!}
\right]
\\
\geqs
&\left[
(1+a\mu)+(2\mu^2(1-\mu^2))
\right]
\left[\sum_{k=2}^{\infty}\frac{((1-a\mu)/(1-\mu^2))^k}{k!}
\right]
	\end{split}\\
\intertext{$(\Uparrow)$~Dropping the  $(2\mu^2(1-\mu^2))$ terms -- by the left-hand side sum terms dominating for every $k$ -- we get}
\nonumber
\begin{split}
&\left[
(1-a\mu)
\right]
\left[\sum_{k=2}^{\infty}\frac{((1+a\mu)/(1-\mu^2))^k}{k!}
\right] \\
\geqs&
\left[
(1+a\mu)
\right]
\left[\sum_{k=2}^{\infty}\frac{((1-a\mu)/(1-\mu^2))^k}{k!}
\right]
\end{split}\\
\intertext{$(\Updownarrow)$~Noting $k\geq 2$ and canceling a factor of $(1-a\mu)(1+a\mu)$, we get}
\nonumber
&
\left[\sum_{k=2}^{\infty}\frac{(1+a\mu)^{k-1}}{(1-\mu^2)^k\cdot k!}
\right]\geqs
\left[\sum_{k=2}^{\infty}\frac{(1-a\mu)^{k-1}}{(1-\mu^2)^k\cdot k!}
\right]~~\checkmark~~\text{which is true for every $k$ from $a\mu\geqs0$.}
\intertext{\textbf{Case 2:} $(2\mu^2-a^2-1)\leq 0$.}
\intertext{Note that we have $a\mu < 1$ because the lemma's domain has $a<1$ and $\mu< 1$.
Starting anew for \textbf{Case 2} from inequality~\eqref{eq:critical inequality}, replacing the exponential functions with their respective Taylor Series, we get inequality~$\eqref{eq:critical inequality} \Leftrightarrow$}
\nonumber\begin{split}
&\left[
(2\mu^2-a^2-1)\cdot(1-a\mu)+(2\mu^2(1-\mu^2))
\right]
\left[\sum_{k=0}^{\infty}\frac{((a\mu)/(1-\mu^2))^k}{k!}\right]
\\
\geqs
&\left[
(2\mu^2-a^2-1)\cdot(1+a\mu)+(2\mu^2(1-\mu^2))
\right]
\left[\sum_{k=0}^{\infty}\frac{((-a\mu)/(1-\mu^2))^k}{k!}\right]
	\end{split}\\
\intertext{$(\Updownarrow)$~Pulling out the first two terms of the series, we get}
\nonumber\begin{split}
&\left[
(2\mu^2-a^2-1)\cdot(1-a\mu)+(2\mu^2(1-\mu^2))
\right]
\left[1+\frac{a\mu}{1-\mu^2}+\sum_{k=2}^{\infty}\frac{((a\mu)/(1-\mu^2))^k}{k!}\right]
\\
\geqs
&\left[
(2\mu^2-a^2-1)\cdot(1+a\mu)+(2\mu^2(1-\mu^2))
\right]
\left[1-\frac{a\mu}{1-\mu^2}+\sum_{k=2}^{\infty}\frac{((-a\mu)/(1-\mu^2))^k}{k!}\right]
	\end{split}\\
\intertext{$(\Leftarrow)$~Separate the previous line into two inequalities described by \textbf{Condition 2a} and $\textbf{Condition 2b}$; if both are true then the combined inequality is true.
}
\intertext{\textbf{Condition 2a:}}
\nonumber\begin{split}
&\left[
(2\mu^2-a^2-1)\cdot(1-a\mu)+(2\mu^2(1-\mu^2))
\right]
\left[1+\frac{a\mu}{1-\mu^2}\right]
\\
\geqs
&\left[
(2\mu^2-a^2-1)\cdot(1+a\mu)+(2\mu^2(1-\mu^2))
\right]
\left[1-\frac{a\mu}{1-\mu^2}\right]
	\end{split}\\
\intertext{$(\Updownarrow)$~Multiplying through by $(1-\mu^2)$, we get}
\nonumber\begin{split}
&\left[
(2\mu^2-a^2-1)\cdot(1-a\mu)+(2\mu^2(1-\mu^2))
\right]
\left[1-\mu^2+a\mu\right]
\\
\geqs
&\left[
(2\mu^2-a^2-1)\cdot(1+a\mu)+(2\mu^2(1-\mu^2))
\right]
\left[1-\mu^2-a\mu\right]
	\end{split}\\
\intertext{$(\Updownarrow)$~Splitting out the terms in the first bracket and canceling the resulting (additively) matching terms, we get}
\nonumber\begin{split}
&\left[
(2\mu^2-a^2-1)\cdot(1-a\mu)\cdot(-\mu^2)+(2\mu^2(1-\mu^2))\cdot(+a\mu)
\right]
\\
\geqs
&\left[
(2\mu^2-a^2-1)\cdot(1+a\mu)\cdot(-\mu^2)+(2\mu^2(1-\mu^2))\cdot(-a\mu)
\right]
	\end{split}\\
\intertext{$(\Updownarrow)$~Further canceling additively and then moving the $-1$ factor of $-\mu^2$, we get}
\nonumber\begin{split}
&\left[
(1+a^2-2\mu^2)\cdot(-a\mu)\cdot(\mu^2)+(2\mu^2(1-\mu^2))\cdot(+a\mu)
\right]
\\
\geqs
&\left[
(1+a^2-2\mu^2)\cdot(+a\mu)\cdot(\mu^2)+(2\mu^2(1-\mu^2))\cdot(-a\mu)
\right]
	\end{split}\\
\intertext{$(\Updownarrow)$~Combining like-terms on each side and dividing through by $2$, we get}
\nonumber\begin{split}
&(2\mu^2(1-\mu^2))\cdot(+a\mu) \geqs (1+a^2-2\mu^2)\cdot(+a\mu)\cdot(\mu^2)
	\end{split}\\
\intertext{$(\Updownarrow)$~Dividing by $a\mu^3$ and re-organizing, we get}
\nonumber\begin{split}
&(1-\mu^2)+(1-\mu^2) \geqs (1-\mu^2)+(a^2-\mu^2)~~\checkmark~~\text{which is true because the domain has}~a< 1
	\end{split}\\
\intertext{\textbf{Condition 2b:}}
\nonumber\begin{split}
&\left[
(2\mu^2-a^2-1)\cdot(1-a\mu)+(2\mu^2(1-\mu^2))
\right]
\left[\sum_{k=2}^{\infty}\frac{((a\mu)/(1-\mu^2))^k}{k!}\right]
\\
\geq
&\left[
(2\mu^2-a^2-1)\cdot(1+a\mu)+(2\mu^2(1-\mu^2))
\right]
\left[\sum_{k=2}^{\infty}\frac{((-a\mu)/(1-\mu^2))^k}{k!}\right]
	\end{split}\\
\intertext{$(\Leftarrow)$~To prove that the inequality of the previous line holds, it is sufficient to show that the next inequality holds for pairs of consecutive terms within its sums $\forall~\text{even}~k\geq 2$; for each fixed, even $k\geq2$, we require:}
\nonumber\begin{split}
&\left[
(2\mu^2-a^2-1)\cdot(1-a\mu)+(2\mu^2(1-\mu^2))
\right]
\left[\frac{((a\mu)/(1-\mu^2))^k}{k!}+\frac{((a\mu)/(1-\mu^2))^{k+1}}{(k+1)!}\right]
\\
\geq
&\left[
(2\mu^2-a^2-1)\cdot(1+a\mu)+(2\mu^2(1-\mu^2))
\right]
\left[\frac{((a\mu)/(1-\mu^2))^k}{k!}-\frac{((a\mu)/(1-\mu^2))^{k+1}}{(k+1)!}\right]
	\end{split}\\
\intertext{$(\Updownarrow)$~Factoring out common terms within the bracket of the Taylor series terms, we get}
\nonumber\begin{split}
&\left[
(2\mu^2-a^2-1)\cdot(1-a\mu)+(2\mu^2(1-\mu^2))
\right]
\left[\frac{((a\mu)/(1-\mu^2))^k}{k!}\right]\left[1+\frac{((a\mu)/(1-\mu^2))}{(k+1)}\right]
\\
\geq
&\left[
(2\mu^2-a^2-1)\cdot(1+a\mu)+(2\mu^2(1-\mu^2))
\right]
\left[\frac{((a\mu)/(1-\mu^2))^k}{k!}\right]\left[1-\frac{((a\mu)/(1-\mu^2))}{(k+1)}\right]
	\end{split}\\
\intertext{$(\Updownarrow)$~Multiplying through by $\frac{(1-\mu^2)^{k+1}\cdot(k+1)!}{(a\mu)^k}$, we get}
\nonumber\begin{split}
&\left[
(2\mu^2-a^2-1)\cdot(1-a\mu)+(2\mu^2(1-\mu^2))
\right]
\left[(k+1)(1-\mu^2)+a\mu\right]
\\
\geq
&\left[
(2\mu^2-a^2-1)\cdot(1+a\mu)+(2\mu^2(1-\mu^2))
\right]
\left[(k+1)(1-\mu^2)-a\mu\right]
	\end{split}\\
\intertext{$(\Updownarrow)$~Expanding within the second brackets, we get}
\nonumber\begin{split}
&\left[
(2\mu^2-a^2-1)\cdot(1-a\mu)+(2\mu^2(1-\mu^2))
\right]
\left[1-\mu^2+a\mu+k(1-\mu^2)\right]
\\
\geq
&\left[
(2\mu^2-a^2-1)\cdot(1+a\mu)+(2\mu^2(1-\mu^2))
\right]
\left[1-\mu^2-a\mu+k(1-\mu^2)\right]
	\end{split}\\
\intertext{$(\Uparrow)$~Dropping the $[1-\mu^2+a\mu]$ terms because they correspond exactly to an inequality already shown to hold in the sequence of steps to prove \textbf{Condition 2a}, we get}
\nonumber\begin{split}
&\left[
(2\mu^2-a^2-1)\cdot(1-a\mu)+(2\mu^2(1-\mu^2))
\right]
\left[k(1-\mu^2)\right]
\\
\geq
&\left[
(2\mu^2-a^2-1)\cdot(1+a\mu)+(2\mu^2(1-\mu^2))
\right]
\left[k(1-\mu^2)\right]
	\end{split}\\
\intertext{$(\Updownarrow)$~Dropping the multiplicative constant and then the additive constant, both of which are non-negative, we get}
\nonumber\begin{split}
&\left[
(2\mu^2-a^2-1)\cdot(1-a\mu)\right]
\\
\geq
&\left[
(2\mu^2-a^2-1)\cdot(1+a\mu)\right]~~\checkmark~~\text{by Case 2 assumption}~(2\mu^2-a^2-1)\leq0,~\text{and}~0\leq a\mu<1
\end{split}\\
\intertext{This last inequality holds because both sides are non-positive and the left-hand-side has weakly smaller magnitude.\qedhere
\ifMS \Halmos \fi } 
\nonumber
\end{align}
\ifMS \endproof
\else \end{proof} \fi

\begin{lemma}
	\label{lem:critical inequality large a}
	Inequality \eqref{eq:critical inequality}
	holds for all $\mu \in (0, 1)$ and 
	$a \geq 1$.
\end{lemma}

\ifMS
\proof{Proof.}
\else \begin{proof}\fi
	Let $f(a, \mu)$ be the left-hand-side in inequality~\eqref{eq:critical inequality},
	i.e.,
	\begin{align*}
		f(a, \mu)
		&\triangleq [(2\mu^2 - a^2 - 1)(1 - a\mu) + 
	2\mu^2(1-\mu^2)]
	\exp\left(
		\frac{a\mu}{1-\mu^2}
	\right)\\
	&\qquad\qquad\qquad-
	[(2\mu^2 - a^2 - 1)(1 + a\mu) + 
	2\mu^2(1-\mu^2)]
	\exp\left(
		\frac{-a\mu}{1-\mu^2}
	\right)
\end{align*}
Next we show the following inequalities:
for all $\mu \in (0, 1)$ and $a = 1$,

(i) 
$f(a, \mu) > 0$;

(ii)
$f_1(a, \mu) \triangleq (1-\mu^2)\frac{\partial}{\partial a}f(a,\mu) \geq 0$;

(iii)
$f_2(a, \mu) \triangleq (1-\mu^2)\frac{\partial}{\partial a}f_1(a,\mu) \geq 0$;

\noindent and for all $\mu\in (0, 1)$ and $a \geq 1$,

(iv)
$f_3(a, \mu) \triangleq 
\frac{1-\mu^2}{\mu^2}
\exp\left(
	\frac{1}{1-\mu^2}
\right)
\frac{\partial}{\partial a}
f_2(a, \mu)
\geq 0$.


\noindent Combining (i)--(iv) proves the lemma.

\paragraph{Proof of (i).}
By definition, plugging in $a = 1$, $f(a,\mu)$ is 
\begin{align*}
	f(1,\mu) &= 
	-2(1-\mu^2)(1-\mu-\mu^2)
	\exp\left(\frac{\mu}{1-\mu^2}\right)
	+2(1-\mu^2)(1+\mu-\mu^2)
	\exp\left(\frac{-\mu}{1-\mu^2}\right)
\end{align*}
Now consider the Taylor series expansion of
$\exp\left(
	\frac{-\mu}{1-\mu^2}
\right)$ and $
\exp\left(
	\frac{\mu}{1-\mu^2}
\right)$ in $f(1,\mu)$.
We analyze the first term and the remaining terms separately.

\noindent\textsl{The first term of the Taylor series expansion in $f(1, \mu)$} is 
\begin{align*}
	-2(1-\mu^2)(1-\mu-\mu^2)
	\exp\left(\frac{\mu}{1-\mu^2}\right)
	+2(1-\mu^2)(1+\mu-\mu^2)
	\exp\left(\frac{-\mu}{1-\mu^2}\right)
	=4\mu(1-\mu^2)> 0
\end{align*}
for all $\mu\in(0, 1)$.

\noindent\textsl{The k-th even terms and (k+1)-th odd terms
of the Taylor series expansion, for $(k \geq 2)$, in $f(1, \mu)$}
 are 
\begin{align*}
	&
	-2(1-\mu^2)(1-\mu-\mu^2)
\left(
	\frac{1}{k!}\left(\frac{\mu}{1-\mu^2}\right)^k
	+
	\frac{1}{(k+1)!}\left(\frac{\mu}{1-\mu^2}\right)^{k+1}
\right)\\
&
\qquad\qquad\qquad
+
	2(1-\mu^2)(1+\mu-\mu^2)
\left(
	\frac{1}{k!}\left(\frac{-\mu}{1-\mu^2}\right)^k
	+
	\frac{1}{(k+1)!}\left(\frac{-\mu}{1-\mu^2}\right)^{k+1}
\right) \\
=&
\frac{4\mu^{k+1}
	k 
}{(1-\mu^2)^{k-1}(k+1)!}
> 0
\end{align*}
for all $\mu\in(0, 1)$.

\paragraph{Proof of (ii).}
By definition,
\begin{align*}
	f_1(a, \mu) &= 
	\left[
		a^3\mu^2 + \mu^3 + a^2\mu(2 - 3\mu^2)
		- a(2 - 3\mu^2 + 2\mu^4)
	\right]
	\exp\left(
		\frac{a\mu}{1-\mu^2}
	\right)\\
	&
	\qquad\qquad\qquad
	-
	\left[
		a^3\mu^2 - \mu^3 - a^2\mu(2 - 3\mu^2)
		- a(2 - 3\mu^2 + 2\mu^4)
	\right]
	\exp\left(
		\frac{-a\mu}{1-\mu^2}
	\right)
	\intertext{Plugging in $a = 1$ yields}
	f_1(1, \mu) &=
	-\left[
		1 - \mu - 2\mu^2
		+ \mu^3 + \mu^4
	\right]
	\exp\left(
		\frac{\mu}{1-\mu^2}
	\right)\\
	&
	\qquad\qquad\qquad
	+
	\left[
		1 + \mu - 2\mu^2
		- \mu^3 + \mu^4
	\right]
	\exp\left(
		\frac{-\mu}{1-\mu^2}
	\right)
\end{align*}
Now consider the Taylor series expansion of 
$\exp\left(
	\frac{-\mu}{1-\mu^2}
\right)$ and $
\exp\left(
	\frac{\mu}{1-\mu^2}
\right)$ in $f_1(1,\mu)$.
We analyze the first term and the remaining terms separately.

\noindent\textsl{The first term of Taylor series expansion in $f_1(1, \mu)$} is 
\begin{align*}
	&-\left[
	1 - \mu - 2\mu^2
	+ \mu^3 + \mu^4
\right]
+
\left[
	1 + \mu - 2\mu^2
	- \mu^3 + \mu^4
\right]
=2\mu(1-\mu^2) \geq 0
\end{align*}
for all $\mu\in[0, 1)$.

\noindent\textsl{The k-th even terms and (k+1)-th odd terms
of the Taylor series expansion, for $(k \geq 2)$, in $f_1(1, \mu)$}
are
\begin{align*}
	&-\left[
	1 - \mu - 2\mu^2
	+ \mu^3 + \mu^4
\right]
\left(
	\frac{1}{k!}\left(\frac{\mu}{1-\mu^2}\right)^k
	+
	\frac{1}{(k+1)!}\left(\frac{\mu}{1-\mu^2}\right)^{k+1}
\right)\\
&
\qquad\qquad\qquad
+
\left[
	1 + \mu - 2\mu^2
	- \mu^3 + \mu^4
\right]
\left(
	\frac{1}{k!}\left(\frac{-\mu}{1-\mu^2}\right)^k
	+
	\frac{1}{(k+1)!}\left(\frac{-\mu}{1-\mu^2}\right)^{k+1}
\right) \\
=&
\frac{2\mu^{k+1}
	k 
}{(1-\mu^2)^{k-1}(k+1)!}
\geq 0
\end{align*}
for all $\mu\in[0, 1)$.

\paragraph{Proof of (iii).}
By definition,
\begin{align*}
	f_2(a, \mu) &= 
	\left[
		- 2 + 5\mu^2 + a^3\mu^3
		- 4\mu^4 + 2\mu^6 + a^2\mu^2(5 - 6\mu^2)
		+ a\mu(2 - 7\mu^2 + 4\mu^4)
	\right]
	\exp\left(
		\frac{a\mu}{1-\mu^2}
	\right)\\
	&
	\qquad
	-\left[
		- 2 + 5\mu^2 - a^3\mu^3
		- 4\mu^4 + 2\mu^6 + a^2\mu^2(5 - 6\mu^2)
		- a\mu(2 - 7\mu^2 + 4\mu^4)
	\right]
	\exp\left(
		\frac{-a\mu}{1-\mu^2}
	\right)
	\intertext{Plugging in $a = 1$ yields}
	f_2(1, \mu) &=
	-\left[
		1 - \mu - 5\mu^2 
		+3\mu^3 + 5\mu^4
		-2\mu^5 - \mu^6
	\right]
	\exp\left(
		\frac{\mu}{1-\mu^2}
	\right)\\
	&
	\qquad
	+
	\left[
		1 + \mu - 5\mu^2 
		-3\mu^3 + 5\mu^4
		+2\mu^5 - \mu^6
	\right]
	\exp\left(
		\frac{-\mu}{1-\mu^2}
	\right)
\end{align*}
Note that $f_2(1, 0) = 0$ and we show $f_2(1, \mu)$ 
is non-decreasing in $\mu$ below. To see this, 
consider the partial derivative of $f_2(1, \mu)$ with respect to $\mu$,
it is 
\begin{align*}
	\frac{\partial }{\partial \mu}f_2(1, \mu)
	&=
	-\left[
		-11 + 7\mu
		+31 \mu^2 - 22 \mu^3 
		- 24 a^4 
		+ 11 \mu^5 + 6 \mu^6
	\right]
	\exp\left(
		\frac{\mu}{1-\mu^2}
	\right)
	\frac{\mu}{1-\mu^2}\\
	&
	\qquad
	+
	\left[
		-11 - 7 \mu + 31 \mu^2 
		+ 22 \mu^3 - 24 \mu^4 
		- 11 \mu^5 + 6 \mu^6
	\right]
	\exp\left(
		\frac{-\mu}{1-\mu^2}
	\right)\frac{\mu}{1-\mu^2}
\end{align*}
Multiplying $
\frac{\partial }{\partial \mu}f_2(1, \mu)$
by
$
\frac{1-\mu^2}{\mu}\exp\left(\frac{1}{2-2\mu^2}\right)
$, we get
\begin{align}
	\begin{split}
		\label{eq:(iii)}
		&
		-\left[
			-11 + 7\mu
			+31 \mu^2 - 22 \mu^3 
			- 24 a^4 
			+ 11 \mu^5 + 6 \mu^6
		\right]
		\exp\left(
			\frac{1+2\mu}{2-2\mu^2}
		\right)
		\\
		&
		\qquad
		+
		\left[
			-11 - 7 \mu + 31 \mu^2 
			+ 22 \mu^3 - 24 \mu^4 
			- 11 \mu^5 + 6 \mu^6
		\right]
		\exp\left(
			\frac{1-2\mu}{2-2\mu^2}
		\right)
	\end{split}
\end{align}

Now consider the Taylor series expansion of 
$\exp\left(
	\frac{1+2\mu}{2-2\mu^2}
\right)$ and $
\exp\left(
	\frac{1-2\mu}{2-2\mu^2}
\right)$ in \eqref{eq:(iii)}.
We analyze the first two terms and the remaining terms separately.

\noindent\textsl{The first and second terms of the Taylor series expansion in \eqref{eq:(iii)}.} It is 
\begin{align*}
	&-\left[
	-11 + 7\mu
	+31 \mu^2 - 22 \mu^3 
	- 24 \mu^4 
	+ 11 \mu^5 + 6 \mu^6
\right]
\left[
1 + \frac{1 + 2\mu}{2 - 2\mu^2}\right]
\\
&
\qquad
+\left[
	-11 - 7 \mu + 31 \mu^2 
	+ 22 \mu^3 - 24 \mu^4 
	- 11 \mu^5 + 6 \mu^6
\right]
\left[
1 + \frac{1 - 2\mu}{2 - 2\mu^2}\right]
\\
=&\mu+19\mu^3-10\mu^5 \geq 0
\end{align*}
for all $\mu\in[0, 1)$.

\noindent\textsl{The k-th term
of the Taylor series expansion, for $(k \geq 3)$, in \eqref{eq:(iii)}.} 
It is 
\begin{align}
	\begin{split}
		\label{eq:(iii) 2}
		&
		-\left[
			-11 + 7\mu
			+31 \mu^2 - 22 \mu^3 
			- 24 \mu^4 
			+ 11 \mu^5 + 6 \mu^6
		\right]
		\left[
		\frac{1}{k!}\frac{(1 + 2\mu)^k}{(2 - 2\mu^2)^k}\right]
		\\
		&
		\qquad
		+
		\left[
			-11 - 7 \mu + 31 \mu^2 
			+ 22 \mu^3 - 24 \mu^4 
			- 11 \mu^5 + 6 \mu^6
		\right]
		\left[
		\frac{1}{k!}\frac{(1 - 2\mu)^k}{(2 - 2\mu^2)^k}\right]
	\end{split}
\end{align}
Multiplying by $k!(2-2\mu^2)^k$, 
\begin{align*}
	&
	-\left[
		-11 + 7\mu
		+31 \mu^2 - 22 \mu^3 
		- 24 \mu^4 
		+ 11 \mu^5 + 6 \mu^6
	\right]
	(1+2\mu)(1+2\mu)^{k-1}
	\\
	&
	\qquad
	+
	\left[
		-11 - 7 \mu + 31 \mu^2 
		+ 22 \mu^3 - 24 \mu^4 
		- 11 \mu^5 + 6 \mu^6
	\right]
	(1-2\mu)(1-2\mu)^{k-1}
\end{align*}
Note that 
$
-\left[
	-11 + 7\mu
	+31 \mu^2 - 22 \mu^3 
	- 24 \mu^4 
	+ 11 \mu^5 + 6 \mu^6
\right] \geq 0$ 
and 
$1+2\mu \geq \abs{1 - 2\mu}$
for all $\mu\in[0, 1)$.
Thus, to show \eqref{eq:(iii) 2} 
is non-negative for all $\mu\in[0, 1)$,
it is sufficient to argue 
\begin{align*}
	&-\left[
	-11 + 7\mu
	+31 \mu^2 - 22 \mu^3 
	- 24 \mu^4 
	+ 11 \mu^5 + 6 \mu^6
\right]
(1+2\mu)
\\
&
\qquad
\geq 
\abs{
	\left[
		-11 - 7 \mu + 31 \mu^2 
		+ 22 \mu^3 - 24 \mu^4 
		- 11 \mu^5 + 6 \mu^6
	\right]
(1-2\mu)}
\end{align*}
which is true for all $\mu\in [0, 1)$.

\paragraph{Proof of (iv).}
By definition,
\begin{align*}
	f_3(a,\mu) = 
	&\left[
	a^3\mu^2 + a^2\mu(8 - 9\mu^2)
	-\mu(4 - 7\mu^2 + 2\mu^4)
	+a(12 - 29\mu^2 + 16\mu^4)
\right]
\exp\left(
	\frac{1+a\mu}{1-\mu^2}
\right)
\\
&
\qquad-
\left[
	a^3\mu^2 - a^2\mu(8 - 9\mu^2)
	+\mu(4 - 7\mu^2 + 2\mu^4)
	+a(12 - 29\mu^2 + 16\mu^4)
\right]
\exp\left(
	\frac{1-a\mu}{1-\mu^2}
\right)
\end{align*}

Now consider the Taylor series expansion of 
$\exp\left(\frac{1+a\mu}{1-\mu^2}\right)$
and 
$\exp\left(\frac{1-a\mu}{1-\mu^2}\right)$
in 
$f_3(a, \mu)$.
We analyze the first
two terms and the remaining terms separately.

\noindent\textsl{The first and second terms of the Taylor series expansion of $f_3(a, \mu)$}
are 
\begin{align*}
	&\left[
	a^3\mu^2 + a^2\mu(8 - 9\mu^2)
	-\mu(4 - 7\mu^2 + 2\mu^4)
	+a(12 - 29\mu^2 + 16\mu^4)
\right]
\left[
1 + \frac{1 + a\mu}{1 - \mu^2}\right]
\\
&\qquad-
\left[
	a^3\mu^2 - a^2\mu(8 - 9\mu^2)
	+\mu(4 - 7\mu^2 + 2\mu^4)
	+a(12 - 29\mu^2 + 16\mu^4)
\right]
\left[
1 + \frac{1 - a\mu}{1 - \mu^2}\right]\\
=&
\frac{2\mu }{1-\mu^2}
(
-8 + 18\mu^2
+a^4\mu^2 - 11\mu^4
+2\mu^6 + a^2(28 - 55\mu^2 + 25\mu^4)
)
\geq 0
\end{align*}
which is true for all $\mu \in[0,1)$ and $a\geq 1$.

\noindent\textsl{The k-th term
of the Taylor series expansion, for $(k \geq 3)$, of $f_3(a,\mu)$.} 
It is 
\begin{align}
	\begin{split}
		\label{eq:(iv)}
		&\left[
		a^3\mu^2 + a^2\mu(8 - 9\mu^2)
		-\mu(4 - 7\mu^2 + 2\mu^4)
		+a(12 - 29\mu^2 + 16\mu^4)
	\right]
	\left[
	\frac{1}{k!}\frac{(1 + a\mu)^k}{(1 - \mu^2)^k}\right]
	\\
	&\qquad
	-\left[
		a^3\mu^2 - a^2\mu(8 - 9\mu^2)
		+\mu(4 - 7\mu^2 + 2\mu^4)
		+a(12 - 29\mu^2 + 16\mu^4)
	\right]
	\left[
	\frac{1}{k!}\frac{(1 - a\mu)^k}{(1 - \mu^2)^k}\right]
\end{split}
															\end{align}
															Multiplying by $k!(1 - \mu^2)^k$,
															\begin{align*}
																\begin{split}
																	&\left[
																	a^3\mu^2 + a^2\mu(8 - 9\mu^2)
																	-\mu(4 - 7\mu^2 + 2\mu^4)
																	+a(12 - 29\mu^2 + 16\mu^4)
																\right]
																\left(
																1  + a\mu\right)(1+a\mu)^{k-1}  
																\\
																&\qquad
																-\left[
																	a^3\mu^2 - a^2\mu(8 - 9\mu^2)
																	+\mu(4 - 7\mu^2 + 2\mu^4)
																	+a(12 - 29\mu^2 + 16\mu^4)
																\right]
																\left(
																1  - a\mu\right)(1-a\mu)^{k-1}
															\end{split}
														\end{align*}
														Notice that 
														$
														\left[
															a^3\mu^2 + a^2\mu(8 - 9\mu^2)
															-\mu(4 - 7\mu^2 + 2\mu^4)
															+a(12 - 29\mu^2 + 16\mu^4)
														\right]
														\left(
														1  + a\mu\right)\geq 0$
														and 
														$1+a\mu \geq \abs{1 - a\mu}$
														for all $\mu\in[0, 1)$ and $a \geq 1$.
														Thus, to show \eqref{eq:(iv)} 
														is non-negative for all $\mu\in[0, 1)$ and $a\geq 1$,
														it is sufficient to argue 
														\begin{align*}
															\begin{split}
																&\left[
																a^3\mu^2 + a^2\mu(8 - 9\mu^2)
																-\mu(4 - 7\mu^2 + 2\mu^4)
																+a(12 - 29\mu^2 + 16\mu^4)
															\right]
															\left(
															1  + a\mu\right)
															\\
															&\qquad
															\geq 
															\abs{\left[
																a^3\mu^2 - a^2\mu(8 - 9\mu^2)
																+\mu(4 - 7\mu^2 + 2\mu^4)
																+a(12 - 29\mu^2 + 16\mu^4)
															\right]
															\left(
														1  - a\mu\right)}
														\end{split}
													\end{align*}
													which is true for all $\mu\in[0, 1)$ and $a\geq 1$.
\ifMS \Halmos \endproof
\else \end{proof}
\fi

\fi

\section{Bias and Variance in Ridge Regression}

\subsection{One-Dimensional Case}\label{apx:one-dimensional-ridge}
Suppose that features $x_i$ of training points $i$ are 
single-dimensional with labels
$$
f(x_i)=wx_i+e_i,
$$
where $e_i$ are normally distributed with mean zero
and variance $\sigma^2_e.$

In this model, when running ridge regression, the regularization parameter $\lambda^*$ that minimizes the regularized risk
is independent of $x$:
\begin{proposition}\label{prop:one-dimensional-ridge}
The $\lambda^*$ that minimizes the regularized risk minimizes the sum of variance and squared-bias {\em for every feature $x$}. 
\end{proposition}

\begin{proof}
When a 
ridge regression with regularization parameter
$\lambda$ is used
to estimate $w$ from the 
training set of $m$ points, the corresponding estimator is
$$
\widehat{w}=\frac{\sum^m_{i=1}x_i\,f(x_i)}{\sum^m_{i=1}x_i^2+\lambda}
=w -
w\frac{\lambda}{
\sum^m_{i=1}x_i^2+\lambda
}+\sum^m_{i=1}\frac{x_i}{\sum^m_{j=1}x_j^2+\lambda}e_i.
$$

The prediction for label of the new
example with feature $x$
and label $f(x)=w\,x+e_x$ produced from the ridge regression is
$$
\widehat{f}(x)=\widehat{w}x=
w\,x -
w\,x\frac{\lambda}{
\sum^m_{i=1}x_i^2+\lambda
}+\sum^m_{i=1}\frac{x_i\,x}{\sum^m_{j=1}x_j^2+\lambda}e_i.
$$
The error of this prediction is
$$
-
w\,x\frac{\lambda}{
\sum^m_{i=1}x_i^2+\lambda
}+\sum^m_{i=1}\frac{x_i\,x}{\sum^m_{j=1}x_j^2+\lambda}e_i-e_x.
$$
The expectation of this error
is 
$$
-
w\,x\frac{\lambda}{
\sum^m_{i=1}x_i^2+\lambda
}
$$
which represents the bias. The
magnitude of the bias depends
on the value of the feature vector $x$ and the (unknown)
parameter $w$. 

The term
$$
\sum^m_{i=1}\frac{x_i\,x}{\sum^m_{j=1}x_j^2+\lambda}e_i-e_x
$$
has expectation equal to zero.
Its variance is 
$$
x^2\sigma^2_e\frac{\sum^m_{i=1}x_i^2
}{\left(
\sum^m_{i=1}x_i^2
+\lambda
\right)^2}+\sigma_e^2.
$$

For any given $x$, the squared error of $f(x)$ is thus the squared-bias plus the variance, namely
$$w^2x^2\frac{\lambda^2}{
\left(\sum^m_{i=1}x_i^2+\lambda\right)^2
}+x^2\sigma^2_e\frac{\sum^m_{i=1}x_i^2
}{\left(
\sum^m_{i=1}x_i^2
+\lambda
\right)^2}+\sigma_e^2.
$$
Observe that the $\lambda$ for which this is minimized is independent of $x$.
\end{proof}

%

An implication of Proposition~\ref{prop:one-dimensional-ridge} is that, in this setting, assumption A5 from Section~\ref{sec:non-constant-tradeoff-general-framework} holds.
As in Section~\ref{sec:non-constant-tradeoff}, let us suppose that the value of the bias-variance tradeoff is continuous and differentiable, and that this holds for every $x$.
Recall that $g_x(\lambda)$ is the sum of squared-bias and variance at $x$ and $\lambda$, namely
    $$g_x(\lambda) = \mathrm{Bias}_D[\hat f_\lambda(x;D)]^2 + \mathrm{Var}[\hat f_\lambda(x;D)].$$
Assumption A4 from Section~\ref{sec:non-constant-tradeoff-general-framework} states that the expectation of $g_x(\lambda)$, taken over the randomness of $x$, is convex and differentiable. 
For the following proposition, let us assume that these properties hold for every $x$, and not just in expectation:
\begin{itemize}
  \item[A4'] 
     $g_x(\lambda)$ is strictly convex and differentiable for every $x$.
    \end{itemize}

\begin{proposition}\label{prop:one-dimensional-ridge-A5}
Under assumption A4', assumption A5 holds.
\end{proposition}

\begin{proof}
Assumption A5 states that
 $$\frac{dg_x(\lambda^*)}{d\lambda}~~\mbox{and}~~\frac{\partial u_i\left(x;X(\mu_x^2,T) ,X(\mu_x^2,T_x)\right)}{\partial T}$$
 are weakly negatively correlated. We will show that in the one-dimensional ridge regression setup of this section, the two random variables are, in fact,
 uncorrelated. 
 
 By Proposition~\ref{prop:one-dimensional-ridge},
the $\lambda^*$ that minimizes the regularized risk minimizes the sum of variance and squared-bias {\em for every feature $x$}. By assumption A4', this implies that, for every fixed $x$,
$$\frac{dg_x(\lambda^*)}{d\lambda}=0.$$
Thus, the random variable $\frac{dg_x(\lambda^*)}{d\lambda}$, with distribution over $x$, is constant and equal to 0. It is thus uncorrelated with the second random variable 
from assumption A5.
\end{proof}

\subsection{Multidimensional Case}\label{apx:multidimensional-ridge}
Suppose that feature vector
$x_i$ is $p$-dimensional and
$$
f(x_i)=w^Tx_i+e_i,
$$
where $e_i$ are independent
across examples and $E[e_i]=0$ and $E[e_i^2]=
\sigma^2_e.$

We show that, when running ridge regression in this model, an increase in the regularizer $\lambda$ increases the bias
and decreases the variance on the prediction of any $x$:
\begin{proposition}\label{prop:multidimensional-ridge}
In ridge regression, for every vector $x$, the bias $\mathrm{Bias}_D[\hat f_\lambda(x; D)]$ is increasing in $\lambda$ and the variance $\mathrm{Var}[\hat f_\lambda(x;D)]$
is decreasing in $\lambda$.
\end{proposition}

\begin{proof}
Let $X$ be the design matrix
containing the feature vectors 
of $m$ examples in the training
data, $f=(f(x_1),\ldots,
f(x_m))^T$ and $e=(e_1,\ldots,e_m)^T$. Then the estimator
of the ridge regression can
be written as
$$
\widehat{w}=(X^TX+\lambda\,I)^{-1}
X^T\,f=
w+(X^TX+\lambda\,I)^{-1}
X^TX\,w-w+(X^TX+\lambda\,I)^{-1}X^Te.
$$
This representation shows the 
decomposition of the bias and
variance term of the estimator.
Denote 
$$
\beta(\lambda)=(X^TX+\lambda\,I)^{-1}
X^TX\,w-w
$$
and
$$
\Delta(\lambda)=(X^TX+\lambda\,I)^{-1}X^Te.
$$
Then the prediction for the new
example is
$$
\widehat{f}(x)=
f(x)+x^T\beta(\lambda)+x^T\Delta
(\lambda)-e_x.
$$
The mean squared error is
$$
x^T\left(
\beta(\lambda)\beta(\lambda)^T+
E[\Delta
(\lambda)\Delta
(\lambda)^T]
\right)x+\sigma^2_x.
$$
We note that the first element of the mean squared error is a quadratic form. 
It determines the dependence of the 
mean squared error on $\lambda.$
To find the minimum, note that
$$
\beta(\lambda)=
-\lambda(X^TX+\lambda I)^{-1}w.
$$
Finally
$$
\beta(\lambda)\beta(\lambda)^T=
(X^TX+\lambda I)^{-1}\lambda^2
ww^T(X^TX+\lambda I)^{-1}.
$$
Taking the derivative with respect to $\lambda,$ we obtain a positive
semidefinite matrix, meaning that bias is always increasing in 
$\lambda.$
The variance term is then 
$$
E[\Delta
(\lambda)\Delta
(\lambda)^T]=
\sigma^2_e(X^TX+\lambda I)^{-1}
X^TX(X^TX+\lambda I)^{-1}.
$$
The derivative of this matrix
with respect to $\lambda$ is
$$
-2\lambda \sigma^2_e(X^TX+\lambda I)^{-1}
X^TX(X^TX+\lambda I)^{-1}
(X^TX+\lambda I)^{-1},
$$
which is a negative semidefinite
matrix and so the variance
always increases in $\lambda.$
\end{proof}

%
\ifEC
\section{Proof of Theorem~\ref{thm:general-framework-non-constant-tradeoff} } \label{apx:general-framework-non-constant-tradeoff} 
\begin{proof}
By assumption A1, $$\hat f^*(x;D_j)=X(\mu_x^2,T_x),$$
a normal distribution with squared-bias $\mu_x^2$ and total error $T_x=g_x(\lambda^*)$.
Denote by $g(x;\mu^2)$ the total error of $\hat f^*$ when the feature vector is $x$ and the squared-bias is $\mu^2$.
Then, as in Equation (\ref{eqn:derivative}), for each fixed $x$ we can write
\begin{align*}&\frac{du_i\left(x;X(\mu^2,g(x;\mu^2)) ,X(\mu_x^2,T_x)\right)}{d\mu^2}\\
&~~~=\frac{\partial u_i\left(x;X(\mu^2,T) ,X(\mu_x^2,T_x)\right)}{\partial\mu^2} + \frac{dg(x;\mu^2)}{d\mu^2} \cdot \frac{\partial u_i\left(X(\mu^2,T) ,X(\mu_x^2,T_x)\right)}{\partial T},
\end{align*}
where $T=g(x;\mu^2)$.
Thus,
\begin{align*}&\frac{du_i\left(x; \hat f_{\lambda^*}(x; D_i),\hat f^*(x;D_j)\right)}{d\lambda} \\
&~~~=\frac{d\mu^2}{d\lambda}\cdot\frac{\partial u_i\left(x;X(\mu^2,T_x) ,X(\mu_x^2,T_x)\right)}{\partial\mu^2} + \frac{d\mu^2}{d\lambda}\cdot\frac{dg(x;\mu_x^2)}{d\mu^2} \cdot \frac{\partial u_i\left(X(\mu_x^2,T) ,X(\mu_x^2,T_x)\right)}{\partial T}\\
&~~~=\frac{d\mu^2}{d\lambda}\cdot\frac{\partial u_i\left(x;X(\mu^2,T_x) ,X(\mu_x^2,T_x)\right)}{\partial\mu^2} + \frac{dg_x(\lambda^*)}{d\lambda} \cdot \frac{\partial u_i\left(X(\mu_x^2,T) ,X(\mu_x^2,T_x)\right)}{\partial T}.
\end{align*}

By assumption A3,
$$\frac{d\mu^2}{d\lambda}>0,$$
and by assumption A2, 
$$\frac{\partial u_i\left(x;X(\mu^2,T_x) ,X(\mu_x^2,T_x)\right)}{\partial\mu^2} < 0.$$

By assumption A5, 
\begin{align*}
&\E\left[\frac{dg_x(\lambda^*)}{d\lambda} \cdot \frac{\partial u_i\left(X(\mu^2,T) ,X(\mu_x^2,T_x)\right)}{\partial T}\right]\\
&~~~\leq \E\left[\frac{dg_x(\lambda^*)}{d\lambda}\right] \cdot \E\left[\frac{\partial u_i\left(X(\mu^2,T) ,X(\mu_x^2,T_x)\right)}{\partial T}\right].
\end{align*}

Finally, since $\lambda^*$ minimizes the Bayes risk, assumption A4 implies that
$$\E\left[\frac{dg_x(\lambda^*)}{d\lambda}\right] = \frac{d}{d\lambda}\E[g_x(\lambda^*)]=0.$$

Putting these together implies the claimed inequality.
\end{proof}

\fi

\section{Empirical Demonstration of Assumptions}
\label{apx:empirical-validation}
In this section we empirically demonstrate that two of the assumptions used in Theorem~\ref{thm:general-framework-non-constant-tradeoff} for the ex ante game
hold when the game played on the California housing prices data
and the wine quality data 
from Section~\ref{sec:empirical}.

First, we demonstrate that assumption A1 on the normality of prediction error holds.
For the California housing prices data, Figure~\ref{fig:ridge-normal} plots the distribution, over random
choices of the training data, of the error in prediction on a
particular point in the test data for three values of regularization
parameter $\ridge$.  Notice that the distributions appear roughly
normal, and that the predictions of the lower $\ridge$ value have
higher variance and lower bias.\footnote{Similar observations holds for the wine quality data.}
\begin{figure}
\begin{center}
\includegraphics[height=6cm]{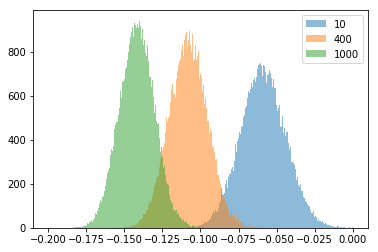}
\end{center}
\caption{Distribution of predictions for three values of
$\ridge$.}
\label{fig:ridge-normal}
\end{figure}

%

Next, we verify that
the negative correlation 
assumption A5 holds.
In particular, 
we estimate $\frac{dg_x(\lambda^*)}{d\lambda}$
by $\frac{\tilde g_x(\lambda^* + \epsilon) - \tilde g_x(\lambda^*)}{\epsilon}$
where $\tilde g_x(\cdot)$
is the empirical total error
and $\epsilon = 0.05$.
Similarly,
we estimate
$\frac{\partial u_i\left(x;X(\mu_x^2,T) ,X(\mu_x^2,T_x)\right)}{\partial T}$
by setting $\mu_x$ and 
$T_x$ as the empirical bias
and empirical total error.
We observe that the assumption A5
is satisfied:
Specifically,
the estimated 
$\E\left[\frac{dg_x(\lambda^*)}{d\lambda} \cdot \frac{\partial u_i\left(X(\mu^2,T) ,X(\mu_x^2,T_x)\right)}{\partial T}\right]$ is -1.485, -31.569, -720.301
and 
$\E\left[\frac{dg_x(\lambda^*)}{d\lambda}\right] \cdot \E\left[\frac{\partial u_i\left(X(\mu^2,T) ,X(\mu_x^2,T_x)\right)}{\partial T}\right]$
is -0.142, -0.187, -70.008
for the California housing prices 
data, the red wine quality data,
and the white wine quality data, 
respectively.

\end{document}